\documentclass{llncs}
\pagestyle{plain}
%\documentclass[11pt,letterpaper]{article}
%\usepackage[margin=1.0in]{geometry}

%%%%%%%%%%%%%%%
%% Use this for changing between conference and extended version of text
%%%%%%%%%%%%%%%
\newif\ifconference
%\conferencetrue
\conferencefalse

\usepackage{cite}
\usepackage{todonotes}
\usepackage{alltt}
\usepackage{amsfonts}
\usepackage{listings}
\usepackage{longtable}
\usepackage{multirow}
\usepackage{pdflscape}
\usepackage{pgf}
\usepackage{pifont}
\usepackage{color}
\usepackage[english,rounding]{rccol}
\usepackage{rotating}
\usepackage{setspace}
\usepackage{tabularx}
\usepackage{tipa}
\usepackage{wrapfig}
\usepackage{xcolor}
\usepackage{xspace}
\usepackage{textcomp}
\usepackage{subcaption}
\captionsetup{compatibility=false}
\usepackage{enumitem}
\usepackage{multicol}
\usepackage{style-bib}

\usepackage{algorithmicx}
\usepackage{algpseudocode}
\usepackage{algorithm}
\usepackage{macros}
\usepackage{ioa_code}
\usepackage{amsmath, amssymb}
\usepackage{graphicx}
\setcitestyle{number}

%%% daniel's
\rcDecimalSignOutput{.} % rccol

%%%%%%%%%%%%
% Maths functions

   % category font

%\newtheorem*{definition}{Definition}
%\newtheorem*{assumption}{Assumption}
%\newtheorem*{mainassumption}{Main Assumption}
%\newtheorem*{corollary*}{Corollary}

%%%
\newcommand{\MSG}{{\scriptsize $\lit{MSG}$}}
\newcommand{\INIT}{{\ms{INIT}}}
\newcommand{\ECHO}{{\ms{ECHO}}}
\newcommand{\READY}{{\ms{READY}}}
\newcommand{\EST}{\ms{EST}}
\newcommand{\AUX}{{\ms{AUX}}}
\newcommand{\DECS}{{\ms{DECS}}}
\newcommand{\COORDVALUE}{{\ms{COORD\_VALUE}}}
\newcommand{\ESTONES}{{\ms{EST\_ONES}}}
\newcommand{\AUXONES}{{\ms{AUX\_ONES}}}
\newcommand{\TEMP}{\ms{TEMP}}
\newcommand{\TAG}{{\ms{TAG}}}

%\crefformat{footnote}{#2\footnotemark[#1]#3}

\title{Anonymity Preserving Byzantine\\ Vector Consensus}
\ifconference
\author{}
\institute{}
\else
%\title{Anonymity Preserving Byzantine\\ Vector Consensus}
\author{Christian Cachin\thanks{University of Bern, \texttt{cachin@inf.unibe.ch}}
	\and Daniel Collins\thanks{University of Sydney, \texttt{dcol9436@uni.sydney.edu.au}}
	\and Tyler Crain\thanks{University of Sydney, \texttt{tycrain@gmail.com}}
	\and Vincent Gramoli\thanks{University of Sydney, \texttt{vincent.gramoli@sydney.edu.au}}
}
\institute{}
\fi

%\pagenumbering{roman}

%\date{}
\begin{document}
%\setstretch{0.9}
%\setstretch{0.82}
\maketitle

\begin{abstract}
	\noindent
	Collecting anonymous opinions finds various applications ranging from simple whistleblowing,  
	releasing secretive information, to complex forms of voting, where participants rank candidates by order of preferences.
	Unfortunately, as far as we know there is no efficient distributed solution to this problem. Previously,
	participants had to trust third parties, run expensive cryptographic protocols or sacrifice anonymity.
	In this paper, we propose a resilient-optimal solution to this problem called AVCP, which tolerates up to a third of Byzantine participants.
	AVCP combines traceable ring signatures to detect double votes with a reduction from vector consensus to binary consensus to ensure all valid votes are taken into account.
	We prove our algorithm correct and show that it preserves anonymity with at most a linear communication overhead and constant message overhead when compared to a recent consensus baseline.
	Finally, we demonstrate empirically that the protocol is practical by deploying it on 100 machines geo-distributed in 3 continents: America, Asia and Europe.
	Anonymous decisions are reached within 10 seconds with a conservative choice of traceable ring signatures.

\end{abstract}

%\newpage
%\setcounter{page}{0}
%\pagenumbering{arabic}

\section{Introduction and related work}

%\input oldintro.tex

%\iffalse
%Blockchain technology promises to distribute decision power over a group of
%participants and is often used to run distributed computations in an open
%environment or to track valuable items in applications accessible to any
%Internet user.  Yet, the so-called \emph{permissionless} blockchains like
%Bitcoin~\cite{Nakamoto2008} or Ethereum~\cite{Wood} suffer from heavy computational effort and from
%external, inefficient governance models, because decisions about the
%protocols themselves are relegated to an external, human level.  Multiple
%recent forks in those systems demonstrate this.
%The alternative model of a \emph{consortium blockchain} relies on a set of
%predetermined decision makers. They operate much faster~\cite{CGLR18}
%than permissionless protocols and do not rely on costly computations for 
%decision making. Moreover, reconfiguration~\cite{Vizier2018} allows for
%flexibility in the set of decision makers over time.

%This paper addresses the problem of voting and governance in permissioned blockchains such that the opinion of every participant is accounted for.
%In accordance with established election protocols, voting procedures, and other decision-making rules \cite{Cramer1997, Kiayias2002, Adida2008}, we demand a level of anonymity to protect the voters.
%\fi

%%% [V:
Consider a distributed survey where a group of mutually distrusting participants wish to exchange their opinions about some issue.
%Distributing surveys requires a typically large group of mutually distrusting participants to exchange their opinions.
%%% ]
%Their network of communication may be subject to arbitrary message delays---that is, it may be asynchronous---and some participants may act arbitrarily faulty (or be \textit{Byzantine} faulty).
For example, participants may wish to communicate over the Internet to
rank candidates in order of preference to change the governance of a blockchain.
%~\cite{Comchain}, 
%In this scenario, one cannot reasonably predict the time it will take for them to communicate, as in asynchronous networks.
%%% [V:
%One way to achieve this involves homomorphic tallying~\cite{Cramer1996, Adida2008}, where votes are encoded as values that are summed in encrypted form prior to decryption.
%%% ]
%Importantly, using vector consensus in an electronic election allows the contents of a proposal (vote) to be arbitrary.
%%% [V
%However, if the set of choices or candidates in the election is sufficiently large, such as a population, the cryptographic overhead of homomorphic tallying can be impractically high.
%Indeed, votes for a particular choice or choices must be meaningfully encoded mathematically.
%%% ]
%%% [V:
Without making additional trust assumptions~\cite{Cramer1996, Adida2008, Juels2010}, one promising approach is to run a Byzantine consensus algorithm~\cite{Lamport1982}, or more generally a vector consensus algorithm~\cite{Doudou1997, Neves2005, Correia2006} to allow for arbitrary votes.
In vector consensus, a set of participants decide on a common vector of values, each value being proposed by one process.
Unlike interactive consistency~\cite{Lamport1982}, a protocol solving vector consensus can be executed without fully synchronous communication channels, and as such is preferable for use over the Internet.
%When communicating over networks, like the Internet, that are subject to unpredictable delays, it is paramount not to assume synchrony.
%To this end, the Internet is
%%% ]
%At least $n - t$ values are decided by each non-faulty process; the classical interactive cons
%%% [V:
%Although vector consensus can fulfil the requisite assumptions on the network and fault-tolerance, processes are trivially tied to the opinion that they propose.
%When this is problematic---in electronic voting and when whistle-blowing, for example---it is desirable to allow processes to anonymously propose their value.
Unfortunately, vector consensus protocols tie each participant's opinion to its identity to ensure one opinion is not overrepresented in the decision and to avoid double voting.
%The problem then becomes the lack of anonymity as consensus inherently ties the identity of a participant to its opinion, hence disincentivising whistleblowing or controversial opinions being expressed.
There is thus an inherent difficulty in solving vector consensus while preserving anonymity.
%%% ]
%%% [V:
%Motivated by this, we define the \textit{anonymised vector consensus} problem, enriching vector consensus by asserting that all non-faulty processes cannot be associated with their proposal by an adversary via any better strategy than randomly guessing their identity.

\sloppy{In this paper, we introduce the \textit{anonymity-preserving vector consensus} problem that prevents an adversary from discovering the identity of non-faulty participants that propose values in the consensus and we introduce a solution called Anonymised Vector Consensus Protocol (AVCP).}
%%% ]
%%% [V:
%Our primary contribution, Anonymised Vector Consensus Protocol (AVCP), is a reduction to binary consensus, which reduces the influence of any singular process, such as a leader in an protocol like PBFT~\cite{Castro1999}, who could selectively include or remove proposals at its whim.
To prevent the leader in some Byzantine consensus algorithms~\cite{Castro1999} from influencing the outcome of the vote by discarding proposals, AVCP reduces the problem to binary consensus that is executed without the need for a traditional leader~\cite{CGLR18}.

We provide a mechanism to prevent Byzantine processes from double voting whilst also decoupling their ballots from their identity. 
In particular, we adopt \textit{traceable ring signatures}~\cite{Fujisaki2007, gu2019efficient}, which enable participants to anonymously prove their membership in a set of participants, exposing a signer's identity if and only if they sign two different votes.
This can disincentivise participants from proposing multiple votes to the consensus.
%%% ]
Alternatively, we could use linkable ring signatures~\cite{Liu2004}, but would not ensure that Byzantine processes are held accountable when double-signing.
%which ensure that different signatures signed by one signer can be associated together without exposing their identity.
%They can be more efficient in size~\cite{Tsang2005a} and computational effort~\cite{Liu2005}, but cannot ensure that Byzantine processes are held accountable when double-signing.
We could also have used blind signatures \cite{Chaum1983, Camp1996}, but this would have required an additional trusted authority.

%%% [V:
%Process must propagate their signatures anonymously, else their identity would be trivially exposed.
%To do this, we could take advantage of in-house protocols.
%However, these protocols generally either assume the existence of additional trusted parties~\cite{Chaum1988, Jakobsson1998, Golle2004} or require $O(n)$ message steps to perform all-to-all broadcast~\cite{Golle2004a}.
%It appears impossible construct such a protocol without some synchrony, and appears non-trivial to withstand Byzantine failures with a small (constant) number of message steps without additional trust.
%Thus, our solution assumes that anonymous communication channels can be instantiated.
%In practice, one can use publicly-deployed \cite{Dingledine2004, Zantout2011} networks, which generally requires sustained network observation or large amounts of resources~\cite{Gilad2012, Murdoch2005} to de-anonymise with high probability.
We also identify interesting conditions to ensure anonymous communication.
Importantly, participants must propagate their signatures anonymously to hide their identity.
To this end, we could construct anonymous channels directly.
However, these protocols require additional trusted parties, are not robust or require $O(n)$ message delays to terminate~\cite{Chaum1988, Jakobsson1998, Golle2004, Golle2004a}.
%It appears impossible construct such a protocol without some synchrony, and appears non-trivial to withstand Byzantine failures with a small (constant) number of message steps without additional trust.
%Thus, our solution assumes that anonymous communication channels can be instantiated.
Thus, we assume access to anonymous channels, such as through publicly-deployed \cite{Dingledine2004, Zantout2011} networks which often require sustained network observation or large amounts of resources to de-anonymise with high probability~\cite{Gilad2012, Murdoch2005}.
%%% ]
%
%%% [V:
%Particularly because our protocols rely on both anonymous and regular channels, latency~\cite{Back2001} and message transmission ordering~\cite{Raymond2001} can diminish the anonymity provided by the anonymous channels themselves.
%For example, if a process is much slower than all other processes, an adversary may associate their identity with the anonymous message that they deliver last.
%To cope, we assume non-faulty processes anonymously broadcast after all non-faulty processes have terminated in any previous instance of AVCP, and that the timing and order of messages exchanged over anonymous channels are statistically independent from that of those exchanged over regular channels.
%These measures ensure that, in the view of the adversary, each message received by a process that they corrupt could have been sent by any non-faulty process.
Anonymity is ensured then as processes do not reveal their identity via ring signatures and communicate over anonymous channels.
When correlation-based attacks on certain anonymous networks may be viable~\cite{Mathewson2004} and for efficiency's sake, processes may anonymously broadcast their proposal and then continue the protocol execution over regular channels.
%For efficiency and to reduce the impact of correlating anonymous message passing with regular message passing in practice~\cite{Mathewson2004}, processes only anonymously broadcast once in a given protocol execution.
However, anonymous channels alone cannot ensure anonymity when combined with regular channels as an adversary could infer identities through latency~\cite{Back2001} and message transmission ordering~\cite{Raymond2001}.
For example, an adversary may relate late message arrivals from a single process over both channels to deduce the identity of a slow participant.
This is why, to ensure anonymity, the timing and order of messages exchanged over anonymous channels should be statistically independent (or computationally indistinguishable with a computational adversary) from that of those exchanged over regular channels.
In practice, one may attempt to ensure that there is a low correlation by ensuring that message delays over anonymous and regular channels are sufficiently randomised.
We construct our solution iteratively first by defining the \textit{anonymity-preserving all-to-all reliable problem} that may be of independent interest.
Here, anonymity and comparable properties to reliable broadcast~\cite{Bracha1985} with $n$ broadcasters are ensured.
%Our solution called
%Anonymised All-to-all Reliable Broadcast Protocol (AARBP) terminates after two regular and one anonymous message delay and has identical message complexity to $n$ instances of Bracha's seminal reliable broadcast protocol~\cite{Bracha1987}.
By construction a solution to this problem with Bracha's reliable broadcast~\cite{Bracha1987}, AVCP can terminate after three regular and one anonymous message delay.
With this approach, our experimental results are promising---with 100 geo-distributed nodes, AVCP generally terminates in less than ten seconds.
We remark that, to ensure confidentiality of proposals until after termination, threshold encryption~\cite{Desmedt1992, Shoup2002}, in which a set threshold of participants must cooperate to decrypt any message, can be used at the cost of an additional message delay. \\

\noindent \textbf{Related constructions.}
We consider techniques without additional trusted parties.
Homomorphic tallying~\cite{Cramer1996} encodes opinions as values that are summed in encrypted form prior to decryption.
Such schemes that rely on a public bulletin board for posting votes~\cite{Groth2004} could use Byzantine consensus~\cite{Lamport1982} and make timing assumptions on vote submission to perform an election.
Unfortunately, homomorphic tallying is impractical when the pool of candidates is large and impossible when arbitrary.
Using a multiplicative homomorphic scheme~\cite{elgamal1985public}, decryption work is exponential in the amount of candidates, and additive homomorphic encryption like Paillier's scheme~\cite{paillier1999public} incur large (RSA-size) parameters and more costly operations.
Fully homomorphic encryption~\cite{gentry2009fully, brakerski2014efficient} is suitable in theory, but at present is untenable for computing complex circuits efficiently.
Self-tallying schemes~\cite{Groth2004}, which use homomorphic tallying, are not appropriate as at some point all participants must be correct, which is untenable with arbitrary (Byzantine) behaviour.
Constructions involving mix-nets~\cite{Chaum1981} allow for arbitrary ballot structure.
However, decryption mix-nets are not a priori robust to a single Byzantine or crash failure~\cite{Chaum1981}, and re-encryption mix-nets which use proofs of shuffle are generally slow in tallying~\cite{Adida2008, Kulyk2014}, requiring hours to tally in larger elections since $O(t)$ processes need to perform proofs of shuffle~\cite{neff2001verifiable} in sequence.
DC-nets and subsequent variations~\cite{Chaum1988, Golle2004a} are not sufficiently robust and generally require $O(n)$ message delays for an all-to-all broadcast.
On multi-party computation, general techniques like Yao's garbled-circuits~\cite{yao1986generate} incur untenable overhead given enough complexity in the structure of ballots.
Private-set intersection~\cite{freedman2004efficient, ye2008distributed} can be efficient for elections that require unanimous agreement, but do not generalise arbitrarily.
\\

\noindent \textbf{Roadmap.}
The paper is structured as follows.
Section~\ref{sec:model:model} provides preliminary definitions and specifies the model.
Section~\ref{sec:arb:arb} presents protocols and definitions required for our consensus protocol.
Section \ref{sec:avc:avc} presents our anonymity-preserving vector consensus solution.
\ifconference
\else We consider the case where regular and anonymous message channels are combined in Section~\ref{sec:anon}.
\fi
Section \ref{sec:exp} benchmarks AVCP on up to 100 geo-distributed located on three continents.
%For simplicity, we present AARBP and AVCP in Sections~\ref{sec:arb:arb} and~\ref{sec:avc:avc} without considering anonymity in the presence of timing attacks, which we consider in Section~\ref{sec:anon}.
%Section~\ref{sec:rw} describes the related work.
\ifconference
%Section \ref{sec:anon} describes the assumptions needed to ensure anonymity in the presence of timing attacks.
To conclude, Section~\ref{sec:conclusion} discusses the use of anonymous and regular channels in practice and future work. %and proposes directions for future research.
The full paper published on arXiv~\cite{avcpaper} provides constructions for all-to-all anonymity-preserving reliable broadcast and an election scheme, correctness proofs for our algorithms and additional experimental results.
\else
Section~\ref{sec:conclusion} concludes. %and proposes directions for future research.
\fi
\ifconference
\else
Appendix~\ref{sec:avcproof} proves the correctness of AVCP.
Appendix~\ref{sec:bvbincons} describes handling the termination of binary consensus instances used in AVCP.
Appendix~\ref{sec:arbproof} details and proves correct AARBP, presented using a single anonymous broadcast per process.
%, \ref{sec:bvbincons} and \ref{sec:avcproof} contain the correctness proof of AARBP, the binary consensus algorithm AVCP relies on~\cite{CGLR18}, and the correctness proof of AVCP, respectively.
Appendix~\ref{sec:evoting:abe} combines AVCP and threshold encryption, forming a voting scheme which is then analysed. 
% and characterises the resulting voting scheme.
Appendix~\ref{sec:bench:bench} evaluates our other distributed protocols and necessary cryptographic schemes.
\fi
%provides empirical evaluations of our protocols on up to 100 nodes distributed on three continents. %benchmarks and an evaluation of the practical feasibility of our solutions.

\section{Model} \label{sec:model:model}
% if using randomization, need to talk about trusted setup / static corruption of t processes
% without randomization - actively corruptive adv?
We assume the existence of a set of processes $P = \{p_1, \dots, p_n\}$ (where $|P| = n$, and the $i$th process is $p_i$), an adversary $A$ who corrupts $t < \frac{n}{3}$ processes in $P$, and a trusted dealer $D$ who generates the initial state of each process.
%Formally, we model each such party as a deterministic Turing machine.
For simplicity of exposition, we assume that cryptographic primitives are unbreakable. %authenticated model cite?
With concrete primitives that provide computational guarantees, each party could be modelled as being able to execute a number of instructions each message step bounded by a polynomial in a security parameter~$k$~\cite{Cachin2005}, in which case the transformation of our proofs
\ifconference
\cite{avcpaper}
\else \fi
is straight forward given the hardness assumptions required by the underlying cryptographic schemes. \newline % cite

%\textbf{Negligible functions:}
%A function $\epsilon(k)$ is called negligible if, for all $c > 0$, there exists $k_0$ s.t. $\epsilon(k) < \frac{1}{k^c}$ for all $k > k_0$.
%A computational problem is infeasible if any probabilistic polynomial-time algorithm solves it with negligible probability.
%Similarly, we say a property holds with overwhelming probability if it holds with probability $1 - \epsilon(k)$.

\noindent \textbf{Network:}
We assume that $P$ consists of asynchronous, sequential processes that communicate over reliable, point-to-point channels in an asynchronous network.
An \textit{asynchronous} process is one that executes instructions at its own pace.
%We note that sequential processes may multiplex instruction execution.
An \textit{asynchronous} network is one where message delays are unbounded.
A \textit{reliable} network is such that any message sent will eventually be delivered by the intended recipient.
We assume that processes can also communicate using reliable one-way \textit{anonymous} channels, which we soon describe.

Each process is equipped with the primitive ``$\lit{send}$ $M$ to $p_j$'', which sends the message $M$ (possibly a tuple) to process $p_j \in P$.
For simplicity, we assume that $p_j$ can send a message to itself.
A process receives a message $M$ by invoking the primitive ``$\lit{receive}$ $m$''.
Each process may invoke ``$\lit{broadcast}$ $M$'', which is short-hand for ``\textbf{for each} $p_i \in P$ \textbf{do} $\lit{send}$ $M$ to $p_i$ \textbf{end for}''.
Analogously, processes may invoke ``$\lit{anon\_send}$ $M$ to $p_j$'' and ``$\lit{anon\_broadcast}$ $M$'' over anonymous channels, which we characterise below.
%A process receives a message $m$ from a call to ``$\lit{anon\_send}$'' by invoking the primitive ``$\lit{anon\_receive}$ $m$''.

Since reaching consensus deterministically is impossible in failure-prone asynchronous message-passing systems~\cite{Fischer1985}, we assume that \textit{partial synchrony} holds among processes in $P$ in Section~\ref{sec:avc:avc}.
That is, we assume there exists a point in time in protocol execution, the global stabilisation time (GST), unknown to all processes, after which the speed of each process and all message transfer delays are upper bounded by a finite constant~\cite{Dwork1988}. \newline %unknown point in time?
%To mitigate de-anonymisation attacks based on observing message timing and order, we make an assumption on the independence of anonymous and regular message passing.
%The assumption and its motivation is described in Section~\ref{sec:anon}.

\noindent \textbf{Adversary:}
We assume that the adversary $A$ schedules message delivery over the regular channels, restricted to the assumptions of our model (such as partial synchrony).
For each $\lit{send}$ call made, $A$ determines when the corresponding $\lit{receive}$ call is invoked.
A portion of processes--- exactly $t < \frac{n}{3}$ members of $P$---are initially corrupted by $A$ and may exhibit Byzantine faults \cite{Lamport1982} over the lifetime of a protocol's execution.
That is, they may deviate from the predefined protocol in an arbitrary way.
We assume $A$ can see all computations and messages sent and received by corrupted processes.
%We assume processes can only be corrupted by the adversary.
A \textit{non-faulty} process is one that is not corrupted by $A$ and therefore follows the prescribed protocol.
$A$ can only observe $\lit{anon\_send}$ and $\lit{anon\_receive}$ calls made by corrupted processes.
$A$ cannot see the (local) computations that non-faulty processes perform.
We do not restrict the amount or speed of computation that $A$ can perform. \newline

\iffalse
\noindent \textbf{Anonymity assumption:}
Consider the following scenario.
Let $D \notin P$ be an adversary, modelled as a deterministic Turing machine.
Suppose that $p_i \in P$ is non-faulty and invokes ``$\lit{anon\_send}$ $m$ to $D$'', and $D$ invokes $\lit{anon\_receive}$ with respect to $m$.
$D$ is allowed to output a single guess, $g \in \{1, \dots, n\}$ as to the identity of $p_i$.
$D$ is allowed access to the message $m$, but not any of its message history or previous computations, and may not communicate further with other parties.
$D$ is then allowed to perform an arbitrary number of computations and make oracle calls as described below before outputting its guess.
Then $Pr(i = g) = \frac{1}{n - t}$.

As $A$ corrupts $t$ processes, the anonymity set~\cite{Danezis2008}---the set of processes by which the invoker of $\lit{anon\_send}$ is indistinguishable from---comprises $n - t$ non-faulty processes.
We implicitly assume that non-faulty processes do not reveal additional information about their identity based on $m$'s contents. \newline
\fi
\noindent \textbf{Anonymity assumption:}
Consider the following experiment.
Suppose that $p_i \in P$ is non-faulty and invokes ``$\lit{anon\_send}$ $m$ to $p_j$'', where $p_j \in P$ and is possibly corrupted, and $p_j$ invokes $\lit{anon\_receive}$ with respect to $m$.
No process can directly invoke $\lit{send}$ or invoke $\lit{receive}$ in response to a $\lit{send}$ call at any time.
$p_j$ is allowed to use $\lit{anon\_send}$ to message corrupted processes if it is corrupted, and can invoke $\lit{anon\_recv}$ with respect to messages sent by corrupted processes.
Each process is unable to make oracle calls (described below), but is allowed to perform an arbitrary number of local computations.
$p_j$ then outputs a single guess, $g \in \{1, \dots, n\}$ as to the identity of $p_i$.
Then for any run of the experiment, $Pr(i = g) \leq \frac{1}{n - t}$.

As $A$ can corrupt $t$ processes, the anonymity set~\cite{Danezis2008}, i.e. the set of processes $p_i$ is indistinguishable from, comprises $n - t$ non-faulty processes.
Our definition captures the anonymity of the anonymous channels, but does not consider the effects of regular message passing and timing on anonymity.
As such, we can use techniques to establish anonymous channels in practice with varying levels of anonymity with these factors considered.
\noindent \textbf{Traceable ring signatures:}
Informally, a ring signature \cite{Rivest2001, Fujisaki2007} proves that a signer has knowledge of a private key corresponding to one of many public keys of their choice without revealing the corresponding public key.
%Variant schemes with weakened anonymity levels \cite{Liu2004, Fujisaki2007} allows two messages signed by the same process to be linked or traced together.
%While linking does not reveal the signer's identity, tracing does.
%Hereafter, we consider \textit{traceable ring signatures} (or \textit{TRSs}), which are ring signatures that provide a traceability guarantee.
Hereafter, we consider \textit{traceable ring signatures} (or \textit{TRSs}), which are ring signatures that expose the identity of a signer who signs two different messages.
To increase flexibility, we can consider traceability with respect to a particular string called an \textit{issue}~\cite{Tsang2005}, allowing signers to maintain anonymity if they sign multiple times, provided they do so each time with respect to a different issue.
We now present relevant definitions of the ring signatures which are analogous to those of Fujisaki and Suzuki~\cite{Fujisaki2007}.
Let $\ms{ID} \in \{0, 1\}^{*}$, which we denote as a \textit{tag}.
We assume that all processes may query an idealised distributed oracle, which implements the following \textit{four} operations:
\begin{enumerate}
%[noitemsep,topsep=5pt,parsep=2pt,partopsep=5pt,itemindent=1em,listparindent=1em,leftmargin=1em]
\item $\sigma \leftarrow \lit{Sign}(i, \ms{ID}, m)$, which takes the integer $i \in \{1, \dots, n\}$, tag $\ms{ID} \in \{0, 1\}^{*}$ and message $m \in \{0, 1\}^{*}$, and outputs the signature $\sigma \in \{0, 1\}^{*}$. We restrict $\lit{Sign}$ such that only process $p_i \in P$ may invoke $\lit{Sign}$ with first argument~$i$.
  %CC: you could also make the first argument only "i" not "p_i"
		%Non-faulty processes can only query $\lit{Sign}$ with respect to their secret key $sk_i$.
\item $b \leftarrow \lit{VerifySig}(\ms{ID}, m, \sigma)$, which takes the tag $\ms{ID}$, message $m \in \{0, 1\}^{*}$, and signature $\sigma \in \{0, 1\}^{*}$, and outputs a bit $b \in \{0, 1\}$. All parties may query $\lit{VerifySig}$.
\item $out \leftarrow \lit{Trace}(\ms{ID}, m, \sigma, m', \sigma')$, which takes the tag $\ms{ID} \in \{0, 1\}^{*}$, messages $m, m' \in \{0, 1\}^{*}$ and signatures $\sigma, \sigma' \in \{0, 1\}^{*}$, and outputs $out \in \{0, 1\}^{*} \cup \{1, \dots, n\}$ (possibly corresponding to a process $p_i$). All parties may query $\lit{Trace}$.
	\item $x \leftarrow \lit{FindIndex}(\ms{ID}, m, \sigma)$ takes a tag $\ms{ID} \in \{0, 1\}^{*}$, a message $m \in \{0, 1\}^{*}$, and a signature $\sigma \in \{0, 1\}^{*}$, and outputs a value $x \in \{1, \dots, n\}$.
	$\lit{FindIndex}$ may not be called by any party, and exists only for protocol definitions.
	%	\item Trace(), which, given two signatures, outputs ``traced'' and the double signer's public key if they have signed two different messages with respect to the same issue, ``linked'' if the signer has double-signed with respect to the same message \textit{and} issue, and ``independent'' otherwise.
\end{enumerate}

\ifconference
The distributed oracle satisfies the following relations:
	\begin{itemize}
		\item $\lit{VerifySig}(\ms{ID}, m, \sigma) = 1$ $\Longleftrightarrow$ $\exists \ p_i \in P$ which invoked $\sigma \leftarrow \lit{Sign}(i, \ms{ID}, m)$.
		\item $\lit{Trace}$ is as below $\Longleftrightarrow$ $\sigma \leftarrow \lit{Sign}(i, \ms{ID}, m)$ and $\sigma' \leftarrow \lit{Sign}(i', \ms{ID}, m')$ where:
		\[\lit{Trace}(\ms{ID}, m, \sigma, m', \sigma')=
		\begin{cases}
			\text{``indep''} & \text{ if } i \neq i', \\
			\text{``linked''} & \text{ else if } m = m', \\
			i & \text{ otherwise } (i = i' \wedge m \neq m')\text{.}
		\end{cases}
		\]
	\item If adversary $D$ is given an arbitrary set of signatures $S$ and must identify the signer $p_i$ of a signature $\sigma \in S$ by guessing $i$,  $Pr(i = g) \leq \frac{1}{n - t}$ for any $D$.
	\end{itemize}
\else

We describe the behaviour of the idealised distributed oracle:
\begin{itemize}
%[noitemsep,topsep=5pt,parsep=2pt,partopsep=5pt,itemindent=1em,listparindent=1em,leftmargin=1em]
	\item \textbf{Signature correctness and unforgeability:} $\lit{VerifySig}(\ms{ID}, m, \sigma) = 1$ $\Longleftrightarrow$ there exists some process $p_i \in P$ that previously invoked $\lit{Sign}(i, \ms{ID}, m)$ and obtained $\sigma$ as a response. Unforgeability is captured by the ``$\Rightarrow$'' claim.
	\item \textbf{Traceability and entrapment-freeness:} The function $\lit{Trace}$ behaves as defined below $\Longleftrightarrow$ $\sigma \leftarrow \lit{Sign}(i, \ms{ID}, m)$ and $\sigma' \leftarrow \lit{Sign}(i', \ms{ID}, m')$:
		\[\lit{Trace}(\ms{ID}, m, \sigma, m', \sigma')=
		\begin{cases}
			\text{``indep''} & \text{ if } i \neq i', \\
			\text{``linked''} & \text{ else if } m = m', \\
			i & \text{ otherwise } (i = i' \wedge m \neq m')\text{.}
		\end{cases}
		\]
		Traceability is captured by the ``$\Leftarrow$'' claim: if $m = m'$, then the messages are linked, otherwise, the identity of the signing process $p_i = p_{i'}$ is exposed. Entrapment-freeness is captured by the ``$\Rightarrow$'' claim: loosely, processes cannot be falsely accused of double-signing.
%	\item \textbf{Signature anonymity (old):} Suppose $\sigma \leftarrow \lit{Sign}(i, \ms{ID}, m)$, where $p_i \in P$ is non-faulty. Then, it is impossible for the adversary to determine the value of $i$ with probability greater than $\frac{1}{n - t}$.
	\item \textbf{Signature anonymity:} Consider the following scenario.
		Let $D$ be an adversary.%, modelled as a deterministic Turing machine.
		Let $i \in \{1, \dots, n\}$ be the identifier of a non-faulty process.
		Suppose that $D$ is given a set of signatures of arbitrary length $S = \{\sigma_1, \dots\}$ such that, for each pair $(\sigma_x, \sigma_y) =  (\lit{Sign}(i, \ms{ID}_x, m_x), \lit{Sign}(i, \ms{ID}_y, m_y))$, either $\ms{ID}_x = \ms{ID}_y$ and $m_x = m_y$ holds or $\ms{ID}_x \neq \ms{ID}_y$ holds.
		$D$ is allowed access to the set $S$, but not any of its previous computations, and may not communicate with other parties.
		Then, suppose that $D$ is required to output a value $g \in \{1, \dots, n\}$, corresponding to its guess of the value of $i$, after performing an arbitrary number of computations and oracle calls subject to the restrictions described above.
		Then $Pr(i = g) = \frac{1}{n - t}$.
		Signature anonymity ensures that a process that signs with respect to a single message $m$ per tag $\ms{ID}$ maintains anonymity.
%	\item \textbf{Traceability:} $\forall k, n \in \mathbb{N}$, $i, i' \in \{1..n\}$, $issue, m, m' \in \{0, 1\}^{*}$, if $\sigma \leftarrow \lit{Sign}_{sk_i}(L, m)$ and $\sigma \leftarrow \lit{Sign}_{sk_i'}(L, m')$, then with overwhelming probability:
%		\[\lit{Trace}(L, m, \sigma, m', \sigma')=
%		\begin{cases}
%			\text{``indep''} & \text{ if } i \neq i', \\
%			\text{``linked''} & \text{ else if } m = m', \\
%			pk_i & \text{ otherwise } (m \neq m')\text{.}
%		\end{cases}
%		\]
%		Given a process signs two messages $m, m'$, if $m = m'$, then the messages are linked; otherwise, their identity is exposed. This definition characterises the correctness of traceability, but does not ensure signatures cannot be forged or that processes cannot be falsely accused of double-signing. Definitions of properties that ensure this, namely tag-linkability, anonymity and exculpability, can be found in \cite{Fujisaki2007}. We assume that these properties hold.
%	\item \textbf{Index deduction:} Fix $P$ and the corresponding keying material $(pk_i, sk_i)_{i = \{1..n\}}$. Then, $\forall k \in \mathbb{N}$, $n \in \mathbb{N}$, $i \in \{1..n\}$, $\ms{issue} \in \{0, 1\}^{*}$, $m \in \{0, 1\}^{*}$, if $\sigma \leftarrow \lit{Sign}_{sk_i}(L, m)$, then with overwhelming probability $\lit{FindIndex}(sk_1, \cdots, sk_n, L, m, \sigma) = i$. This property ensures that $\lit{FindIndex}$ can always deduce the index of the signer provided that the signature is well-formed. Other behaviour is undefined; for example, when $\lit{VerifySig}(L, m, \sigma) = 0$.

\end{itemize}

With respect to $\lit{FindIndex}(\ms{ID}, m, \sigma)$, $i \in \{1, \dots, n\}$ is outputted if and only if $\sigma$ is the result of process $p_i$ having previously queried $\lit{Sign}(i, \ms{ID}, m)$.
The unforgeability property implies \textit{signature uniqueness}: If calls $\sigma \leftarrow \lit{Sign}(i, \ms{ID}, m)$ and $\sigma' \leftarrow \lit{Sign}(j, \ms{ID}, m')$ are made, then $\sigma \neq \sigma'$ holds unless $i = j$ and $m = m'$.
\fi

The concrete scheme proposed by Fujisaki and Suzuki~\cite{Fujisaki2007} computationally satisfies these properties in the random oracle model~\cite{Bellare1993} provided the Decisional Diffie-Hellman problem is intractable.
It has signatures of size $O(kn)$, where $k$ is the security parameter.
%\footnote{Alternatively, the scheme by Gu et al. boasts $O(k)$-sized signatures~\cite{gu2019efficient}, but requires a more elaborate trusted setup and interaction for tracing.}
To simplify the presentation, we assume that its properties hold unconditionally in the following.
%In implementation, Fujisaki and Suzuki's scheme~\cite{Fujisaki2007} requires $\ms{ID} = \ms{issue} \mid \mid pk_1 \mid \mid \cdots \mid \mid pk_n$, where $pk_i$ denotes the $i$th process' public key, and $\ms{issue}$ is a string.

%We remark that tag-linkability and exculplability imply standard signature unforgeability, as detailed in \cite{Fujisaki2007}. 

%We emphasise that if a Trace() operation outputs ``linked'', the identity of the signer is not revealed.
%Then, the following security properties, presented informally for readability, are fulfilled by the scheme:
%
%\begin{itemize}
%	\item \textbf{TRS-Public-traceability}: Two messages signed by with the same private key with respect to the same tag can be traced in a pairwise fashion.
%	\item \textbf{TRS-Tag-linkability}: Two signatures generated by the same signer are linked. As such, given a ring of size $n$, at most $n$ messages that are not linked can be produced for a particular tag.
%	\item \textbf{TRS-Anonymity}: Given the signer does not sign two different messages with respect to the same tag, their identity is indistinguishable from others in the ring. Further, it is infeasible to determine if two signatures generated with respect to distinct tags are generated by the same signer.
%	\item \textbf{TRS-Exculpability}: A honest ring member cannot be accused of signing twice with respect to the same tag.
%\end{itemize}

%\section{Anonymised all-to-all reliable broadcast} \label{sec:arb:arb}
\section{Communication primitives} \label{sec:arb:arb}
\ifconference
\else
In this section, we detail communication primitives that are invoked in our consensus algorithm presented in Section~\ref{sec:avc:avc}.
In particular, we describe and present a mechanism for processes to replace communication calls over regular channels to ones over anonymous channels, we present the binary consensus problem, and define the properties of anonymity-preserving all-to-all reliable broadcast, an extension of reliable broadcast, initially described by Bracha~\cite{Bracha1987}.
\fi
\subsection{Traceable broadcast}
Suppose a process $p$ wishes to anonymously message a given set of processes $P$.
By invoking anonymous communication channels, $p$ can achieve this, but $P \setminus \{p\}$ is unable to verify that $p$ resides in $P$, and so cannot meaningfully participate in protocol execution.
By using (traceable) ring signatures, $p$ can verify its membership in $P$ over anonymous channels without revealing its identity.
To this end, we outline a simple mechanism to replace the invocation of $\lit{send}$ and $\lit{receive}$ primitives (over regular channels) with calls to ring signature and anonymous messaging primitives $\lit{anon\_send}$ and $\lit{anon\_receive}$.

\ifconference
Let $(\ms{ID}, \TAG, \lit{label}, m)$ be a tuple for which $p_i \in P$ has invoked $\lit{send}$ with respect to.
Instead of invoking $\lit{send}$, $p_i$ first calls $\sigma \leftarrow \lit{Sign}(i, T, m)$, where $T$ is a tag uniquely defined by $\TAG$ and $\lit{label}$.
Then, $p_i$ invokes $\lit{anon\_send}$ $M$, where $M = (\ms{ID}, \TAG, \lit{label}, m, \sigma)$.
Upon an $\lit{anon\_receive}$ $M$ call by $p_j \in P$, $p_j$ verifies that $\lit{VerifySig}(T, m, \sigma) = 1$ and that they have not previously received $(m', \sigma')$ such that $\lit{Trace}(T, m, \sigma, m', \sigma') \neq$ ``indep''.
Given this check passes, $p_j$ invokes $\lit{receive}$ $(\ms{ID}, \TAG, \lit{label}, m)$.

By the properties of the anonymous channels and the signatures, it follows that anonymity as defined in the previous section holds with additional adversarial access to the distributed oracles.
We present the proof in the full paper~\cite{avcpaper}.
\else

\sloppy{
\paragraph{State.} Each process in $P$ tracks $\ms{msg\_buf}[]$, which maps a key (an ordered list of strings) to a set of messages of the form $(m, \sigma)$, where $\sigma$ is the output of a call to $\lit{Sign}$.
Each set is initially empty.
Messages of the form $(\ms{ID}, \TAG, \lit{label}, m, \sigma)$ (where $\lit{label} = (l_1, \dots, l_k)$ for some $k \geq 0$) are deposited as $\ms{msg\_buf}[\ms{ID}, \TAG, \lit{label}] \leftarrow \ms{msg\_buf}[\ms{ID}, \TAG, \lit{label}] \cup \{m, \sigma\}$.
%Considering the example $(\ms{ID}, \TAG, r, \lit{label}, m, \sigma)$, $\{m, \sigma\}$ is added to $\ms{msg\_buf}[\TAG, r, \lit{label}]$.
}

\begin{algorithm}[ht]
	% state in pseudo-code
	\caption{Traceable broadcast} %as invoked by $p'$}
	{\footnotesize
	\label{alg:tb}
	\begin{algorithmic}[1]
		\Upon{invocation of $\lit{broadcast}$ $(\ms{ID}, \TAG, \lit{label},  m)$}
		\State $\sigma \leftarrow \lit{Sign}(i, T, m)$ \label{line:tb:sign} \Comment{$T = \ms{ID}||\TAG||l_1||\cdots||l_k$, where $\lit{label} = (l_1, \ldots, l_k)$}
		\State $\lit{anon\_broadcast}$ $(\ms{ID}, \TAG, \lit{label}, m, \sigma)$ \label{line:tb:anonbcast}
	\EndUpon
	\Upon{invocation of $\lit{anon\_receive}$ $(\ms{ID}, \TAG, \lit{label}, m, \sigma)$}
	\State $\ms{valid} \leftarrow (\lit{VerifySig}(T, m, \sigma) = 1)$ \label{line:tb:isvalid}
	\If{\ms{valid}}
	\For{\textbf{each} $(m', \sigma') \in \ms{msg\_buf}[\ms{ID}, \TAG, \lit{label}]$} \label{line:tb:for}
	\If{$\lit{Trace}(T, m, \sigma, m', \sigma') \neq$ ``indep''}\label{line:tb:trace}
	\State $\ms{valid} \leftarrow \lit{false}$ \Comment{Double-signing detected}
	\State \textbf{break}
	\EndIf
	\EndFor
	\State $\ms{msg\_buf}[\ms{ID}, \TAG, \lit{label}] \leftarrow \ms{msg\_buf}[\ms{ID}, \TAG, \lit{label}] \cup \{(\ms{m, \sigma})\}$ \label{line:tb:store}
	\If{\ms{valid}}
	\State $\lit{receive}$ $(\ms{ID}, \TAG, \lit{label}, m)$ \label{line:tb:recv}
	\EndIf
	\EndIf
	\EndUpon
	\end{algorithmic}
	}
\end{algorithm}

\paragraph{Protocol.}
Suppose that a process $p$ wishes to invoke $\lit{broadcast}$ with respect to a tuple $(\ms{ID}, \TAG, \lit{label}, m)$
To replace $\lit{broadcast}$ (resp. $\lit{send}$) calls, a process first signs the value $m$  with respect to string $T = \ms{ID} || \TAG || l_1 || \dots || l_k$ for some $k \geq 0$ (at line~\ref{line:tb:sign}), outputting $\sigma$.
We assume every $T$ in Algorithm~\ref{alg:tb} takes this form.
Then, $p$ invokes $\lit{anon\_broadcast}$ (resp. $\lit{anon\_send}$) with respect to $(\ms{ID}, \TAG, \lit{label}, m, \sigma)$ (line~\ref{line:tb:anonbcast}).

On receipt of a tuple $(\ms{ID}, \TAG, \lit{label}, m, \sigma)$ via anonymous channels, $p$ first checks that the signature $\sigma$ is well-formed (line~\ref{line:tb:isvalid}).
Given this, all message/signature pairs in the set $\ms{msg\_buf}[\ms{ID}, \TAG, \lit{label}]$ are compared to the received tuple via $\lit{Trace}$ calls (line~\ref{line:tb:trace}).
Irrespective of the outcome, the received tuple is stored in $\ms{msg\_buf}[]$ (line~\ref{line:tb:store}).
Provided that the tuple is independent with respect to the signer of all messages stored thus far in $\ms{msg\_buf}[\ms{ID}, \TAG, \lit{label}]$, $(\ms{ID}, \TAG, \lit{label}, m)$ is considered receipt over regular channels (line~\ref{line:tb:recv}).

We now prove that the protocol is correct and ensures anonymity under the assumptions of the model.

\begin{lemma}
	\sloppy{
	\label{anon-bcast}
	Suppose $p_i \in P$ is non-faulty and invokes $\lit{broadcast}$ $(\ms{ID}, \TAG, \lit{label}, m)$.
	Suppose $p_i$ has previously invoked $\lit{broadcast}$ an arbitrary number of times.
	Then, all non-faulty processes will eventually invoke $\lit{receive}$ $(\ms{ID}, \TAG, \lit{label}, m)$.
	Moreover, $p_i$ preserves anonymity which is modelled as follows, provided that $p_i$ does not invoke $\lit{broadcast}$ $(\ms{ID}, \TAG, \lit{label}, m)$ and $\lit{broadcast}$ $(\ms{ID}', \TAG', \lit{label}, m')$ where $(\ms{ID}, \TAG, \lit{label}) = (\ms{ID}', \TAG', \lit{label}')$.
	Suppose that an adversary $A$ is allowed to output a single guess, $g \in \{1, \dots, n\}$ as to the identity of $p_i$ a finite amount of time after invoking $\lit{receive}$ at a corrupted process of its choice.
	$A$ is then allowed to perform an arbitrary number of computations and make oracle calls described in Section~\ref{sec:model:model} before outputting its guess.
	Then for any $A$, $Pr(i = g) \leq \frac{1}{n - t}$.
}
\end{lemma}
\begin{proof}
	\sloppy{
Process $p_i$ invokes $\lit{Sign}(i, \ms{ID}||\TAG||l_1||\cdots||l_k, m)$, producing the signature $\sigma$ (line~\ref{line:tb:sign}).
Then, $p_i$ invokes ``$\lit{anon\_broadcast}$ $(i, \ms{ID}||\TAG||l_1||\cdots||l_k, m, \sigma)$'' (line~\ref{line:tb:anonbcast}).
By construction of the model, $A$ is unable to observe the existence or timing of $p_i$'s call to $\lit{anon\_broadcast}$.
By the anonymity assumption, $A$ is not able to produce a guess $g \neq \frac{1}{n - t}$ when unable to make oracle calls, since it is only aware that the $t$ processes it has corrupted are different from $p_i$.
By the reliability of the network, $A$ will eventually invoke $\lit{anon\_receive}$ with respect to $m$ at one of its $t$ corrupted processes, as will every non-faulty process.

By signature correctness, each process will deduce that $\lit{VerifySig}(\ms{ID}||\TAG||l_1||\cdots||l_k, m, \sigma) = 1$ at line~\ref{line:tb:isvalid}.
By definition of $\lit{Sign}$, $A$ cannot invoke $\lit{Sign}(i, \ms{ID}||\TAG||l_1||\cdots||l_k, m')$ nor produce a value $\sigma'$ such that $\lit{VerifySig}(\ms{ID}||\TAG||l_1||\cdots||l_k, m', \sigma') = 1$ for any $m'$.
This holds for any tuple of the form $(\ms{ID}, \TAG, l_1, \dots, l_k)$.
Thus, every non-faulty process will progress to line~\ref{line:tb:for}.
By traceability and entrapment-freeness, $A$ cannot produce $(m', \sigma')$ such that $\lit{Trace}(\ms{ID}||\TAG||l_1||\cdots||l_k, m, \sigma, m', \sigma') \neq$ ``indep''.
In particular, if $m' \neq m$, then $A$ cannot expose the identity of $p_i$ through a $\lit{Trace}$ call, as $\lit{Trace}$ will never output $i$.
Thus, every non-faulty process will reach line~\ref{line:tb:recv}.
By signature anonymity, and the fact that $p_i$ never signs with respect to two tuples such that $(\ms{ID}', \TAG', l_{1'}, \dots, l_{k'}) = (\ms{ID}, \TAG, l_1, \dots, l_k)$, no previous computation or messages can aid $A$'s guess to be such that $g \neq \frac{1}{n - t}$.
}
\end{proof}
\fi
Hereafter, we assume that calls to primitives $\lit{send}$ and $\lit{receive}$ are handled by the procedure presented in this subsection unless explicitly stated otherwise.

\subsection{Binary consensus}
Broadly, the binary consensus problem involves a set of processes reaching an agreement on a binary value $b \in \{0, 1\}$.
We first recall the definitions that define the binary Byzantine consensus (BBC) problem as stated in \cite{CGLR18}. 
In the following, we assume that every non-faulty process proposes a value, and remark that only the values in the set $\{0, 1\}$ can be decided by a non-faulty process.
\begin{enumerate}
	\item \textbf{BBC-Termination:} Every non-faulty process eventually decides on a value.
	\item \textbf{BBC-Agreement:} No two non-faulty processes decide on different values.
	\item \textbf{BBC-Validity:} If all non-faulty processes propose the same value, no other value can be decided.
\end{enumerate}
We present the safe, non-terminating variant of the binary consensus routine from~\cite{CGLR18} in Algorithm~\ref{alg:obc}.
As assumed in the model (Section~\ref{sec:model:model}), the terminating routine relies on partial synchrony between processes in $P$.
The protocols execute in asynchronous rounds.

\begin{algorithm}[h]
	% state in pseudo-code
	\caption{Safe binary consensus routine}
	\label{alg:obc}
	\begin{algorithmic}[1]
	{\small
		\Part{$\lit{bin\_propose}(v)$} {
		\State $est \leftarrow v$; $r \leftarrow 0$;
		\Repeat
		\State $r \leftarrow r + 1$;
		\State $\lit{BV-broadcast}($\EST$, r, est)$ \label{line:obc:bvbcast}
		\WUntil $(\ms{bin\_values}[r] \neq \emptyset)$ \label{line:obc:bvalsnonempty} \EndWUntil
		\State $\lit{broadcast}$ $($\AUX$, r, \ms{bin\_values}[r])$ \label{line:obc:bcastaux}
		\WUntil{$(|\ms{values_r}| \geq n - t) \land (val \in \ms{bin\_values}[r]$ for all $val \in \ms{values_r})$}
	\label{line:obc:conds}
%		\WUntil received $n - t$ messages from distinct processes \Statex
%		\ \ \ \ \ \ \ \ $($\AUX$, r, \ms{b\_val}_{p(1)}$), $\cdots$, $($\AUX$, r, \ms{b\_val}_{p(n - t)})$ \Statex
%		\ \ \ \ \ \ \ \ $values = \cup_{1 \leq x \leq n - t}\ms{b\_val_{p(x)}}$ s.t. $\emptyset \neq values \subseteq \ms{bin\_values[r]}$
%		messages $($\AUX$, r, \ms{b\_val_p}(1))$, $\cdots$, $($\AUX$, r, \ms{b\_val_p}(n - t))$ have been received from $(n - t)$ processes $p(x), 1 \leq x \leq n - t$ with contents s.t. $\exists$ $values \neq \emptyset$ where the following two conditions hold: 
%		\begin{itemize}
%		\item $values = \cup_{1 \leq x \leq n - t}\ms{b\_val_{p(x)}}$
%		\item $values \subseteq \ms{bin\_values[r]}$ 
%		\end{itemize}
		\EndWUntil
		\State $b \leftarrow r \pmod{2}$ \label{line:obc:brmod2}
		\If{$val = w$ for all $val \in \ms{values_r}$ where $w \in \{0, 1\}$}
		\State $est \leftarrow w$; \label{line:obc:esttov}
		\If{$w = b$} \label{line:obc:isvb}
		\State $\lit{decide}(v)$ if not yet invoked $\lit{decide}()$ \label{line:obc:decide}
		\EndIf
		\Else 
		\State $est \leftarrow b$ \label{line:obc:esttob}
		\EndIf
		\EndRepeat
		}\EndPart
		\Upon{intial receipt of $(\AUX, r, b)$ for some $b \in \{0, 1\}$ from process $p_j$}
		\State $\ms{values_r}.\lit{append}(b)$
		\EndUpon
		\Statex
		\Part{$\lit{BV-broadcast}($\EST$,r,v_i)$} {
			\State $\lit{broadcast}$ $($\EST$,r,v_i)$ \label{line:obc:bcastinbv}
		}\EndPart
		\Upon{receipt of $($\EST$,r,v)$}
		\If{$($\EST$,r,v)$ received from $(t + 1)$ processes and not yet broadcast} \label{line:obc:tplusone}
		\State $\lit{broadcast}$ $($\EST$,r,v)$		\EndIf
		\If{$($\EST$,r,v)$ received from $(2t + 1)$ processes} \label{line:obc:twotonecond}
		\State $\ms{bin\_values}[r] \leftarrow \ms{bin\_values}[r] \cup \{v\}$ \label{line:obc:bvdeliver}
		\EndIf
		\EndUpon
		}
	\end{algorithmic}
\end{algorithm}

\paragraph{State.} A process keeps track of a binary value $est \in \{0, 1\}$, corresponding to a process' current estimate of the decided value, arrays $bin\_values[1..]$, a round number $r$ (initialised to 0), an auxiliary binary value $b$, and lists of (binary) values $\ms{values_r}$, $r = 1, 2, \dots$, each of which are initially empty. 

\paragraph{Messages.} Messages of the form $($\EST$, r, b)$ and $($\AUX$, r, b)$, where $r \geq 1$ and $b \in \{0, 1\}$, are sent and processed by non-faulty processes. Note that we have omitted the dependency on a label $label$ and identifier $\ms{ID}$ for simplicity of exposition.
%In the original paper~\cite{CGLR18}, these messages take the form \EST$[r](b)$ and \AUX$[r](b)$.

\paragraph{BV-broadcast.}
To exchange \EST \ messages, the protocol relies on an all-to-all communication abstraction, BV-broadcast \cite{Mostefaoui2015},
\ifconference which is presented in Algorithm~\ref{alg:obc}.
\else which satisfies the following properties for a given round $r \geq 1$ that a process is executing:
\begin{itemize}
\item {\bf BV-Obligation:} If at least $(t + 1)$ non-faulty processes BV-broadcast the same value $v$, then $v$ will eventually be added to the set $\ms{bin\_values}$ of each non-faulty process.
\item {\bf BV-Justification:} If $p$ is non-faulty and $v \in \ms{bin\_values}[r]$, then $v$ must have been BV-broadcast by a non-faulty process.
\item {\bf BV-Uniformity:} If $v$ is added to a non-faulty process $p$'s $\ms{bin\_values}[r]$ set, then eventually $v \in \ms{bin\_values}[r]$ at every non-faulty process.
\item {\bf BV-Termination:} Eventually, $\ms{bin\_values}[r]$ becomes non-empty for every non-faulty process.
\end{itemize}

Primarily, BV-broadcast serves to filter values that are only proposed by faulty processes (BV-Obligation and BV-Justification), to ensure progress (BV-Termination), and to work towards agreement (BV-Uniformity).
\fi
When a process adds a value $v \in \{0, 1\}$ to its array $\ms{bin\_values}[r]$ for some $r \geq 1$, we say that $v$ was BV-delivered.

\paragraph{Functions.}
Let $b \in \{0, 1\}$.
In addition to $\lit{BV-broadcast}$ and the communication primitives in our model (Section~\ref{sec:model:model}), a process can invoke $\lit{bin\_propose}(b)$ to begin executing an instance of binary consensus with input $b$, and $\lit{decide}(b)$ to decide the value $b$.
In a given instance of binary consensus, these two functions may be called exactly once.
In addition, the function $list.\lit{append}(v)$ appends the value $v$ to the list $list$.

\ifconference
	To summarise Algorithm~\ref{alg:obc}, the BV-broadcast component ensures that only a value proposed by a correct process may be decided, and the auxiliary broadcast component ensures that enough processes have received a potentially decidable value to ensure agreement.
	The interested reader can verify the correctness of the protocol and read a thorough description of how it operates in~\cite{CGLR18}, where the details of the corresponding terminating protocol also reside.
\else
\paragraph{Protocol description.}
Upon the invocation bin\_propose($b$), where $b \in \{0, 1\}$, a process will enter a sequence of asynchronous rounds.

In a given round, a process will invoke BV-broadcast (line \ref{line:obc:bvbcast}), broadcasting its current estimate (line \ref{line:obc:bcastinbv}) for the round $r$, which is $est$.
	In an instance of BV-broadcast, after receiving a binary value from $t + 1$ different processes, a process will broadcast the value if not yet done (line \ref{line:obc:tplusone}).
	Eventually, a process will BV-deliver a value $v$ upon reception from $2t + 1$ different processes, fulfilling the condition at line \ref{line:obc:twotonecond}.
	In doing so, a process appends $v$ to its set $\ms{bin\_values}[r]$ (line \ref{line:obc:bvdeliver}), and we note that $\ms{bin\_values}[r]$ is not necessarily in its final form at this point in time.

	Since $\ms{bin\_values}[r]$ is now a singleton set (after line \ref{line:obc:bvalsnonempty}), non-faulty processes will broadcast the value $v$ contained in $\ms{bin\_values}[r]$ (line \ref{line:obc:bcastaux}).
	Then, processes wait until they can form a list $\ms{values_r}$ such that $val \in \ms{bin\_values}[r]$ for all $val \in \ms{values_r}$ and $\ms{values_r}$ is formed from at least $n - t$ $($\AUX$, r, b)$ messages sent by distinct processes that they have received (line \ref{line:obc:conds}).
	This ensures that enough processes have sent messages to be able to (potentially) decide, and that only values BV-broadcast previously are considered candidates for being decided.

Then, processes attempt to decide a value via local computation. $b$ is set to $r \pmod{2}$ (line \ref{line:obc:brmod2}), and then each process checks the following:
\begin{itemize}
	\item If each element of $\ms{values_r}$ is $w$, the estimate for the next round is set to $w$ (line \ref{line:obc:esttov}).
	\item Given the above, $w$ is then decided if $r \pmod{2} = w$, (lines \ref{line:obc:isvb} and \ref{line:obc:decide}).
		If this does not hold, the value can be decided in the next round provided that $\ms{values_{r + 1}}$ is uniform at line \ref{line:obc:esttov}.
	\item Else, $\ms{values_r}$ contains both 0's and 1's (and $\ms{bin\_values}[r] = \{0, 1\}$), and $est$ is set to be $r \pmod{2}$ (line \ref{line:obc:esttob}), as a process cannot decide at this point.
\end{itemize}

Note that, upon invocation of $\lit{decide}()$, processes still participate in the protocol, enabling other processes who may not have decided to decide.
The interested reader may verify the correctness of this protocol, and the corresponding terminating, partially-synchronous protocol in~\cite{CGLR18}.
\fi
%We also present the relevant components of the terminating variant that require handling in AVCP, and describe how to ensure termination.

\subsection{Anonymity-preserving all-to-all reliable broadcast}
%In this section, we introduce the problem of anonymised all-to-all reliable broadcast, which extends the definition of reliable broadcast \cite{Bracha1987, Raynal2018} to hide the identities of designated senders.
%As such, recipients can 
To 
reach eventual agreement in the presence of Byzantine processes without revealing 
%the broadcaster's identity, 
who proposes what, 
we introduce the
%Then, we describe a protocol, 
\textit{anonymity-preserving all-to-all reliable broadcast} problem that preserves the anonymity of each honest sender reliably broadcasting.
% CC: be more explicit on what it achieves
In this primitive, all processes are assumed to (anonymously) broadcast a message, and
all processes deliver messages over time. It ensures that all honest processes
always receive the same message from one specific sender while
hiding the identity of any non-faulty sender.

Let $\ms{ID} \in \{0, 1\}^{*}$ be an \textit{identifier}, a string that uniquely identifies an instance of AARB-broadcast.
Let $m$ be a message, and $\sigma$ be the output of the call $\lit{Sign}(i, T, m)$ for some $i \in \{1, \dots, n\}$, where $T = f(ID, \lit{label})$ is as in Algorithm~\ref{alg:obc}.
Each process is equipped with two operations, ``$\lit{AARBP}$'' and ``$\lit{AARB-deliver}$''.
$\lit{AARBP}[\ms{ID}](m)$ is invoked once with respect to $\ms{ID}$ and any message $m$, denoting the beginning of a process' execution of AARBP with respect to $\ms{ID}$.
$\lit{AARB-deliver}[ID](m, \sigma)$ is invoked between $n - t$ and $n$ times over the protocol's execution.
When a process invokes $\lit{AARB-deliver}[\ms{ID}](m, \sigma)$, they are said to ``AARB-deliver'' $(m, \sigma)$ with respect to $\ms{ID}$.
Let $T = f(ID, \lit{label})$ be as in Algorithm~\ref{alg:obc}.
Then, given $t < \frac{n}{3}$, we define a protocol that implements anonymity-preserving all-to-all reliable broadcast (AARB-broadcast) with respect to $\ms{ID}$ as satisfying the following \textit{six} properties:
%\begin{enumerate}[noitemsep,topsep=5pt,parsep=2pt,partopsep=5pt,itemindent=1em,listparindent=1em,leftmargin=1em]
\begin{enumerate}
	\item {\bf AARB-Signing:}
		If a non-faulty process $p_i$ AARB-delivers a message with respect to $\ms{ID}$, then it must be of the form $(m, \sigma)$, where a process $p_i \in P$ invoked $\lit{Sign}(i, T, m)$ and obtained $\sigma$ as output.
	\item {\bf AARB-Validity:}
%CC: integrity (authenticity)
		\sloppy{
		Suppose that a non-faulty process AARB-delivers $(m, \sigma)$ with respect to $\ms{ID}$.
		Let $i = \lit{FindIndex}(T, m, \sigma)$ denote the output of an idealised call to $\lit{FindIndex}$.
		Then if $p_i$ is non-faulty, $p_i$ must have anonymously broadcast $(m, \sigma)$.}
	\item {\bf AARB-Unicity:}
%CC: what does this say? that each message in the output is from a distinct sender...? not clear to me yet; also, FindIndex() is from the implementation of ring sigs, the ARB primitive alone should not refer to the implementation; last but not least, this should consider $delivered$ from one process (p) or not from many ("they")
		\sloppy{Consider any point in time in which a non-faulty process $p$ has AARB-delivered more than one tuple with respect to $\ms{ID}$. Let $delivered = \{(m_1, \sigma_1), \dots, (m_l, \sigma_l)\}$, where $|delivered| = l$, denote the set of these tuples. For each $i \in \{1, \dots, l\}$, let $out_i = \lit{FindIndex}(T, m_i, \sigma_i)$ denote the output of an idealised call to $\lit{FindIndex}$. Then for all distinct pairs of tuples $\{(m_i, \sigma_i), (m_j, \sigma_j)\}$, $out_i \neq out_j$.}
		%to the set of ARB-delivered messages A non-faulty process ARB-delivers at most one message from a particular process.
	\item {\bf AARB-Termination-1:} 
%CC: validity
		If a process $p_i$ is non-faulty and invokes $\lit{AARBP}[ID](m)$, all the non-faulty processes eventually AARB-deliver $(m, \sigma)$ with respect to $\ms{ID}$, where $\sigma$ is the output of the call $\lit{Sign}(i, T, m)$.
	\item {\bf AARB-Termination-2:} 
%CC: goes towards agreement (= consistency + totatlity), but needs the instance identifier
		If a non-faulty process AARB-delivers $(m, \sigma)$ with respect to $\ms{ID}$, then all the non-faulty processes eventually AARB-deliver $(m, \sigma)$ with respect to $\ms{ID}$.

%	\item {\bf ARB-Anonymity}. Let $0 \leq f \leq t$ be the number of \textit{actual} faulty processes, i.e. processes corrupted by the adversary, over the course of protocol execution. Then, consider the message $m$ ARB-broadcast by a non-faulty process $p$. Then, the probability $Pr$ of the adversary correctly determining the identity of $p$ satisfies:
%		Let $M = \{m_{f(1)}, \cdots, m_{f(n - f)}\}$ denote these messages, each of which has been ARB-delivered, where $f(i) \in \{1..n\}$ corresponds to the unique, secret index of a process in $P$ who produced $m_{f(i)}$.
%		Given a message $m \in M$, it is impossible to determine the value of $f(i)$ s.t. $m = m_{f(i)}$ with probability $p$ s.t. $p > \frac{1}{n - f}$.
%		\[Pr < \frac{1}{n - f} + \epsilon(k).\]
	% TODO: define more specifically in terms of Sign/etc operations as per Fujisaki's papers, then state somewhere that the definition of anonymity can depend on the signature scheme (i.e. LRS can avoid Trace() function calls)
\end{enumerate}
We require AARB-Signing to ensure that the other properties are meaningful.
Since messages are anonymously broadcast, properties refer to the index of the signing process determined by an idealised call to $\lit{FindIndex}$.
In spirit, AARB-Validity ensures if a non-faulty process AARB-delivers a message that was signed by a non-faulty process $p_i$, then $p_i$ must have invoked AARBP.
Similarly, AARB-Unicity ensures that a non-faulty process will AARB-deliver at most one message signed by each process.
We note that AARB-Termination-2 is
critical
for
consensus, as without it, different processes may AARB-deliver different messages produced by the \textit{same} process, as in the two-step algorithm implementing no-duplicity broadcast \cite{Bracha1983, Raynal2018}.
Finally, we state the anonymity property:
%\begin{enumerate}[noitemsep,topsep=5pt,parsep=2pt,partopsep=5pt,itemindent=1em,listparindent=1em,leftmargin=1em]
\begin{enumerate}
\setcounter{enumi}{5}
%	\item {\bf AARB-Anonymity: (old)}
%		\sloppy{
%		Let $f$ be the actual number of faulty processes, where $0 \leq f \leq n - 2$. % TODO - change?
%		Let $(m, \sigma)$ be a tuple anonymously broadcast by a non-faulty process $p_i$.
%		Let $out = \lit{FindIndex}(sk_1, \cdots, sk_n, L, m, \sigma)$ denote the output of an idealised call to $\lit{FindIndex}$.
%		Then, it is impossible for the adversary to determine the value of $i$ with probability $Pr > \frac{1}{n - t}$.}
	\item {\bf AARB-Anonymity:}
		\sloppy{
		Suppose that non-faulty process $p_i$ invokes $\lit{AARBP}[\ms{ID}](m)$ for some $m$ and given $\ms{ID}$, and has previously invoked an arbitrary number of $\lit{AARBP}[\ms{ID_j}](m_j)$ calls where $\ms{ID} \neq \ms{ID_j}$ for all such $j$.
		Suppose that an adversary $A$ is required to output a guess $g \in \{1, \dots, n\}$, corresponding to the identity of $p_i$ after performing an arbitrary number of computations, allowed oracle calls and invocations of networking primitives.
		Then for any $A$, $Pr(i = g) \leq \frac{1}{n - t}$.
}
\end{enumerate}
Informally, AARB-Anonymity guarantees that the source of an anonymously broadcast message by a non-faulty process is unknown to the adversary, in that it is indistinguishable from $n - t$ (non-faulty) processes.
AARBP can be implemented by composing $n$ instances of Bracha's reliable broadcast algorithm\ifconference
, which we describe and prove correct in the full paper~\cite{avcpaper}.
\else
, which we describe and prove correct in Appendix~\ref{sec:arbproof}.
\fi

% TODO: include Sign() in ARB-broadcast algorithm, define state explicitly (i.e. pass in L = (issue, m)), assume issue is agreed upon a priori
%\subsection{Protocol description}
%
%We present Anonymous Reliable Broadcast Protocol (AARBP), an adaption of Bracha's RB-broadcast protocol \cite{Bracha1987} that ensures the properties of the ARB-Broadcast problem.

\section{Anonymity-preserving vector consensus} \label{sec:avc:avc}
In this section, we introduce the \textit{anonymity-preserving vector consensus} problem and present and discuss the protocol Anonymised Vector Consensus Protocol (AVCP) that solves it.
\ifconference
\else We defer its proof to Appendix~\ref{sec:avcproof}.
\fi
%This section introduces the problem of \textit{anonymised vector consensus}, an extension of vector consensus. We then describe our protocol, Anonymised Vector Consensus Protocol (AVCP), that solves anonymised vector consensus, deferring a correctness proof to Appendix \ref{sec:avcproof}. As we did in the previous section, we analyse the protocol's complexity, and provide some remarks and optimisations.
%
%\subsection{Preliminaries} \label{sec:avc:avcprelim}
%Reliable broadcast can be useful when consensus is not needed, like when broadcasting information to a designated group. 
%In some contexts, however, anonymised all-to-all reliable broadcast is not sufficient, as processes are only guaranteed that they will \textit{eventually} deliver messages sent by non-faulty processes (ARB-Termination-1), or messages ARB-delivered by non-faulty processes (ARB-Termination-2). That is, there is no mechanism to decide when to stop accepting messages from processes, and if processes only wait for $n - t$ messages in all-to-all ARB-broadcast, they are not guaranteed to reach agreement.
The anonymity-preserving vector consensus problem brings anonymity to the vector consensus problem~\cite{Doudou1997} where non-faulty processes reach an agreement upon a vector containing at least $n-t$ proposed values.
%
%Vector consensus~\cite{DS09} is an extension of consensus and a variation on interactive consistency for when synchrony is not assumed. It provides the means for processes to reach agreement on a set of valid values proposed by processes from some predetermined group, guaranteeing eventual termination.
More precisely, anonymised vector consensus asserts that the identity of a process who proposes must be indistinguishable from that of all non-faulty processes.
%In electronic voting, ensuring the properties of consensus and anonymity provides a natural mechanism to prevent double-voting, enforce election eligibility requirements and mitigate .
As in AARBP, instances of AVCP are identified uniquely by a given identifier $\ms{ID}$.
Each process is equipped with two operations. 
Firstly, ``$\lit{AVCP}[\ms{ID}](m)$'' begins execution of an instance of AVCP with respect to $\ms{ID}$ and proposal $m$.
Secondly, ``$\lit{AVC-decide}[\ms{ID}](V)$'' signals the output of $V$ from the instance of AVCP identified by $\ms{ID}$, and is invoked exactly once per identifier.
We define a protocol that solves anonymity-preserving vector consensus with respect to these operations as satisfying the following four properties:
%
%As such, we define \textit{anonymised vector consensus}, which ensures not only the usual properties of safety and termination as in vector consensus, but also the \textit{anonymity} of non-faulty processes.
%In the case of distributed electronic voting, ensuring that both consensus is reached \textit{and} anonymity is preserved can dissociate a voter from their ballot, and provides a natural mechanism to prevent double-voting and enforce election eligibility requirements.
%
%We define a protocol that implements anonymised vector consensus as satisfying the following properties:
%\begin{enumerate}[noitemsep,topsep=5pt,parsep=2pt,partopsep=5pt,itemindent=1em,listparindent=1em,leftmargin=1em]
\begin{enumerate}
	\item {\bf AVC-Anonymity:}
		Suppose that non-faulty process $p_i$ invokes $\lit{AVCP}[\ms{ID}](m)$ for some $m$ and given $\ms{ID}$, and has previously invoked an arbitrary number of $\lit{AVCP}[\ms{ID_j}](m_j)$ calls where $\ms{ID} \neq \ms{ID_j}$ for all such $j$.
		Suppose that an adversary $A$ is required to output a guess $g \in \{1, \dots, n\}$, corresponding to the identity of $p_i$ after performing an arbitrary number of computations, allowed oracle calls and invocations of networking primitives.
		Then for any $A$, $Pr(i = g) \leq \frac{1}{n - t}$.
%		Suppose that $m$ is proposed by a non-faulty process $p_i$.
%		Then, it is impossible for the adversary to determine the value of $i$ with probability $Pr > \frac{1}{n - t}$.
\end{enumerate}
It also requires the original agreement and termination properties of vector consensus to be ensured:
\sloppy{
%\begin{enumerate}[noitemsep,topsep=5pt,parsep=2pt,partopsep=5pt,itemindent=1em,listparindent=1em,leftmargin=1em]
\begin{enumerate}
	\setcounter{enumi}{1}
\item {\bf AVC-Agreement:} All non-faulty processes that invoke $\lit{AVC-decide}[\ms{ID}](V)$ do so with respect to the same vector $V$ for a given $\ms{ID}$.
\item {\bf AVC-Termination:} Every non-faulty process eventually invokes $\lit{AVC-decide}[\ms{ID}](V)$ for some vector $V$ and a given $\ms{ID}$.
\end{enumerate}
}
It also requires a validity property that depends on a pre-determined, deterministic validity predicate $\lit{valid}()$~\cite{CKPS01, CGLR18} which we assume is common to all processes. We assume that all non-faulty processes propose a value that satisfies $\lit{valid}()$.
%As done in DBFT \cite{CGLR18}, we define AVC-Validity with respect to a validity predicate $\lit{valid}()$, 
%which behaves as follows:
%that is identical across processes and deterministic (e.g., in tracking a blockchain $\lit{valid}()$ must assert that a proposal contains the hash of the previously-decided block):
% TODO - contrast with previous work (deciding 2t + 1?)
%%\begin{itemize}
%%	\item 
%	All processes access the same function $\lit{valid}()$.
%	%\item 
%	$\lit{valid}()$ must be deterministic.
%	%\item 
%	valid() can be adapted as per context - in a blockchain, for instance, $\lit{valid}()$ might assert that a proposal must contain the hash of the previously-decided block.
%	%\item 
%	Proposals that fulfil valid() that are produced by Byzantine processes may still be decided.
%%\end{itemize}
%\begin{enumerate}[noitemsep,topsep=5pt,parsep=2pt,partopsep=5pt,itemindent=1em,listparindent=1em,leftmargin=1em]
\begin{enumerate}
\setcounter{enumi}{3}
\item {\bf AVC-Validity:} Consider each non-faulty process that invokes $\lit{AVC-decide}[\ms{ID}](V)$ for some $V$ and a given $\ms{ID}$.
	Each value $v \in V$ must satisfy $\lit{valid}()$, and $|V| \geq n - t$.
	Further, at least $|V| - t$ values correspond to the proposals of distinct non-faulty processes.
\end{enumerate}

%Our definition of AVC-Anonymity is the same as in ARB-broadcast, ensuring non-faulty processes cannot be de-anonymised by up to $f < n - 1$ colluding Byzantine processes.
%As we shall see, we rely on partial synchrony implicitly in the terminating variant of the binary consensus routine invoked in our protocol, and so we do not explicitly reason with respect to partial synchrony.

%\vspace{0.5em}\noindent{\bf AVCP.}
\subsection{AVCP}%\paragraph{AVCP.}
We present a reduction to binary consensus which may converge in four message steps, as in the reduction to binary consensus in DBFT~\cite{CGLR18}, at least one of which must be performed over anonymous channels.
\ifconference
	We present the proof of correctness in the full paper~\cite{avcpaper}.
\else
\fi
We note that comparable problems~\cite{Diamantopoulos2015}, including agreement on a core set over an asynchronous network~\cite{Ben-Or1994}, rely on such a reduction.
The protocol is divided into two components.
Firstly, the reduction component (Algorithm~\ref{alg:avc}) reduces anonymity-preserving vector consensus to binary consensus.
Here, one instance of AARBP and $n$ instances of binary consensus are executed.
But, since proposals are made anonymously, processes cannot associate proposals with binary consensus instances a priori.
Consequently, processes start with $n$ unlabelled binary consensus instances, and label them over time with the hash digest of proposals they deliver (of the form $h \in \{0, 1\}^{*}$).
To cope with messages sent and received in unlabelled binary consensus instances, we require a handler component (Algorithm~\ref{alg:bchandler}) that replaces function calls made in binary consensus instances.
%
%Firstly, the \textit{reduction} component comprises the reduction from anonymised vector consensus to binary consensus. Then, the \textit{binary consensus} component involves a slight augmentation of an existing binary consensus algorithm~\cite{CGLR18}. In particular, the binary consensus routine invokes BV-broadcast, an all-to-all communication abstraction defined initially in~\cite{Mostefaoui2015} for filtering binary values. Finally, the \textit{handler} component contains the function signatures corresponding to the augmentations of the binary consensus routine, and handles the sending and receiving of all messages in the protocol.

%\vspace{0.5em}\noindent{\bf Functions.}
\paragraph{Functions.}
In addition to the communication primitives detailed in Section~\ref{sec:model:model} and the two primitives ``$\lit{AVCP}$'' and ``$\lit{AVC-decide}$'', the following primitives may be called:
	``$inst.\lit{bin\_propose}(v)$'', where $inst$ is an instance of binary consensus and $v \in \{0, 1\}$, begins execution of $inst$ with initial value $v$,
	``$\lit{AARBP}$'' and ``$\lit{AARB-deliver}$'', as in Section~\ref{sec:arb:arb},
	``$\lit{valid}()$'' as described above,
%	``$\lit{sort}(l)$'', which takes a set of labels $l$ and returns a corresponding sorted list,
	``$m.\lit{keys}()$'' (resp. ``$m.\lit{values}()$''), which returns the keys (resp. values) of a map $m$,
	``$item.\lit{key}()$'', which returns the key of $item$ in a map $m$,
	``$s.\lit{pop}()$'', which removes and returns a value from set $s$, and
	``$H(v)$'', a collision-resistant hash function which returns $h \in \{0, 1\}^{*}$ based on $v \in \{0, 1\}^{*}$.

%\vspace{0.5em}\noindent{\bf State.}
\paragraph{State.}
Each process tracks the following variables:
% TODO: maybe State, Messages, Functions could be reduced in size by not using bullet points
%\begin{itemize}
%[noitemsep,topsep=5pt,parsep=2pt,partopsep=5pt,itemindent=1em,listparindent=1em,leftmargin=1em]
	$\ms{ID} \in \{0, 1\}^{*}$, a common identifier for a given instance of AVCP.
	$\ms{proposals}[]$, which maps labels of the form $l \in \{0, 1\}^{*}$ to AARB-delivered messages of the form $(m, \sigma) \in (\{0, 1\}^{*}, \{0, 1\}^{*})$ that may be decided, and is initially empty.
	$\ms{decision\_count}$, tracking the number of binary consensus instances for which a decision has been reached, initialised to 0.
	$\ms{decided\_ones}$, the set of proposals for which 1 was decided in the corresponding binary consensus instance, initialised to $\emptyset$.
	$\ms{labelled}[]$, which maps labels, which are the hash digest $h \in \{0, 1\}^{*}$ of AARB-delivered proposals, to binary consensus instances, and is initially empty. %TODO: clean up
	$\ms{unlabelled}$, a set of binary consensus instances (initially of cardinality $n$) with no current label.
	$\ms{ones}[][]$, which maps two keys, \EST \ and \AUX, to maps with integer keys $r \geq 1$ which map to a set of labels, all of which are initially empty.
	$\ms{counts}[][]$, which maps two keys, \EST \ and \AUX, to maps with integer keys $r \geq 1$ which map to an integer $n \in \{0, \dots, n\}$, all of which are initialised to 0.
%\end{itemize}

%Each process tracks $pk_N$, $sk_i$, $\ms{issue}$ and $L$ as in AARBP in Section \ref{sec:arb:arb}.
%Each process stores $pk_N$, the set of $n$ public keys identifying each process in $P$.
%In a consortium blockchain, this information can be encoded in the genesis block. % TODO: include this?
%The array $proposals[1..n]$ stores valid proposals which are locally indexed by an iterating counter \ms{curr\_index}.
%Proposals are associated with binary consensus instances, and so $bin\_decisions[1..n]$ tracks the corresponding decision for all processes.
%These instances are tracked locally in \ms{BIN\_CONS}[$1..n, l_1..l_n$].
%In our protocol, processes do not know a priori which process corresponds to which proposal, and so cannot agree on indices for binary consensus instances a priori.
%As such, each process labels binary consensus instances with the hash of a message that is ARB-delivered (values $l_i$, where $i \in \{1..n\}$).
%Further, each process has four key-value stores, which we define in describing the alteration of the binary consensus algorithm.
%\subsubsection{Messages}

%\vspace{0.5em}\noindent{\bf Messages.}
\paragraph{Messages.}
In addition to messages propagated in AARBP, non-faulty processes process messages of the form
$(\ms{ID}, $\TAG$, r, \ms{label}, b)$, where \TAG \ $\in \{$\EST $,$ \AUX$\}$, $r \geq 1$, $\ms{label} \in \{0, 1\}^{*}$ and $b \in \{0, 1\}$. A process buffers a message $(\ms{ID}, $\TAG$, r, label, b)$ until $\ms{label}$ labels an instance of binary consensus $inst$, at which point the message is considered receipt in $inst$.
The handler, described below, ensures that all messages sent by non-faulty processes eventually correspond to a label in their set of labelled consensus instances (i.e. contained in $\ms{labelled}.\lit{keys}()$).
Similarly, a non-faulty process can only broadcast such a message after labelling the corresponding instance of binary consensus.
Processes also process messages of the form $(\ms{ID}, $\TAG$, r, ones)$, where \TAG \ $\in \{$\ESTONES $,$ \AUXONES$\}$, $r \geq 1$, and $ones$ is a set of strings corresponding to binary consensus instance labels.
% talk about buffering and broadcast when label exists, and say for non-faulty process ARB-Termination-2 implies they will process it
%	\item \EST$[r, label](b)$, where $b \in \{0, 1\}$, denotes messages propagated via BV-broadcast \cite{Mostefaoui2015} in the binary consensus. These messages correspond to processes attempting to disseminate binary values broadcast by non-faulty processes through BV-broadcast.
%	\item \AUX$[r, label](b)$, where $b = 1$, corresponds to potential decision values in the binary consensus instance \ms{BIN\_CONS}$[i, label]$ for some $i$. Messages of the form \AUX$[r, label](b)$, where $b = 0$, are handled differently.
%	\item $($\MSG$, 0, r, \neg S = \{s_1, \cdots \})$, where \MSG \ $\in \{$\EST$, $\AUX$\}$. Simply put, these messages correspond to the sending of \MSG$[r, label](0)$ in all instances of binary consensus that a process wishes not to send \MSG$[r, label](1)$ in.

%\subsubsection{Reduction}
\begin{algorithm}[t]
	% state in pseudo-code
\caption{AVCP (1 of 2): Reduction to binary consensus}
	{\footnotesize
	\label{alg:avc}
	\begin{algorithmic}[1]
		\Part{$\lit{AVCP}[\ms{ID}](m')$} {
%		\State Spawn $n$ instances of binary consensus, each labelled as $\ms{BIN\_CONS}[\perp]$
		\State $\lit{AARBP}[\ms{ID}](m')$ \label{line:avc:arbbcast} \Comment{anonymised reliable broadcast of proposal}
		\WUntil{$|\ms{decided\_ones}| \geq n - t$} \Comment{wait until $n - t$ instances terminate with 1} \EndWUntil \label{line:avc:wuntilnminust} 
%		\Repeat %		\If{$\exists$ $(m, \sigma)\in \ms{proposals} : \ms{labelled}[H(m \mid \mid \sigma)].\lit{bin\_propose}()$ not invoked} \label{line:avc:condition}
%		\State Invoke $\ms{labelled}[H(m \mid \mid \sigma)].\lit{bin\_propose}(1)$ \label{line:avc:binpropone}
%		\EndIf
%		\EndRepeat
%		\Until{$|\ms{decided\_ones}| \geq n - t$} \label{line:avc:condition} \Comment{waiting for $n - t$ 1's to be decided} %,  which is optimal under asynchrony}
%		\EndUntil
%		\State Deposit buffered messages into different consensus instances
		\For{\textbf{each} $inst \in \ms{unlabelled} \cup \ms{labelled}.\lit{values}()$ such that \\ \ \ \ $inst.\lit{bin\_propose}()$ not yet invoked}
		\State Invoke $inst.\lit{bin\_propose}(0)$ \label{line:avc:binpropzeros} \Comment{propose 0 in all binary consensus not yet invoked}
		\EndFor
		\WUntil {$\ms{decision\_count} = n$} \label{line:avc:nbindecs} \Comment{wait until all $n$ instances of binary consensus terminate} \EndWUntil
%		\State $I$ = $\{k \mid \ms{bin\_decisions[k]} = 1\}$ \label{line:avc:definei} \Comment{taking the bitmask}
%		\WUntil{$(\nexists k \in I \mid \ms{proposals[k] = \perp})$}  \label{line:avc:waitforreturn} \Comment{waiting for the ARB-delivery of values in $I$} \EndWUntil
		\State $\lit{AVC-decide}[ID](\ms{decided\_ones})$ \label{line:avc:returnprops}
	}\EndPart
%	\Upon{the receipt of 
%	\Upon{the receipt of $(b, r, i)$}
%	\State{Wait until $H^{-1}(i)$ is ARB-delivered}
%	\If{$(i \notin J) \land (b = 1)$}
%	\State $J \leftarrow J \cup i$
%	\EndIf
%	\State Label a $\ms{BIN\_CONS}$ instance $i$
%	\State{Input $(b, r, i)$ into $\ms{BIN\_CONS[i]}$}
%	\EndUpon
	\Statex
	\Upon{invocation of $\lit{AARB-deliver}[\ms{ID}](m, \sigma)$}
	\State $\ms{labelled}[H(m \mid \mid \sigma)] \leftarrow \ms{unlabelled}.\lit{pop}()$ \label{line:avc:labelinst}
	\If{$\lit{valid}(m, \sigma)$} \Comment{deterministic, common validity function} \label{line:avc:ifvalid}
	\State $proposals[H(m \mid \mid \sigma)] \leftarrow (m, \sigma)$ \label{line:avc:addtoprops}
	\State Invoke $\ms{labelled}[H(m \mid \mid \sigma)].\lit{bin\_propose}(1)$ if not yet invoked \label{line:avc:invokebpone}
	\EndIf
	\EndUpon
	\Statex
	\Upon{$inst$ deciding a value $v \in \{0, 1\}$, where $inst \in \ms{labelled}.\lit{values}() \cup \ms{unlabelled}$} \label{line:avc:bindecision}
	\If{$v = 1$} \Comment{store proposals for which 1 was decided in the corresponding binary consensus}
	\State $\ms{decided\_ones} \leftarrow \ms{decided\_ones} \cup \{ \ms{proposals}[inst.\lit{key}()]\}$ \label{line:avc:addone}
	\EndIf
	\State $\ms{decision\_count} \leftarrow \ms{decision\_count} + 1$ \label{line:avc:incdeccount}
%	\If{$v = 1$ $\land$ $|1$'s in \ms{bin\_decisions}$| < n - t$} $J \leftarrow J \cup H(k)$ \EndIf
%	\If{$inst \in \ms{labelled}.values()$} $\ms{bin\_decisions[i]} \leftarrow v$ \EndIf \label{line:avc:depositbdprop}
	\EndUpon
%	\Upon{ARB-delivery of $msg$}
%	\State $v \leftarrow$ $valid(v)$ $?$ $1 : 0$
%	\State Label an instance $\ms{BIN\_CONS}[H(msg)]$
%	\EndUpon
	\algstore{alg:avc}
	\end{algorithmic}}
\end{algorithm}

%\vspace{0.5em}\noindent{\bf Reduction.}
\paragraph{Reduction.}
%relying on $n$ instances of binary consensus (that can execute concurrently).
%Structurally, the reduction is very similar to the reduction from multivalued consensus in DBFT \cite{CGLR18}, which in turn was initially proposed by Ben-Or \cite{Ben-Or1994}, relying on $n$ instances of binary consensus (that can execute concurrently). 
%Each process is equipped with the operation $\lit{bin\_propose}(v)$, where $v \in \{0, 1\}$, corresponding to invoking $\ms{BIN\_CONS}$ with input $v$. Each process is also equipped with operations, $\lit{ARB-broadcast}$ and $\lit{ARB-deliver}$, defined in Section~\ref{sec:arb:arb}
%All processes have access to the same deterministic, local operation 
%and $\lit{valid}()$ for disseminating the validity of ARB-delivered proposals.
%At the protocol's outset,
In the reduction, $n$ (initially unlabelled) instances of binary consensus are used, each corresponding to a value that one process in $P$ may propose.
Each (non-faulty) process invokes AARBP with respect to $\ms{ID}$ and their value $m'$ (line~\ref{line:avc:arbbcast}), anonymously broadcasting $(m', \sigma')$ therein.
On AARB-delivery of some message $(m, \sigma)$, an unlabelled instance of binary consensus is deposited into $\ms{labelled}$, whose key (label) is set to $H(m \mid \mid \sigma)$ (line \ref{line:avc:labelinst}).
Proposals that fulfil $\lit{valid}()$ are stored in $\ms{proposals}$ (line~\ref{line:avc:addtoprops}), and $inst.\lit{bin\_propose}(1)$ is invoked with respect to the newly labelled instance $inst = \ms{labelled}[H(m \mid \mid \sigma)]$ if not yet done (line \ref{line:avc:invokebpone}).
Upon termination of each instance (line \ref{line:avc:bindecision}), provided 1 was decided, the corresponding proposal is added to $\ms{decided\_ones}$ (line~\ref{line:avc:addone}).
For either decision value, $\ms{decision\_count}$ is incremented (line~\ref{line:avc:incdeccount}).
Once 1 has been decided in $n - t$ instances of binary consensus, processes will propose 0 in all instances that they have not yet proposed in (line~\ref{line:avc:binpropzeros}).
Note that upon AARB-delivery of valid messages after this point, $\lit{bin\_propose}(1)$ is not invoked at line \ref{line:avc:invokebpone}.
Upon the termination of all $n$ instances of binary consensus (after line~\ref{line:avc:nbindecs}), all non-faulty processes decide their set of values for which 1 was decided in the corresponding instance of binary consensus (line~\ref{line:avc:returnprops}).

\begin{algorithm}[t]
	% state in pseudo-code
	{\footnotesize
%	\begin{multicols}{2}
	\caption{AVCP (2 of 2): Handler of Algorithm~\ref{alg:obc}}
	\label{alg:bchandler}
	\begin{algorithmic}
	\algrestore{alg:avc}
%	\Upon{invoking ``$\lit{broadcast}$ $(\ms{ID}, $\EST$, r, \ms{label}, b)$'' (line~\ref{line:obc:bcastinbv} in Algorithm~\ref{alg:obc}) in \Statex $\ms{inst} \in \ms{labelled}.\lit{values}() \cup \ms{unlabelled}$} \label{line:avc:uponbcastest}
	\Upon{``$\lit{broadcast}$ $(\ms{ID}, $\EST$, r, \ms{label}, b)$'' in $\ms{inst} \in \ms{labelled}.\lit{values}() \cup \ms{unlabelled}$} \label{line:avc:uponbcastest}
%	\If{$inst \in \ms{labelled}$} \label{line:avc:instlabelledest}
%	\EndIf
	\If{b = 1} \label{line:avc:ifboneest}
	\State $\lit{broadcast}$ $(\ms{ID}, $\EST$, r, \ms{label}, b)$ \label{line:avc:bcastest}
	\State $\ms{ones}[$\EST$][r] \leftarrow \ms{ones}[$\EST$][r] \cup \{\ms{inst}.\lit{key}()\}$ \label{line:avc:appendonesest}
	\EndIf
	\State $\ms{counts}[$\EST$][r] \leftarrow \ms{counts}[$\EST$][r] + 1$ \label{line:avc:inccountest}
	\If{$\ms{counts}[$\EST$][r] = n \land |ones[$\EST $][r]| < n$} \label{line:avc:ifnest}
	\State $\lit{broadcast}$ $(\ms{ID}, $\ESTONES$, r, ones[$\EST $][r])$ \label{line:avc:bcastestones}
	\EndIf
	\EndUpon
%	\Statex TODO: include this Statex?
%	\Upon{invoking ``$\lit{broadcast}$ $(\ms{ID}, $\AUX$, r, \ms{label}, b)$'' (line~\ref{line:obc:bcastaux} in Algorithm~\ref{alg:obc}) in \Statex $\ms{inst} \in \ms{labelled}.\lit{values}() \cup \ms{unlabelled}$} \label{line:avc:uponbcastaux}
	\Upon{``$\lit{broadcast}$ $(\ms{ID}, $\AUX$, r, \ms{label}, b)$'' in $\ms{inst} \in \ms{labelled}.\lit{values}() \cup \ms{unlabelled}$} \label{line:avc:uponbcastaux}
%	\If{$inst \in \ms{labelled}$}
%	\EndIf
	\If{b = 1}
	\State $\lit{broadcast}$ $(\ms{ID}, $\AUX$, r, \ms{label}, b)$ \label{line:avc:bcastaux}
	\State $\ms{ones}[$\AUX$][r] \leftarrow \ms{ones}[$\AUX$][r] \cup \{\ms{inst}.\lit{key}()\}$

	\EndIf
	\State $\ms{counts}[$\AUX$][r] \leftarrow \ms{counts}[$\AUX$][r] + 1$
	\If{$\ms{counts}[$\AUX$][r] = n \land |ones[$\AUX $][r]| < n$}
	\State $\lit{broadcast}$ $(\ms{ID}, $\AUXONES$, r, ones[$\AUX $][r])$
	\EndIf
	\EndUpon
	\Upon{receipt of $(\ms{ID}, $\TAG$, r, ones)$ s.t. \TAG \ $\in \{$\ESTONES $,$ \AUXONES $\}$}
	\WUntil $one \in \ms{labelled}.\lit{keys}()$ $\forall one \in ones$  \EndWUntil \label{line:avc:wuntillabel}
	\If{\TAG \ $=$ \ESTONES}
	\State \TEMP \ $\leftarrow$ \EST
	\Else \ \TEMP \ $\leftarrow$ \AUX
	\EndIf
	\For{\textbf{each} $l \in \ms{labelled}.\lit{keys}()$ such that $l \notin ones$}
	\State deliver $(\ms{ID}, $\TEMP$, r, l, 0)$ in $\ms{labelled}[l]$ \label{line:avc:delivertemp1}
	\EndFor
	\For{\textbf{each} $inst \in \ms{unlabelled}$} \label{line:avc:forlooptemp2}
	\State deliver $(\ms{ID}, $\TEMP$, r, \perp, 0)$ in $inst$ \label{line:avc:delivertemp2}
	\EndFor
	\EndUpon
%	\Upon{receipt of $(${\scriptsize $\lit{MSG}$}$, 0, r, \neg (S = \{l_1, \dots, l_k\}))$, where {\scriptsize $\lit{MSG}$} $\in$ $\{${\scriptsize $\lit{EST}$}, {\scriptsize $\lit{AUX}$}$\}$} \label{line:bchandler:receiptzeros}
%	\If{$|S| \geq n - t$} needed?
%	\WUntil for all $l \in S$, for some $i \in 1..n$, $\exists$ $\ms{BIN\_CONS}[i, label]$ s.t. $label = l$ \label{line:bchandler:waitzeroarb} \EndWUntil
%	\For{each $i \in 1..n$ s.t. $\ms{BIN\_CONS}[i, label]$ is s.t. $label \notin (S \cup \{\perp\})$} \Comment{each $label \in S^c$} \label{line:bchandler:zerodep}
%	\State Consider {\scriptsize $\lit{MSG}$}$[r, label](0)$ as receipt in $\ms{BIN\_CONS[i, label]}$
%	\EndFor
%	\State Atomically consider $MSG[r,label](0)$ as receipt in all $\ms{BIN\_CONS}[i, label]$ where $label = \perp$ \label{line:bchandler:zeroperpdep}
%	\EndUpon
	\end{algorithmic}
%	\end{multicols}
}
\end{algorithm}

%\vspace{0.5em}\noindent{\bf Handler.}
\paragraph{Handler.}
%\subsubsection{Handler}
As proposals are anonymously broadcast, binary consensus instances cannot be associated with process identifiers a priori, and so are labelled by AARB-delivered messages.
Thus, we require the handler, which overrides two of the three $\lit{broadcast}$ calls in the non-terminating variant of the binary consensus of \cite{CGLR18} (Algorithm~\ref{alg:obc}).
%As the third $\lit{broadcast}$ is only made in labelled binary consensus instances, it does not require handling.
\ifconference
\else We defer the reader to Appendix \ref{sec:bvbincons} for a description of the non-terminating algorithm and the terminating variant that requires handling.
\fi

%The handler serves to transform the sending and receiving of values in the binary consensus algorithm -- Algorithm~\ref{alg:obc} -- such that processes may progress in instances without having ARB-delivered a particular process' message.
%By assumption of $t < \frac{n}{3}$, where for non-trivial $n$ we may have $t > 0$, a process may be faulty and thus never ARB-broadcast their proposal.
%Similarly, a proposal may not be ARB-broadcast or ARB-delivered in a timely fashion.
%TODO: clean up later

We now describe the handler (Algorithm~\ref{alg:bchandler}).
Let $inst$ be an instance of binary consensus.
On calling $inst.\lit{bin\_propose}(b)$ ($b \in \{0, 1\}$) (and at the beginning of each round $r \geq 1$), processes invoke BV-broadcast (line~\ref{line:obc:bvbcast} of Algorithm~\ref{alg:obc}), immediately calling ``broadcast $(\ms{ID}, $\EST$, r, \ms{label}, b)$'' (line~\ref{line:obc:bcastinbv} of Algorithm~\ref{alg:obc}).
%In particular, messages where $\ms{label} = \perp$ are not sent by non-faulty processes.
%We note that messages of the form $(\ms{ID}, $\EST$, r, \ms{label}, b)$, where \TAG \ $\in \{$\EST$,$ \AUX$\}$ are buffered by processes until $\ms{label}$ labels an instance of binary consensus (after ARB-delivery of the corresponding message).
If $b = 1$, $(\ms{ID}, $\EST$, r, \ms{label}, 1)$ is broadcast, and $\ms{label}$ is added to the set $\ms{ones}[$\EST$][r]$ (line \ref{line:avc:appendonesest}).
Note that, given AARB-Termination-2, all messages sent by non-faulty processes of the form $(\ms{ID}, $\EST$, r, \ms{label}, 1)$ will be deposited in an instance $\ms{inst}$ labelled by $\ms{label}$.
Then, as the binary consensus routine terminates when all non-faulty processes propose the same value, all processes will decide the value 1 in $n - t$ instances of binary consensus (i.e. will pass line~\ref{line:avc:wuntilnminust}), after which they execute $\lit{bin\_propose}(0)$ in the remaining instances of binary consensus.

%$If $b = 1$ (line \ref{line:avc:ifboneest}), non-faulty processes broadcast $(\ms{ID}, $\EST$, r, \ms{label}, 1)$ and add \ms{label} %Note that when $b = 1$, $\ms{label} \neq \perp$, as non-faulty processes only invoke $\lit{bin\_propose}(1)$ and can propose values $b = 1$ once $inst \in \ms{labelled}.\lit{values}()$.
%For each $b \in \{0, 1\}$, $\ms{counts}[$\EST$][r]$ -- corresponding to the number of ``$\lit{broadcast}$ $(\ms{ID}, $\EST$, r, \ms{label}, b)$'' calls at line~\ref{line:obc:bcastinbv} of Algorithm~\ref{alg:obc} made in round $r$ over all instances -- is incremented (line \ref{line:avc:inccountest}).
Since these instances may not be labelled when a process wishes to broadcast a value of the form $(\ms{ID}, $\EST$, r, label, 0)$, we defer their broadcast until ``$\lit{broadcast}$ $(\ms{ID}, $\EST$, r, label, b)$'' is called in all $n$ instances of binary consensus.
%At this point, the set of instances that they wish to propose zero in is defined by the complement of all instances that they proposed one in.
At this point (line \ref{line:avc:ifnest}), $(\ms{ID}, $\ESTONES$, r, \ms{ones}[EST][r])$ is broadcast (line \ref{line:avc:bcastestones}).
A message of the form $(\ms{ID}, $\ESTONES$, r, \ms{ones})$ is interpreted as the receipt of zeros in all instances not labelled by elements in $\ms{ones}$ (at lines \ref{line:avc:delivertemp1} and \ref{line:avc:delivertemp2}).
This can only be done once all elements of $\ms{ones}$ label instances of binary consensus (i.e., after line~\ref{line:avc:wuntillabel}).
Note that if $|\ms{ones}[$\EST$][r] = n|$, then there are no zeroes to be processed by receiving processes, and so the broadcast at line \ref{line:avc:bcastestones} can be skipped.

Handling ``$\lit{broadcast}$ $(\ms{ID}, $\AUX$, r, \ms{label}, b)$'' calls (line~\ref{line:obc:bcastaux} of Algorithm~\ref{alg:obc}) is identical to the handling of initial ``$\lit{broadcast}$ $(\ms{ID}, $\EST$, r, \ms{label}, b)$'' calls.
Note that the third broadcast in the original algorithm, where $(\ms{ID}, $\EST$, r, \ms{label}, b)$ is broadcast upon receipt from $t + 1$ processes if not yet done before (line~\ref{line:obc:tplusone} of Algorithm~\ref{alg:obc} (BV-Broadcast)), can only occur once the corresponding instance of binary consensus is labelled.
Thus, it does not need to be handled. From here, we can see that messages in the handler are processed as if $n$ instances of the original binary consensus algorithm were executing.
\begin{table}[ht]
\centering
%{\footnotesize
\setlength{\tabcolsep}{18pt}
\caption{Comparing the complexity of AVCP and DBFT~\cite{CGLR18} after GST~\cite{Dwork1988}}
\vspace{5pt}
\begin{tabular}{|l|c|c|}
\hline
\textbf{Complexity} & \textbf{AVCP} & \textbf{DBFT} \\ \hline
Fault-free message complexity & $O(n^3)$    & $O(n^3)$    \\
Worst-case message complexity & $O(tn^3)$    & $O(tn^3)$    \\
Fault-free bit complexity &  $O((S + c)n^3)$ %$O(cn^3 + kn^3)$    
& $O(n^3)$    \\
Worst-case bit complexity & $O((S + c)tn^3)$ %$O(kn^4 + ctn^3)$    
& $O(tn^3)$    \\
\hline
\end{tabular}
%}
\end{table}
%\vspace{0.5em}\noindent{\bf Complexity and optimizations.}
%\paragraph{Complexity and optimizations.}
\subsection{Complexity and optimizations}
Let $k$ be a security parameter, $S$ the size of a signature and $c$ the size of a message digest.
We compare the message and communication complexity of AVCP with DBFT \cite{CGLR18}, which, as written, can be easily altered to solve vector consensus. We assume that AVCP is invoking the terminating variant of the binary consensus of \cite{CGLR18}. When considering complexity, we only count messages in the binary consensus routines once the global stabilisation time (GST) has been reached~\cite{Dwork1988}.
Both fault-free and worst-case message complexity are identical between the two protocols.
We remark that there exist runs of AVCP where processes are faulty with $O(n^3)$ message complexity, such as when a process has crashed.
AVCP mainly incurs greater communication complexity proportional to the size of the signatures, which can vary from size $O(k)$~\cite{Tsang2005a, gu2019efficient} to $O(kn)$~\cite{Fujisaki2011}.
If processes make a single anonymous broadcast per run, the fault-free and worst-case bit complexities of AVCP are lowered to $O(Sn^2 + cn^3)$ and $O(Sn^2 + ctn^3)$.
As is done in DBFT \cite{CGLR18}, we can combine the anonymity-preserving all-to-all reliable broadcast of a message $m$ and the proposal of the binary value 1 in the first round of a binary consensus instance.
To this end, a process may skip the BV-broadcast step in round 1, which may allow AVCP to converge in four message steps, at least one of which must be anonymous.
It may be useful to invoke ``$\lit{broadcast}$ \TAG$[r](b)$'', where \TAG \ $\in \{$\EST$,$\AUX$\}$ (lines \ref{line:avc:bcastest} and \ref{line:avc:bcastaux}) when the instance of binary consensus is labelled, rather than simply when $b = 1$ (i.e., the condition preceding these calls).
%Suppose that a (non-faulty) process $p$ invokes AVC\_BV\_broadcast (or indeed AVC\_AUX\_broadcast) in round $r$ with respect to 0, and that they have labelled the respective binary consensus instance with $label$. As written, zero values are only broadcast when AVC\_BV\_broadcast or AVC\_AUX\_broadcast has been invoked $n$ times in round $r$ (after lines \ref{line:bchandler:estcountn} and \ref{line:bchandler:auxcountn}). But, since $label$ is known to $p$, $p$ has the capacity to broadcast a message of the form \MSG$[r, label](0)$.
Since it may take some time for all $n$ instances of binary consensus to synchronise, doing this may speed up convergence in the ``faster'' instances.
%However, messages of the form \TAG$[r](ones)$, where \TAG \ $\in \{$\ESTONES $,$ \AUXONES$\}$, would then often contain a larger $ones$ set.

\ifconference
\else
\section{Combining regular and anonymous channels} \label{sec:anon}
If processes only use anonymous channels to communicate, it is clear that their anonymity is preserved provided that processes do not double-sign with ring signatures for each message type.
For performance and to prevent long-term correlation attacks on certain anonymous networks~\cite{Mathewson2004}, it may be of interest to use anonymous message passing to propose a value, and then to use regular channels for the rest of a given protocol execution.
% TODO: There are two conditions in which processes can be de-anonymised (i) (ii)
In this setting, the adversary can de-anonymise a non-faulty process by observing message transmission time~\cite{Back2001} and the order in which messages are sent and received~\cite{Raymond2001}.
For example, a single non-faulty process may be relatively slow, and so the adversary may deduce that messages it delivers late over anonymous channels were sent by that process.
\fi
\ifconference
\else  To cope, we make the following assumption about message passing: \\
%To cope, we posit \textit{two} assumptions:% adversarial behaviour that can de-anonymise a non-faulty process in our protocols:
%\begin{enumerate}
%[noitemsep,topsep=5pt,parsep=2pt,partopsep=5pt,itemindent=1em,listparindent=1em,leftmargin=1em]
%\item There exists a sufficiently large window of synchrony at the beginning of any execution of AARBP or AVCP in which all non-faulty processes broadcast and deliver their anonymous proposals.
%\item Non-faulty processes do not send any messages apart from anonymously broadcasting their signed proposal (at line~\ref{line:ab:initbcast} of Algorithm~\ref{alg:arb}) until the window of synchrony has elapsed. This implies that a previous instance of AVCP or AARBP must have terminated prior to the start of the synchrony window.
%\item (old) The order in which a process delivers a message anonymously broadcast is random among non-faulty processes. This mitigates attacks as in ``\textit{(i)}''.

%\item %Suppose that all non-faulty processes each invoke $\lit{anon\_broadcast}$ once in a given protocol's execution.
\noindent \textbf{Independence assumption:}
Consider processes in $P$ who participate in $k \geq 1$ instances of some distributed protocol where messages are exchanged over both regular and anonymous channels.
	   Let $X_i$ be a random variable that maps all invocations of $\lit{send}$ or $\lit{receive}$ by every process to the time that the operation was called in instance $i$.
	   Analogously, let $Y_i$ be a random variable, defined as above, but with respect to invocations of $\lit{anon\_send}$ and $\lit{anon\_receive}$.
	   Then $X_1, \dots, X_k, Y_1, \dots, Y_k$ are mutually independent. \\

	   \noindent This ensures that the adversary cannot correlate the behaviour of a process over regular channels with their behaviour over anonymous channels, and thus the adversary cannot de-anonymise them.
\fi
\ifconference
\else We show that AVCP and our protocol solving anonymity-preserving all-to-all reliable broadcast satisfy their respective definitions of anonymity under our assumption in Appendices \ref{sec:avcproof} and \ref{sec:arbproof}.
\fi
\ifconference
\else Our assumption is quite general, and so achieving it in practice depends on the latency guarantees of the anonymous channels, the speed of each process, and the latency guarantees of the regular channels.
One possible strategy could be to use public networks like Tor~\cite{Dingledine2004} where message transmission time through the network can be measured\footnote{\url{https://metrics.torproject.org/}}.
Then, based on the behaviour of the anonymous channels, processes can vary the timing of their own messages by introducing random message delays~\cite{Mathewson2004} to minimise the correlation between random variables.
It may also be useful for processes to synchronise between protocol executions.
This prevents a process from being de-anonymised when they, for example, invoke $\lit{anon\_send}$ in some instance when all other processes are executing in a different instance.
\fi

\section{Experiments} \label{sec:exp}
Benchmarks of distributed protocols were performed using Amazon EC2 instances.
We refer to each EC2 instance used as a \textit{node}, corresponding to a process in protocol descriptions.
For each value of $n$ (the number of nodes) chosen, we ran experiments with an equal number of nodes from \textit{four} regions: Oregon (us-west-2), Ohio (us-east-2), Singapore (ap-southeast-1) and Frankfurt (eu-central-1).
The type of instance chosen was c4.xlarge, which provide 7.5GiB of memory, and 4 vCPUs, i.e. 4 cores of an Intel Xeon E5-2666 v3 processor.
We performed between 50 and 60 iterations for each value of $n$ and $t$ benchmarked.
We vary $n$ incrementally, and vary $t$ both with respect to the maximum fault-tolerance (i.e. $t = \lfloor \frac{n - 1}{3} \rfloor$), and also fix $t = 6$ for values of $n = 20, 40, \dots$
%As such, we evaluate the effect of varying both $n$ and $t$.
All networking code, and the application logic, was written in Python (2.7).
As we have implemented our cryptosystems in golang, we call our libraries from Python using ctypes\footnote{\url{https://docs.python.org/2/library/ctypes.html}}.
To simulate reliable channels, nodes communicate over TCP.
Nodes begin timing once all connections have been established (i.e. after handshaking).

Our protocol, Anonymised Vector Consensus Protocol (AVCP), was implemented on top of the existing DBFT \cite{CGLR18} codebase, as was the case with \ifconference our implementation of AARBP.
\else AARBP.
\fi
We do not use the fast-path optimisation described in Section~\ref{sec:avc:avc}, but we hash messages during reliable broadcast to reduce bandwidth consumption.
%With respect to optimisations, we use of the aforementioned hashing optimisation in both AARBP and reliable broadcast~\cite{Bracha1987}, or RB-broadcast (which are primitives in AVCP and DBFT respectively), but do not make use of the fast-path optimisation described in Section~\ref{sec:avc:avc}.
We use the most conservative choice of ring signatures, $O(kn)$-sized traceable ring signatures~\cite{Fujisaki2007}, which require $O(n)$ operations for each signing and verification call, and $O(n^2)$ work for tracing overall.
Each process makes use of a single anonymous broadcast in each run of the algorithm.
To simulate the increased latency afforded by using publicly-deployed anonymous networks, processes invoke a local timeout for 750ms before invoking $\lit{anon\_broadcast}$, which is a regular broadcast in our experiments.

\begin{figure}[htbp]
\begin{subfigure}{0.5\textwidth}
    \centering
	\includegraphics[width=1.0\textwidth]{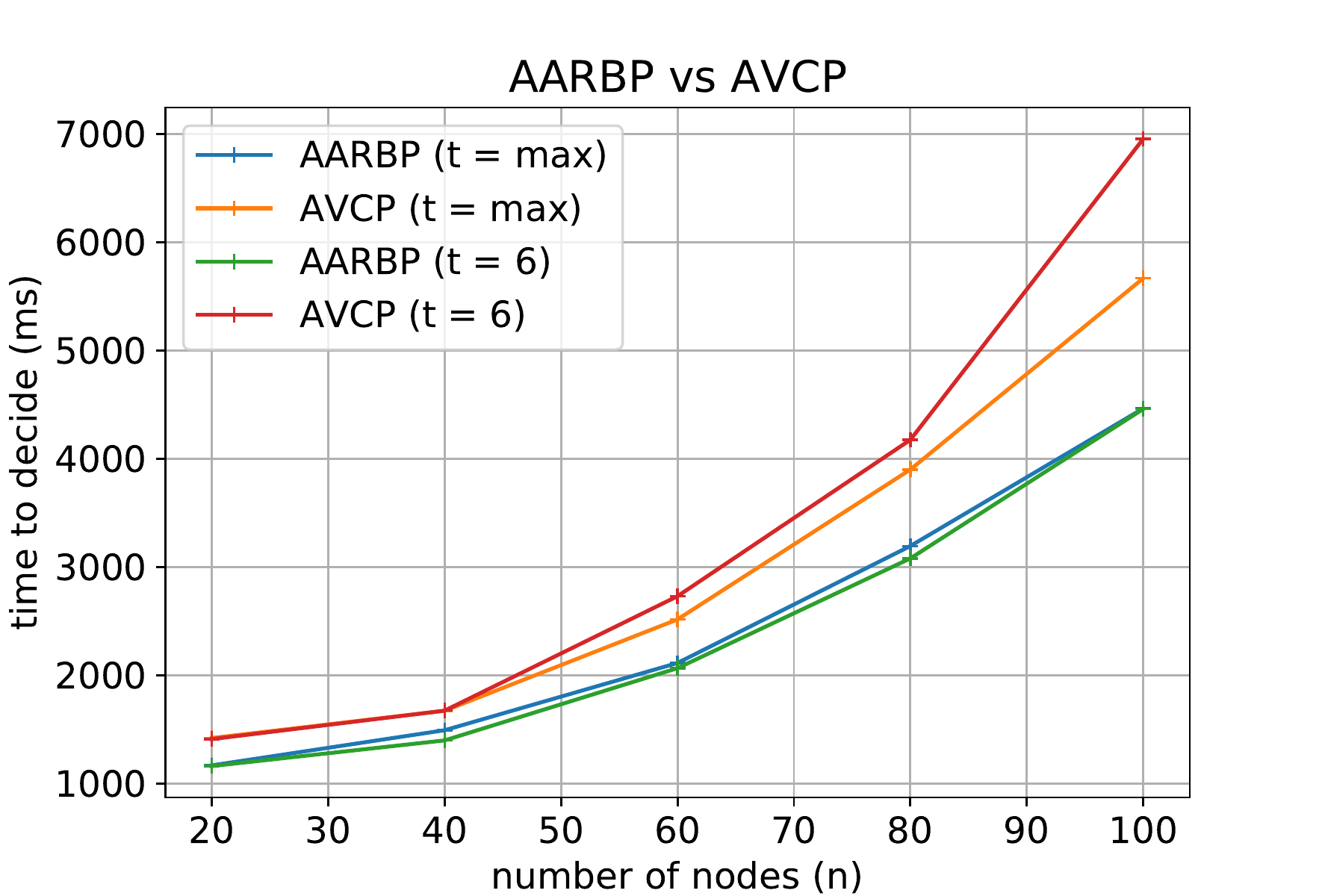}
	\caption{Comparing AARBP and AVCP}
    \label{fig:arb_vs_avc}
\end{subfigure}
\begin{subfigure}{0.5\textwidth}
    \centering
	\includegraphics[width=1.0\textwidth]{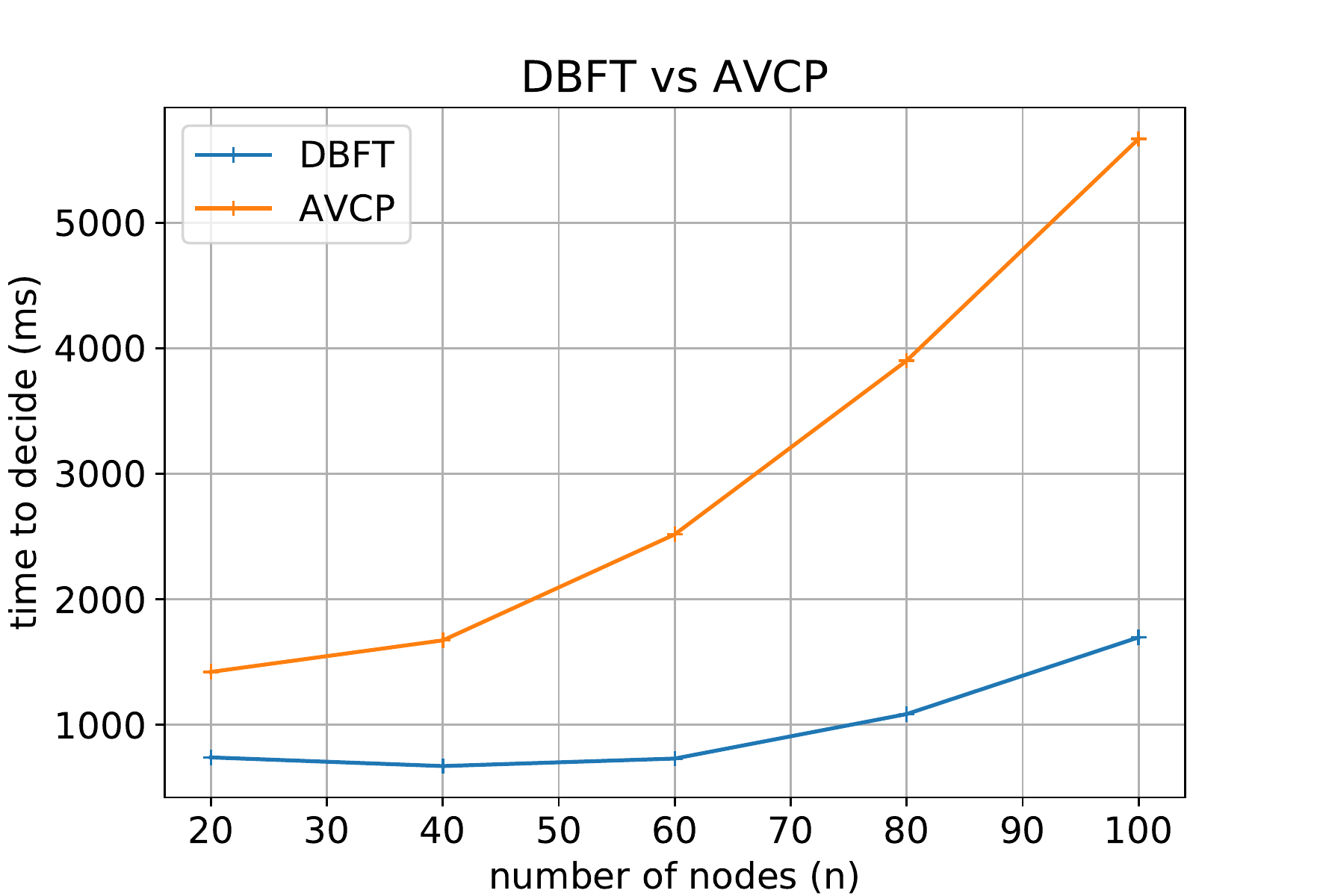}
	\caption{Comparing DBFT and AVCP}
	\label{fig:psync_avc}
\end{subfigure}
\caption{Evaluating the cost of the anonymous broadcast (AARBP) in our solution (AVCP) and the performance of our solution (AVCP) compared to an efficient Byzantine consensus baseline (DBFT) without anonymity preservation}
\end{figure}

Figure \ref{fig:arb_vs_avc} compares the performance of AARBP with that of AVCP. In general, convergence time for AVCP is higher as we need at least three more message steps for a process to decide. Given that the fast-path optimisation is used, requiring 1 additional message step over AARBP in the good case, the difference in performance between AVCP and AARBP would indeed be smaller.

Comparing AVCP with $t = max$ and $t = 6$, we see that when $t = 6$, convergence is slower. Indeed, AVC-Validity states at least $n - t$ values fulfilling valid() are included in a process' vector given that they decide. Consequently, as $t$ is smaller, $n - t$ is larger, and so nodes will process and decide more values. Although AARB-delivery may be faster for some messages, nodes generally have to perform more TRS verification/tracing operations. As nodes decide 1 in more instances of binary consensus, messages of the form $($\MSG$, 0, r, S = \{s_1, \cdots\})$ are propagated where $|S|$ is generally larger, slowing down decision time primarily due to size of the message.
\ifconference
\else We conjecture that nodes having to AARB-deliver all values in $S$ before processing such a message does not slow down performance, as all nodes are non-faulty in our experiments.
\fi

Figure \ref{fig:psync_avc} compares the performance of DBFT as a vector consensus vector routine with AVCP. Indeed, the difference in performance between AVCP and DBFT when $n = 20$ and $n = 40$ is primarily due to AVCP's 750ms timeout. As expected when scaling $n$, cryptographic operations result in worse scaling characteristics for AVCP.
\ifconference
\else As can be seen, DBFT performs relatively well. However, DBFT does not leverage anonymous channels, nor relies on ring signatures, and so AVCP's comparatively slow performance was expected. It is reassuring that AVCP's performance does not differ by an order of magnitude from that of DBFT, given AVCP provides anonymity guarantees.

Overall, AVCP performs reasonably well. Interestingly, AVCP performs better when $t$ is set as the maximum possible value, and so is best used in practice when maximising fault tolerance. Nevertheless, converging when $n = 100$ takes between 5 and 7 seconds, depending on $t$, which is practically reasonable. 
\fi
\ifconference

	Overall, AVCP performs reasonably well, reaching convergence when $n = 100$ between 5 and 7 seconds depending on $t$, which is practically reasonable.
\else \fi

\ifconference
\section{Discussion} \label{sec:conclusion}
It is clear that anonymity is preserved if processes only use anonymous channels to communicate provided that processes do not double-sign with ring signatures for each message type.
For performance and to prevent long-term correlation attacks on certain anonymous networks~\cite{Mathewson2004}, it may be of interest to use anonymous message passing to propose a value, and then to use regular channels for the rest of a given protocol execution.
% TODO: There are two conditions in which processes can be de-anonymised (i) (ii)
In this setting, the adversary can de-anonymise a non-faulty process by observing message transmission time~\cite{Back2001} and the order in which messages are sent and received~\cite{Raymond2001}.
For example, a single non-faulty process may be relatively slow, and so the adversary may deduce that messages it delivers late over anonymous channels were sent by that process.

Achieving anonymity in this setting in practice depends on the latency guarantees of the anonymous channels, the speed of each process, and the latency guarantees of the regular channels.
One possible strategy could be to use public networks like Tor~\cite{Dingledine2004} where message transmission time through the network can be measured.\footnote{\url{https://metrics.torproject.org/}}
Then, based on the behaviour of the anonymous channels, processes can vary the timing of their own messages by introducing random message delays~\cite{Mathewson2004} to minimise the correlation between messages over the different channels.
It may also be useful for processes to synchronise between protocol executions.
This prevents a process from being de-anonymised when they, for example, invoke $\lit{anon\_send}$ in some instance when all other processes are executing in a different instance.

%To conclude, we have presented modular and efficient distributed protocols which allow identified processes to propose values anonymously.
%Importantly, anonymity-preserving vector consensus ensures that the proposals of non-faulty processes are not tied to their identity, which has applications in electronic voting.
%We proposed definitions of anonymity in the context of Byzantine fault-tolerant computing and corresponding assumptions to mitigate the de-anonymisation of identified processes.
In terms of future work, it is of interest to evaluate anonymity in different formal models~\cite{Serjantov2002, Halpern2005} and with respect to various practical attack vectors~\cite{Raymond2001}.
It will be useful also to formalise anonymity under more practical assumptions so that the timing of anonymous and regular message passing do not correlate highly.
In addition, a reduction to a randomized~\cite{Cachin2005} binary consensus algorithm would remove the dependency on the weak coordinator, a form of leader used in each round of the binary consensus algorithm we rely on~\cite{CGLR18}.
\else
\section{Conclusion} \label{sec:conclusion}
In this paper, we have presented modular and efficient distributed protocols which allow identified processes to propose values anonymously.
Importantly, anonymity-preserving vector consensus ensures that the proposals of non-faulty processes are not tied to their identity, which has applications in electronic voting.
We proposed definitions of anonymity in the context of Byzantine fault-tolerant computing and corresponding assumptions to mitigate the de-anonymisation of identified processes.
In terms of future work, it is of interest to evaluate anonymity in different formal models~\cite{Serjantov2002, Halpern2005} and with respect to various practical attack vectors~\cite{Raymond2001}.
It will be useful also to formalise anonymity under more practical assumptions so that the timing of anonymous and regular message passing do not correlate highly.
In addition, a reduction to a randomized~\cite{Cachin2005} binary consensus algorithm would remove the dependency on the weak coordinator, a form of leader used in each round of the binary consensus algorithm we rely on~\cite{CGLR18}.
\fi

\bibliographystyle{plain}
{
%\setstretch{1.25}
%\cleardoublepage
\phantomsection
\bibliography{db}

\begin{thebibliography}{10}

\bibitem{Ristretto}
The ristretto group.
\newblock \url{https://ristretto.group/}, 2018.
\newblock Accessed: 2018-11-03.

\bibitem{Adida2008}
Ben Adida.
\newblock Helios: Web-based open-audit voting.
\newblock In {\em Proceedings of the 17th Conference on Security Symposium},
  SS'08, pages 335--348, Berkeley, CA, USA, 2008. USENIX Association.

\bibitem{Back2001}
Adam Back, Ulf M{\"o}ller, and Anton Stiglic.
\newblock Traffic analysis attacks and trade-offs in anonymity providing
  systems.
\newblock In {\em International Workshop on Information Hiding}, pages
  245--257. Springer, 2001.

\bibitem{Bellare1993}
Mihir Bellare and Phillip Rogaway.
\newblock Random oracles are practical: A paradigm for designing efficient
  protocols.
\newblock In {\em Proceedings of the 1st ACM conference on Computer and
  communications security}, pages 62--73. ACM, 1993.

\bibitem{Ben-Or1994}
Michael Ben-Or, Boaz Kelmer, and Tal Rabin.
\newblock Asynchronous secure computations with optimal resilience (extended
  abstract).
\newblock In {\em Proceedings of the Thirteenth Annual ACM Symposium on
  Principles of Distributed Computing}, PODC '94, pages 183--192, New York, NY,
  USA, 1994. ACM.

\bibitem{Bernstein2006}
Daniel~J Bernstein.
\newblock Curve25519: new diffie-hellman speed records.
\newblock In {\em International Workshop on Public Key Cryptography}, pages
  207--228. Springer, 2006.

\bibitem{Bracha1987}
Gabriel Bracha.
\newblock Asynchronous byzantine agreement protocols.
\newblock {\em Information and Computation}, 75(2):130 -- 143, 1987.

\bibitem{Bracha1983}
Gabriel Bracha and Sam Toueg.
\newblock Resilient consensus protocols.
\newblock In {\em Proceedings of the second annual ACM symposium on Principles
  of distributed computing}, pages 12--26. ACM, 1983.

\bibitem{Bracha1985}
Gabriel Bracha and Sam Toueg.
\newblock Asynchronous consensus and broadcast protocols.
\newblock {\em Journal of the ACM (JACM)}, 32(4):824--840, 1985.

\bibitem{brakerski2014efficient}
Zvika Brakerski and Vinod Vaikuntanathan.
\newblock Efficient fully homomorphic encryption from (standard) lwe.
\newblock {\em SIAM Journal on Computing}, 43(2):831--871, 2014.

\bibitem{Brier2002}
Eric Brier and Marc Joye.
\newblock Weierstra{\ss} elliptic curves and side-channel attacks.
\newblock In {\em International Workshop on Public Key Cryptography}, pages
  335--345. Springer, 2002.

\bibitem{CKPS01}
Christian Cachin, Klaus Kursawe, Frank Petzold, and Victor Shoup.
\newblock Secure and efficient asynchronous broadcast protocols.
\newblock In {\em Proceeding of the 21st Annual International Cryptology
  Conference on Advances in Cryptology (CRYPTO)}, pages 524--541, 2001.

\bibitem{Cachin2005}
Christian Cachin, Klaus Kursawe, and Victor Shoup.
\newblock Random oracles in {Constantinople}: Practical asynchronous
  {Byzantine} agreement using cryptography.
\newblock {\em Journal of Cryptology}, 18(3):219--246, 2005.

\bibitem{Camp1996}
Jean Camp, Michael Harkavy, J.~D. Tygar, and Bennet Yee.
\newblock Anonymous atomic transactions.
\newblock In {\em In Proceedings of the 2nd USENIX Workshop on Electronic
  Commerce (Nov.), USENIX Assoc}, 1996.

\bibitem{Castro1999}
Miguel Castro and Barbara Liskov.
\newblock Practical byzantine fault tolerance.
\newblock In {\em OSDI}, volume~99, pages 173--186, 1999.

\bibitem{Chaum1983}
David Chaum.
\newblock Blind signatures for untraceable payments.
\newblock In {\em Advances in cryptology}, pages 199--203. Springer, 1983.

\bibitem{Chaum1988}
David Chaum.
\newblock The dining cryptographers problem: Unconditional sender and recipient
  untraceability.
\newblock {\em Journal of cryptology}, 1(1):65--75, 1988.

\bibitem{Chaum1981}
David~L. Chaum.
\newblock Untraceable electronic mail, return addresses, and digital
  pseudonyms.
\newblock {\em Commun. ACM}, 24(2):84--90, February 1981.

\bibitem{Correia2006}
Miguel Correia, Nuno~Ferreira Neves, and Paulo Ver{\'\i}ssimo.
\newblock From consensus to atomic broadcast: Time-free byzantine-resistant
  protocols without signatures.
\newblock {\em The Computer Journal}, 49(1):82--96, 2006.

\bibitem{CGLR18}
Tyler Crain, Vincent Gramoli, Mikel Larrea, and Michel Raynal.
\newblock {DBFT:} efficient leaderless {Byzantine} consensus and its
  application to blockchains.
\newblock In {\em Proceedings of the 17th {IEEE} International Symposium on
  Network Computing and Applications, {NCA} 2018}, pages 1--8, 2018.

\bibitem{Cramer1996}
Ronald Cramer, Matthew Franklin, Berry Schoenmakers, and Moti Yung.
\newblock Multi-authority secret-ballot elections with linear work.
\newblock In {\em International Conference on the Theory and Applications of
  Cryptographic Techniques}, pages 72--83. Springer, 1996.

\bibitem{Danezis2008}
George Danezis and Claudia Diaz.
\newblock A survey of anonymous communication channels.
\newblock Technical report, Technical Report MSR-TR-2008-35, Microsoft
  Research, 2008.

\bibitem{Desmedt1992}
Yvo Desmedt.
\newblock Threshold cryptosystems.
\newblock In {\em International Workshop on the Theory and Application of
  Cryptographic Techniques}, pages 1--14. Springer, 1992.

\bibitem{Diamantopoulos2015}
Panos Diamantopoulos, Stathis Maneas, Christos Patsonakis, Nikos Chondros, and
  Mema Roussopoulos.
\newblock Interactive consistency in practical, mostly-asynchronous systems.
\newblock In {\em Parallel and Distributed Systems (ICPADS), 2015 IEEE 21st
  International Conference on}, pages 752--759. IEEE, 2015.

\bibitem{Dingledine2004}
Roger Dingledine, Nick Mathewson, and Paul Syverson.
\newblock Tor: The second-generation onion router.
\newblock In {\em Proceedings of the 13th Conference on USENIX Security
  Symposium - Volume 13}, SSYM'04, pages 21--21, Berkeley, CA, USA, 2004.
  USENIX Association.

\bibitem{Doudou1997}
Assia Doudou and Andr{\'e} Schiper.
\newblock Muteness failure detectors for consensus with byzantine processes.
\newblock Technical report, in Proceedings of the 17th ACM Symposium on
  Principle of Distributed Computing, (Puerto), 1997.

\bibitem{Dwork1988}
Cynthia Dwork, Nancy Lynch, and Larry Stockmeyer.
\newblock Consensus in the presence of partial synchrony.
\newblock {\em Journal of the ACM (JACM)}, 35(2):288--323, 1988.

\bibitem{elgamal1985public}
Taher ElGamal.
\newblock A public key cryptosystem and a signature scheme based on discrete
  logarithms.
\newblock {\em IEEE transactions on information theory}, 31(4):469--472, 1985.

\bibitem{Fischer1985}
Michael~J. Fischer, Nancy~A. Lynch, and Michael~S. Paterson.
\newblock Impossibility of distributed consensus with one faulty process.
\newblock {\em J. ACM}, 32(2):374--382, April 1985.

\bibitem{freedman2004efficient}
Michael~J Freedman, Kobbi Nissim, and Benny Pinkas.
\newblock Efficient private matching and set intersection.
\newblock In {\em International conference on the theory and applications of
  cryptographic techniques}, pages 1--19. Springer, 2004.

\bibitem{Fujisaki2011}
Eiichiro Fujisaki.
\newblock Sub-linear size traceable ring signatures without random oracles.
\newblock In {\em IEICE Transactions on Fundamentals of Electronics
  Communications and Computer Sciences}, volume E95.A, pages 393--415, 04 2011.

\bibitem{Fujisaki2007}
Eiichiro Fujisaki and Koutarou Suzuki.
\newblock Traceable ring signature.
\newblock In Tatsuaki Okamoto and Xiaoyun Wang, editors, {\em Public Key
  Cryptography -- PKC 2007}, pages 181--200, Berlin, Heidelberg, 2007. Springer
  Berlin Heidelberg.

\bibitem{gentry2009fully}
Craig Gentry.
\newblock Fully homomorphic encryption using ideal lattices.
\newblock In {\em Proceedings of the forty-first annual ACM symposium on Theory
  of computing}, pages 169--178, 2009.

\bibitem{Gilad2012}
Yossi Gilad and Amir Herzberg.
\newblock Spying in the dark: Tcp and tor traffic analysis.
\newblock In Simone Fischer-H{\"u}bner and Matthew Wright, editors, {\em
  Privacy Enhancing Technologies}, pages 100--119, Berlin, Heidelberg, 2012.
  Springer Berlin Heidelberg.

\bibitem{Golle2004}
Philippe Golle, Markus Jakobsson, Ari Juels, and Paul Syverson.
\newblock Universal re-encryption for mixnets.
\newblock In Tatsuaki Okamoto, editor, {\em Topics in Cryptology -- CT-RSA
  2004}, pages 163--178, Berlin, Heidelberg, 2004. Springer Berlin Heidelberg.

\bibitem{Golle2004a}
Philippe Golle and Ari Juels.
\newblock Dining cryptographers revisited.
\newblock In {\em International Conference on the Theory and Applications of
  Cryptographic Techniques}, pages 456--473. Springer, 2004.

\bibitem{Groth2004}
Jens Groth.
\newblock Efficient maximal privacy in boardroom voting and anonymous
  broadcast.
\newblock In Ari Juels, editor, {\em Financial Cryptography}, pages 90--104,
  Berlin, Heidelberg, 2004. Springer Berlin Heidelberg.

\bibitem{gu2019efficient}
Ke~Gu, Xinying Dong, and Linyu Wang.
\newblock Efficient traceable ring signature scheme without pairings.
\newblock {\em Advances in Mathematics of Communications}, page~0, 2019.

\bibitem{Halpern2005}
Joseph~Y Halpern and Kevin~R O'Neill.
\newblock Anonymity and information hiding in multiagent systems.
\newblock {\em Journal of Computer Security}, 13(3):483--514, 2005.

\bibitem{Hamburg2015}
Mike Hamburg.
\newblock Decaf: Eliminating cofactors through point compression.
\newblock In {\em Annual Cryptology Conference}, pages 705--723. Springer,
  2015.

\bibitem{Jakobsson1998}
Markus Jakobsson.
\newblock A practical mix.
\newblock In {\em International Conference on the Theory and Applications of
  Cryptographic Techniques}, pages 448--461. Springer, 1998.

\bibitem{Juels2010}
Ari Juels, Dario Catalano, and Markus Jakobsson.
\newblock {\em Coercion-Resistant Electronic Elections}, pages 37--63.
\newblock Springer Berlin Heidelberg, Berlin, Heidelberg, 2010.

\bibitem{Kulyk2014}
O.~Kulyk, S.~Neumann, M.~Volkamer, C.~Feier, and T.~Koster.
\newblock Electronic voting with fully distributed trust and maximized
  flexibility regarding ballot design.
\newblock In {\em 2014 6th International Conference on Electronic Voting:
  Verifying the Vote (EVOTE)}, pages 1--10, Oct 2014.

\bibitem{Lamport1982}
Leslie Lamport, Robert Shostak, and Marshall Pease.
\newblock The {Byzantine} generals problem.
\newblock {\em ACM Transactions on Programming Languages and Systems (TOPLAS)},
  4(3):382--401, 1982.

\bibitem{Liu2004}
Joseph~K. Liu, Victor~K. Wei, and Duncan~S. Wong.
\newblock Linkable spontaneous anonymous group signature for ad hoc groups.
\newblock In Huaxiong Wang, Josef Pieprzyk, and Vijay Varadharajan, editors,
  {\em Information Security and Privacy}, pages 325--335, Berlin, Heidelberg,
  2004. Springer Berlin Heidelberg.

\bibitem{Mathewson2004}
Nick Mathewson and Roger Dingledine.
\newblock Practical traffic analysis: Extending and resisting statistical
  disclosure.
\newblock In {\em International Workshop on Privacy Enhancing Technologies},
  pages 17--34. Springer, 2004.

\bibitem{Mostefaoui2015}
Achour Most{\'e}faoui, Hamouma Moumen, and Michel Raynal.
\newblock Signature-free asynchronous binary {Byzantine} consensus with $t<
  n/3$, ${O}(n^2)$ messages, and ${O}(1)$ expected time.
\newblock {\em Journal of the ACM (JACM)}, 62(4):31, 2015.

\bibitem{Murdoch2005}
Steven~J. Murdoch and George Danezis.
\newblock Low-cost traffic analysis of {Tor}.
\newblock In {\em Proceedings of the 2005 IEEE Symposium on Security and
  Privacy}, SP '05, pages 183--195, Washington, DC, USA, 2005. IEEE Computer
  Society.

\bibitem{neff2001verifiable}
C~Andrew Neff.
\newblock A verifiable secret shuffle and its application to e-voting.
\newblock In {\em Proceedings of the 8th ACM conference on Computer and
  Communications Security}, pages 116--125, 2001.

\bibitem{Neves2005}
Nuno~Ferreira Neves, Miguel Correia, and Paulo Verissimo.
\newblock Solving vector consensus with a wormhole.
\newblock {\em IEEE Transactions on Parallel and Distributed Systems},
  16(12):1120--1131, 2005.

\bibitem{paillier1999public}
Pascal Paillier.
\newblock Public-key cryptosystems based on composite degree residuosity
  classes.
\newblock In {\em International conference on the theory and applications of
  cryptographic techniques}, pages 223--238. Springer, 1999.

\bibitem{Raymond2001}
Jean-Fran{\c{c}}ois Raymond.
\newblock Traffic analysis: Protocols, attacks, design issues, and open
  problems.
\newblock In {\em Designing Privacy Enhancing Technologies}, pages 10--29.
  Springer, 2001.

\bibitem{Raynal2018}
Michel Raynal.
\newblock {\em Reliable Broadcast in the Presence of Byzantine Processes},
  pages 61--73.
\newblock Springer International Publishing, Cham, 2018.

\bibitem{Rivest2001}
Ronald~L. Rivest, Adi Shamir, and Yael Tauman.
\newblock How to leak a secret.
\newblock In Colin Boyd, editor, {\em Advances in Cryptology --- ASIACRYPT
  2001}, pages 552--565, Berlin, Heidelberg, 2001. Springer Berlin Heidelberg.

\bibitem{Serjantov2002}
Andrei Serjantov and George Danezis.
\newblock Towards an information theoretic metric for anonymity.
\newblock In {\em International Workshop on Privacy Enhancing Technologies},
  pages 41--53. Springer, 2002.

\bibitem{Shamir1979}
Adi Shamir.
\newblock How to share a secret.
\newblock {\em Commun. ACM}, 22(11):612--613, November 1979.

\bibitem{Shoup2002}
Victor Shoup and Rosario Gennaro.
\newblock Securing threshold cryptosystems against chosen ciphertext attack.
\newblock {\em Journal of Cryptology}, 15(2):75--96, Jan 2002.

\bibitem{Tsang2005a}
Patrick~P. Tsang and Victor~K. Wei.
\newblock Short linkable ring signatures for e-voting, e-cash and attestation.
\newblock In Robert~H. Deng, Feng Bao, HweeHwa Pang, and Jianying Zhou,
  editors, {\em Information Security Practice and Experience}, pages 48--60,
  Berlin, Heidelberg, 2005. Springer Berlin Heidelberg.

\bibitem{Tsang2005}
Patrick~P. Tsang, Victor~K. Wei, Tony~K. Chan, Man~Ho Au, Joseph~K. Liu, and
  Duncan~S. Wong.
\newblock Separable linkable threshold ring signatures.
\newblock In Anne Canteaut and Kapaleeswaran Viswanathan, editors, {\em
  Progress in Cryptology - INDOCRYPT 2004}, pages 384--398, Berlin, Heidelberg,
  2005. Springer Berlin Heidelberg.

\bibitem{yao1986generate}
Andrew Chi-Chih Yao.
\newblock How to generate and exchange secrets.
\newblock In {\em 27th Annual Symposium on Foundations of Computer Science
  (sfcs 1986)}, pages 162--167. IEEE, 1986.

\bibitem{ye2008distributed}
Qingsong Ye, Huaxiong Wang, and Josef Pieprzyk.
\newblock Distributed private matching and set operations.
\newblock In {\em International Conference on Information security practice and
  experience}, pages 347--360. Springer, 2008.

\bibitem{Yuen2013}
Tsz~Hon Yuen, Joseph~K Liu, Man~Ho Au, Willy Susilo, and Jianying Zhou.
\newblock Efficient linkable and/or threshold ring signature without random
  oracles.
\newblock {\em The Computer Journal}, 56(4):407--421, 2013.

\bibitem{Zantout2011}
Bassam Zantout and Ramzi Haraty.
\newblock I2p data communication system.
\newblock In {\em Proceedings of ICN}, pages 401--409. Citeseer, 2011.

\end{thebibliography}
}

\ifconference
\else
\appendix

\section{Analysis of AVCP} \label{sec:avcproof}

In this appendix, we prove that the properties of \textit{anonymity-preserving vector consensus} (AVC), presented in Section~\ref{sec:avc:avc}, are satisfied by our protocol Anonymised Vector Consensus Protocol (AVCP) (Algorithms \ref{alg:avc} and \ref{alg:bchandler} in Section~\ref{sec:avc:avc}).
%At a high level, our proof strategy is as follows.
%Firstly, we prove that the labelling mechanism is correct (Lemma \ref{avc-labelling}), which is implicitly relied upon in the rest of our proofs.
%Then, we prove that non-faulty processes send and receive messages in a manner that preserves the properties of $n$ concurrent instances of binary consensus.
%Firstly, we consider messages of the form \MSG$[r, i](1)$  which allows us to prove that all non-faulty processes eventually decide 1 in $n - t$ instances of binary consensus (Lemma \ref{avc-reachline}), whose corresponding proposals fulfil valid() (Lemma \ref{avc-val1}), after proving a few lemmas.
%Then, we consider the propagation of zero-messages (Lemma \ref{avc-equivall}).
%Having effectively considered all components of the reduction, we proceed to prove each property of AVC individually (Theorems \ref{avc-val} to \ref{avc-anon}), completing the proof.

%We note that a process' local vector $V$ is denoted by $\ms{proposals}$ in the protocol. Further, we assume that a necessary condition for a process to be considered correct is in the AARB-broadcast of a value that satisfies valid().

\begin{lemma}
\label{avc-labelling}
In AVCP, for each identifier $\ms{ID}$, a non-faulty process' data structure ``$\ms{labelled}[]$'' is such that $|\ms{labelled}.\lit{values}()| \leq n$.
Further, no two binary consensus instances are labelled by the same value $l$.

%At most $n$ binary consensus instances will be labelled, and no two labels will conflict insofar as the collision resistance of $H$ and the pre-image resistance of hash functions $H_1$ and $H_2$ invoked in the traceable ring signature scheme \cite{Fujisaki2007} holds.
\end{lemma}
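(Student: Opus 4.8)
The plan is to establish the two claims separately: first that \ms{labelled} records at most $n$ values, and second that the map from binary-consensus instances to their labels is injective. Both follow from tracking how labels are created, together with the guarantees of the underlying anonymous reliable broadcast. The only place a label is ever recorded is upon ARB-delivery of a message $m$ (line~\ref{line:bchandler:label}), where the instance indexed by the current value of \ms{curr\_index} is labelled with $H(m)$, after which \ms{curr\_index} is incremented (line~\ref{line:bchandler:inccurr}). Hence the number of labels a non-faulty process ever records equals the number of messages it ARB-delivers for the identifier $\ms{ID}$, and distinct instances correspond to distinct deliveries.

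For the cardinality bound I would invoke AARB-Unicity (Lemma~\ref{arb-uni}). Fix $\ms{ID}$ and let $\{(m_1,\sigma_1),\dots,(m_l,\sigma_l)\}$ be the tuples the process has ARB-delivered. Lemma~\ref{arb-uni} states that the idealised signer indices $\lit{FindIndex}(\ms{ID},m_k,\sigma_k)$ are pairwise distinct; since each such index lies in $\{1,\dots,n\}$, there can be at most $n$ delivered tuples, i.e. $l \le n$. As each delivery contributes exactly one entry to \ms{labelled}, we obtain $|\ms{labelled}.\lit{values}()| \le n$ (and, as a by-product, \ms{curr\_index} never exceeds $n$, so the array \ms{proposals}$[1..n]$ never overflows).

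For injectivity I would argue that distinct deliveries yield distinct labels. Two instances share a label exactly when two ARB-delivered messages $m \ne m'$ satisfy $H(m) = H(m')$. By AARB-Unicity no two delivered tuples share a signer index, and because signing is randomised, two valid signatures coincide only with negligible probability; consequently the delivered messages $m = (v, s(v))$ are pairwise distinct with overwhelming probability, even when some originate from Byzantine signers. Collision resistance of $H$ then gives $m \ne m' \Rightarrow H(m) \ne H(m')$ except with negligible probability, so with high probability no two instances are labelled by the same value. The main obstacle is precisely this step: unlike the cardinality bound, which is purely combinatorial, label-distinctness is only guaranteed probabilistically and must be reduced carefully to (i) the randomness of the traceable ring-signature scheme and (ii) the collision resistance of $H$, while accounting for the messages contributed by faulty processes, whose signatures the adversary partially controls.
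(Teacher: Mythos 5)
Your proof is correct and takes essentially the same route as the paper: AARB-Unicity yields the bound of $n$ on the number of deliveries (hence labels), and distinctness of the delivered tuples combined with collision resistance of $H$ yields label injectivity. The only cosmetic difference is that the paper assumes signature uniqueness outright and cites the protocol's explicit duplicate-delivery check to rule out the same tuple being delivered twice, whereas you derive tuple distinctness from the randomness of signing; this is consistent with the paper's own remark elsewhere that two valid signatures coincide only with negligible probability.
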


\begin{proof}
	For the first part of the lemma, recall that an instance $inst$ is only moved to $\ms{labelled}$ at line~\ref{line:avc:labelinst} upon AARB-delivery of $(m, \sigma)$.
	By AARB-Unicity, a non-faulty process will AARB-deliver at most one tuple $(m, \sigma)$ for every process $p_i$ s.t. $\sigma \leftarrow \lit{Sign}(i, \ms{ID}, m)$.
	Since $|P| = n$, at most $n$ instances $inst$ will thus be moved to $\ms{labelled}$.

	For the second part of the lemma, recall that we assume signature uniqueness.
	Thus, every tuple $(m, \sigma)$ anonymously broadcast by a non-faulty process is unique.
	Moreover, any duplicate tuple $(m, \sigma)$, even if broadcast by a faulty process, cannot be AARB-delivered twice (checked at line~\ref{line:ab:ifnotin}).
	Thus, by the collision-resistance of $H$, $H(m \mid \mid \sigma) \neq H(m' \mid \mid \sigma')$ for any two distinct tuples $(m, \sigma)$ and $(m', \sigma')$ that are AARB-delivered.
	Since each instance is labelled by $H(m \mid \mid \sigma')$ for some tuple $(m, \sigma)$, it follows that no two instances will conflict in label.
\end{proof}
% defer part about A_1 to TRS section and refer to?

\begin{lemma}
\label{avc-sameinfo}
Consider AVCP. Let $p$ be a non-faulty process. Then, given that the ``for'' loop on lines \ref{line:avc:forlooptemp2} and \ref{line:avc:delivertemp2} is executed as an atomic operation, all instances in $\ms{unlabelled}.\lit{values}()$ will always contain the same set of messages.
\end{lemma}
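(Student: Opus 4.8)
The plan is to establish the statement as an invariant and prove it by induction on the sequence of local events at $p$ that can alter either the internal state of a binary consensus instance or the partition of instances into $\ms{labelled}$ and $\ms{unlabelled}$. Concretely, I would take as the invariant ``at every point in $p$'s execution, all instances in $\ms{unlabelled}.\lit{values}()$ hold the same set of processed messages''. For the base case, at initialisation every instance resides in $\ms{unlabelled}$ and has processed no message, so all instances in $\ms{unlabelled}.\lit{values}()$ trivially share the empty set of messages.

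For the inductive step, I would assume the invariant holds just before some event, let $M$ denote the common set of messages shared by every instance in $\ms{unlabelled}.\lit{values}()$, and enumerate the event types that can modify state. First, reception of a generic zero message $(\MSG, 0, r, \neg S)$ deposits $0$ into every instance whose label does not lie in $S$; since an unlabelled instance carries no label, \emph{all} of them fall into this category and are updated identically — this is exactly the deposit performed by the atomic loop on lines~\ref{line:avc:forlooptemp2}--\ref{line:avc:delivertemp2}. Second, $p$'s own invocation of AVC\_BV\_broadcast or AVC\_AUX\_broadcast with binary value $0$ (triggered once $\ms{est\_count}[r]$ or $\ms{aux\_count}[r]$ reaches $n$) likewise deposits $0$ uniformly into all unlabelled instances. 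Third, any label-specific message (an \EST{} or \AUX{} carrying $b = 1$, or an \EST{} carrying $b = 0$ for a concrete $label$) is buffered rather than deposited while its target instance remains unlabelled, so it never touches any instance in $\ms{unlabelled}.\lit{values}()$ and leaves $M$ unchanged. In each of these three cases $M$ is either updated uniformly across all unlabelled instances or not updated at all, so the invariant is preserved.

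The remaining event is the AARB-delivery of a tuple $(m, \sigma)$, which labels one instance and moves it from $\ms{unlabelled}$ to $\ms{labelled}$ (line~\ref{line:avc:labelinst}). By the inductive hypothesis the departing instance holds exactly $M$; removing it leaves every surviving unlabelled instance still holding $M$, so uniformity among the remaining unlabelled instances is untouched, and any buffered label-specific messages are subsequently flushed only into the newly labelled instance. Hence this event also preserves the invariant, and by induction the claim holds at all times.

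The main obstacle is precisely the interaction between the zero-depositing events and the labelling events: without care, an AARB-delivery could occur \emph{midway} through the loop that deposits a zero into the unlabelled instances, so that instances already visited receive the value while those not yet visited do not, leaving a strict nonempty proper subset of $\ms{unlabelled}.\lit{values}()$ modified and breaking uniformity. I would make explicit that this is exactly what the atomicity hypothesis on lines~\ref{line:avc:forlooptemp2}--\ref{line:avc:delivertemp2} rules out: under that assumption each zero-deposit completes as a single indivisible step and no labelling can interleave partway through. Thus, under atomicity, every state-changing step either touches all of $\ms{unlabelled}.\lit{values}()$ identically or removes one instance that already holds $M$ while leaving the rest untouched; in neither case is a proper, nonempty subset of the unlabelled instances altered, which is what closes the argument.
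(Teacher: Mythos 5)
Your proof is correct and follows essentially the same route as the paper's: the paper's (much terser) argument likewise observes that deposits to unlabelled instances occur only in the atomic loop and hit \emph{all} of them, and that instances only ever leave $\ms{unlabelled}$, which is exactly the invariant you formalise by induction on events. Your version is more explicit about the base case, the buffering of label-specific messages, and the role of the atomicity hypothesis, but there is no substantive difference in approach.
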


\begin{proof}
	At the protocol's outset, all instances of binary consensus are in $\ms{labelled}$, where no messages have been received.
	The only place that messages are deposited to unlabelled instances is at lines \ref{line:avc:forlooptemp2} and \ref{line:avc:delivertemp2}, where messages are deposited into \textit{all} members of $\ms{unlabelled}$.
	Finally, note that instances can only moved out of $\ms{unlabelled}$, which occurs at line~\ref{line:avc:labelinst}.
%	Denote the set of instances labelled with $\perp$ by $L$.
%	At the protocol's outset, no values are receipt in any instance in $L$, and $|L| = n$.
%	Then, an instance $\ms{BIN\_CONS}[i, \perp]$ is potentially populated only in the block beginning at line \ref{line:bchandler:receiptzeros} upon receipt of a message of the form $($\MSG$, 0, r, \neg S=\{l_1, \cdots, l_k\})$.
%	For such a message to be processed, labels must eventually be assigned to instances of the form $\ms{BIN\_CONS}[i, label]$ s.t. $label \in S$ for all $label \in S$ (line \ref{line:bchandler:label}).
%	Given this holds, the same message \MSG$[r, \perp](0)$ is deposited into all instances in $L$ (line \ref{line:bchandler:zeroperpdep}), the only time where $L$ is populated.
%	This holds for the lifetime of the protocol as $L$ only shrinks in size from the protocol's outset.
\end{proof}

\begin{lemma}
\label{avc-labelsend}
In AVCP, a non-faulty process $p$ sends a message of the form $(\ms{ID}, $\TAG$, r, \ms{label}, b)$, then there locally exists an instance of binary consensus $inst$ such that $\ms{labelled}[\ms{label}] = inst$.
\end{lemma}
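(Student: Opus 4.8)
The plan is to establish this invariant by an exhaustive case analysis over the few program points at which a non-faulty process can emit a message of the labelled form $(\ms{ID}, $\TAG$, r, \ms{label}, b)$, and to show that each such emission is guarded by the local labelling of the corresponding binary consensus instance. The key structural fact I would use, drawn from the proofs of Lemmas~\ref{avc-labelling} and~\ref{avc-sameinfo}, is that an instance enters $\ms{labelled}$ only at line~\ref{line:avc:labelinst}, upon AARB-delivery of the tuple $(m, \sigma)$ with $\ms{label} = H(m \mid \mid \sigma)$, and once labelled it never leaves. Hence it suffices to show that every emission site of a labelled message is reached only after the relevant AARB-delivery has occurred.

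First I would treat the positive case $b = 1$. Every \EST$(1)$ and \AUX$(1)$ broadcast is funnelled through the handler routines AVC\_BV\_broadcast and AVC\_AUX\_broadcast, which block at line~\ref{line:bchandler:bvbroadcastlabel} (resp.\ line~\ref{line:bchandler:auxlabel}) until an instance keyed by $\ms{label}$ appears in $\ms{labelled}$; only then is $\ms{label}$ inserted into $\ms{est\_ones}[r]$ (resp.\ $\ms{aux\_ones}[r]$) and the labelled message sent. Thus, at the moment of sending, $\ms{labelled}[\ms{label}] = inst$ for the very instance that was waited upon, and since labelled instances persist across rounds the guard is also satisfied for any later round $r > 1$. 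The task here is simply to confirm that these two handler functions are the unique gateways for positive labelled messages.

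Next I would treat $b = 0$. The crucial observation is that the \emph{bulk} zero broadcasts carry the form $($\MSG$, 0, r, \neg S)$ and are therefore \emph{not} of the labelled form covered by the lemma, so they are excluded from consideration. A labelled zero message can only arise at two sites. At the BV-broadcast amplification point (line~\ref{line:bincons:nowlabelled}), a process rebroadcasts \EST$[r, \ms{label}](0)$ only after having waited at line~\ref{line:bincons:notlabelled} for the message $m$ with $\ms{label} = H(m \mid \mid \sigma)$ to be AARB-delivered; by the structural fact above, that AARB-delivery has already triggered the labelling at line~\ref{line:avc:labelinst}, so $\ms{labelled}[\ms{label}]$ is set before the broadcast. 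Under the individual-zero-message optimisation of Section~\ref{sec:avc:aandd}, $p$ emits \MSG$[r, \ms{label}](0)$ directly, but does so precisely because $\ms{label}$ is locally known, which again is possible only after the corresponding instance has been moved into $\ms{labelled}$.

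I expect the main obstacle to be bookkeeping rather than any deep argument: one must certify exhaustiveness, i.e.\ that every emission site of a labelled message has been enumerated and that none bypasses a labelling guard. The delicate points are cleanly separating labelled-form messages from the negation-set form $($\MSG$, 0, r, \neg S)$, and verifying that the ``wait-for-label'' guards at lines~\ref{line:bchandler:bvbroadcastlabel}, \ref{line:bchandler:auxlabel} and~\ref{line:bincons:notlabelled} are the sole gateways through which labelled messages can be produced. Combining all cases yields the claim: whenever a non-faulty process sends $(\ms{ID}, $\TAG$, r, \ms{label}, b)$, there locally exists an instance $inst$ with $\ms{labelled}[\ms{label}] = inst$.
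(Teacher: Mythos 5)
Your proof is correct and follows essentially the same route as the paper's: an exhaustive case analysis over the emission sites of labelled messages, showing that each site is reached only after the corresponding instance has been placed in $\ms{labelled}$ upon AARB-delivery (and never leaves it). One small bookkeeping slip: the $t+1$-amplification rebroadcast at line~\ref{line:bincons:nowlabelled} can also emit \emph{positive} labelled messages (the paper's case~(ii) for $b=1$), so the two handler functions are not the unique gateways for $b=1$ --- but since you verify that this site is guarded by the wait at line~\ref{line:bincons:notlabelled}, the argument is unaffected.
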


\begin{proof}
Suppose $p$ sends $(\ms{ID}, $\EST$, r, \ms{label}, b)$.
If $b = 1$, then one of three scenarios holds:
\textit{(i)} $r = 1$ and $p$ has invoked $inst.\lit{bin\_propose}(1)$ where $\ms{labelled}[\ms{label}] = inst$,
\textit{(ii)} $r \geq 1$, $p$ has received $t + 1$ messages of the form  $(\ms{ID}, $\EST$, r, \ms{label}, b)$ and has not yet broadcast $(\ms{ID}, $\EST$, r, \ms{label}, b)$, or
\textit{(iii)} their value $est$ was set to $b = 1$ at the end of the previous round of binary consensus.
In case (i), note that $\lit{bin\_propose}(1)$ (at line~\ref{line:avc:invokebpone}) is executed after $inst$ is labelled (at line~\ref{line:avc:labelinst}).
Note that in both cases (ii) and (iii), $p$ must have processed messages of the form $(\ms{ID}, $\EST$, r, \ms{label}, 1)$.
As described in the text, these messages are only processed \textit{after} the AARB-delivery of the corresponding message $(m, \sigma)$ where $\ms{label} = H(m \mid \mid \sigma)$ and some instance $inst$ is labelled by $\ms{label}$ (at line~\ref{line:avc:labelinst}).
If $b = 0$, only case (ii) needs to be considered, since the other $\lit{broadcast}$ call is handled by the handler (Algorithm~\ref{alg:bchandler}).
Here, $b = 1$ must have been initially broadcast, which is described by cases (i) and (iii).
Suppose $p$ sends $(\ms{ID}, $\AUX$, r, \ms{label}, b)$.
Note that $b \neq 0$, since the handler handles this case. %TODO: elaborate slightly
Thus, we consider $b = 1$.
$b$ must have been BV-delivered (line \ref{line:obc:bvalsnonempty} of Algorithm~\ref{alg:obc}), requiring $p$ to have processed messages of the form $(\ms{ID}, $\EST$, r, \ms{label}, b)$, which requires some instance to be labelled by $\ms{label}$ to do.
This exhausts the possibilities.
\end{proof}

\begin{lemma}
\label{avc-deposit}
%Suppose a non-faulty process AARB-broadcasts $m$. Then, all messages sent by any non-faulty process relating to $m$ will eventually be deposited into some instance $\ms{BIN\_CONS}[i, H(m)]$ by all non-faulty processes.
Consider AVCP. Let $r \geq 1$ be an integer. Then, all messages sent by non-faulty processes of the form $(\ms{ID}, $\TAG$, r, \ms{label}, b)$ where \TAG \ $\in$ $\{$\EST $,$ \AUX $\}$ and $b \in \{0, 1\}$, and $(\ms{ID}, $\TAG$, r, ones)$ where \TAG \ $\in$ $\{$\ESTONES $,$ \AUXONES $\}$ and $b \in \{0, 1\}$, are eventually deposited into the corresponding binary consensus instances of a non-faulty process.
\end{lemma}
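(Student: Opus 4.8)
The plan is to split the argument by message family and, in each case, reduce ``eventual deposit'' to the eventual AARB-delivery of the tuple(s) whose label(s) the message references, which is itself guaranteed by AARB-Termination-2. First I would treat the single-instance messages: $(\ms{ID}, \EST, r, \ms{label}, b)$ with $b \in \{0,1\}$, and $(\ms{ID}, \AUX, r, \ms{label}, 1)$ (direct \AUX \ messages carry $b = 1$, the $b = 0$ case being aggregate, as noted in Lemma~\ref{avc-labelsend}). By Lemma~\ref{avc-labelsend}, a non-faulty sender $p$ only emits such a message once it holds a labelled instance $\ms{labelled}[\ms{label}] = inst$; and labelling occurs (at line~\ref{line:avc:labelinst}) exactly upon $p$'s AARB-delivery of the tuple $(m, \sigma)$ with $\ms{label} = H(m \mid\mid \sigma)$. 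Since $p$ is non-faulty and AARB-delivered $(m, \sigma)$, AARB-Termination-2 ensures every non-faulty process eventually AARB-delivers the same tuple and hence labels the corresponding instance. A non-faulty receiver either already has the instance labelled, in which case it deposits the value immediately, or it buffers the message until labelling occurs (as at line~\ref{line:bincons:notlabelled}); by the above this wait always terminates, so the value is eventually deposited.

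Next I would handle the aggregate messages $(\ms{ID}, \ESTONES, r, ones)$ and $(\ms{ID}, \AUXONES, r, ones)$. A non-faulty sender $p$ broadcasts such a message only after invoking AVC\_BV\_broadcast (resp.\ AVC\_AUX\_broadcast) in all $n$ instances in round $r$, at which point $ones$ is precisely the set of labels of instances in which $p$ proposed $1$; by the labelling discipline, $p$ has AARB-delivered $(m, \sigma)$ for every $\ms{label} \in ones$. On receipt, a non-faulty receiver waits for the AARB-delivery of all tuples $(m, \sigma)$ with $H(m \mid\mid \sigma) \in ones$ (line~\ref{line:bchandler:waitzeroarb}) and then deposits $0$ into every instance \emph{not} labelled by an element of $ones$ (lines~\ref{line:bchandler:zerodep}--\ref{line:bchandler:zeroperpdep}). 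Again AARB-Termination-2 guarantees the receiver eventually AARB-delivers each referenced tuple, so the wait terminates and the deposit occurs.

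The subtle point, and the step I expect to be the main obstacle, is making the deposit into instances the receiver has \emph{not yet} labelled well-defined, so that a zero destined for such an instance is not lost. I would resolve this using Lemma~\ref{avc-sameinfo}, which guarantees that all currently-unlabelled instances share a common message set: depositing the $0$ uniformly into the unlabelled instances (with the accumulated messages transferred when an instance is later labelled) correctly routes the value to its intended binary consensus instance. Combined with Lemma~\ref{avc-labelling}, which bounds the labelled instances by $n$ and rules out label collisions, this shows each referenced instance is identified unambiguously and eventually receives the value. Assembling the two cases then yields the claim for every $r \geq 1$ and every admissible tag and value.
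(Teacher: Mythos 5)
Your proof is correct and follows essentially the same route as the paper's: both cases reduce to Lemma~\ref{avc-labelsend} (or the sender's labelling discipline) plus AARB-Termination-2 to guarantee the receiver eventually AARB-delivers the referenced tuple(s), labels the instance(s), and completes the deposit. Your additional appeal to Lemma~\ref{avc-sameinfo} to justify the uniform deposit of zeros into still-unlabelled instances is a harmless elaboration the paper leaves implicit.
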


\begin{proof}
	Let $p$ be the recipient of a message in the above forms.
	We first consider messages of the form $(\ms{ID}, $\TAG$, r, \ms{label}, b)$ where \TAG \ $\in$ $\{$\EST $,$ \AUX $\}$ and $b \in \{0, 1\}$.
	By Lemma~\ref{avc-labelsend}, the non-faulty sender of such a message must have AARB-delivered some message $(m, \sigma)$ such that $\ms{label} = H(m \mid \mid \sigma)$.
	By AARB-Termination-2, $p$ will eventually AARB-deliver $(m, \sigma)$ and subsequently label an instance of binary consensus which is uniquely defined by $\ms{label}$ (Lemma~\ref{avc-labelling}).
	Thus, $p$ will eventually process $(\ms{ID}, $\TAG$, r, \ms{label}, b)$ in instance $inst = \ms{labelled}[\ms{label}]$.
	Consider messages of the form $(\ms{ID}, $\TAG$, r, ones)$ where \TAG \ $\in$ $\{$\ESTONES $,$ \AUXONES $\}$ and $b \in \{0, 1\}$.
	Since the sender is non-faulty, every element of $ones$ (i.e. every label) must correspond to a message that was AARB-delivered by the sender.
	By AARB-Termination-2, $p$ will eventually AARB-deliver these messages, label instances of binary consensus (at line \ref{line:avc:labelinst}) and thus progress beyond line~\ref{line:avc:wuntillabel}.
	Thus, $p$ will deposit 0 in all instances \textit{not} labelled by elements of $ones$ (at lines \ref{line:avc:delivertemp1} and \ref{line:avc:delivertemp2}), which is the prescribed behaviour.
\end{proof}

\begin{lemma}
	\label{avc-reachline}
	In AVCP, every non-faulty process reaches line \ref{line:avc:binpropzeros}, i.e. decides 1 in $n - t$ instances of binary consensus.
\end{lemma}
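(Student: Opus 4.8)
The plan is to reduce the statement to a single claim about the binary consensus instances: \emph{at least $n - t$ instances decide the value $1$}. Granting this, BBC-Agreement and BBC-Termination guarantee that every non-faulty process eventually decides $1$ in each of those instances, so every non-faulty process decides $1$ in at least $n - t$ instances and thereby reaches line~\ref{line:avc:binpropzeros}. I would establish the claim in two steps: (a) show that \emph{at least one} non-faulty process reaches line~\ref{line:avc:binpropzeros}, and (b) show that once one non-faulty process does so, \emph{all} of them do.

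For step (a) I would argue by contradiction, assuming that no non-faulty process ever reaches line~\ref{line:avc:binpropzeros}. Under this assumption no non-faulty process ever executes the ``propose $0$'' step, so the only binary values non-faulty processes feed into instances are the $1$'s proposed upon AARB-delivery of valid messages (line~\ref{line:avc:binpropone}). There are at least $n - t$ non-faulty processes, each of which AARB-broadcasts a valid message at line~\ref{line:avc:arbbcast}; by AARB-Termination-1 every non-faulty process eventually AARB-delivers all of these messages, and by Lemma~\ref{avc-labelling} the resulting labels are pairwise distinct, so they index $\geq n - t$ distinct instances. In each such instance every non-faulty process proposes $1$ and none proposes $0$; by the intrusion-tolerance of the binary consensus (ensured via BV-broadcast, so that no value decided was not proposed by a non-faulty process), $0$ cannot be decided, hence each of these instances decides $1$. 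Thus every non-faulty process decides $1$ in at least $n - t$ instances, contradicting the assumption.

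For step (b) I would pick the first non-faulty process $p$ (in real time) to reach line~\ref{line:avc:binpropzeros}, which exists by step (a), and let $S$ be the set of $n - t$ instances in which $p$ has decided $1$ at that moment. By Lemma~\ref{avc-labelling} each $s \in S$ is labelled by the hash of some message $(m, \sigma)$ that $p$ has AARB-delivered (line~\ref{line:avc:labelinst}); by AARB-Termination-2 every non-faulty process eventually AARB-delivers $(m, \sigma)$, labels $s$, and proposes in it (contributing $1$ on delivery, or $0$ earlier if it had already reached its own ``propose $0$'' step). Either way all non-faulty processes participate in $s$, so by BBC-Termination they all decide in $s$, and by BBC-Agreement they all decide the same value as $p$, namely $1$. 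Hence every non-faulty process decides $1$ in all $n - t$ instances of $S$ and reaches line~\ref{line:avc:binpropzeros}.

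The main obstacle is the apparent circularity between reaching line~\ref{line:avc:binpropzeros} and the progress of the binary consensus instances: an instance is only driven to termination once every non-faulty process proposes in it, yet a non-faulty process only proposes $0$ after it has itself decided $1$ in $n - t$ instances. Step (a) breaks this circularity by showing that the ``good'' instances (those labelled by non-faulty senders) terminate with $1$ using \emph{only} the $1$-proposals forced by AARB-Termination-1, with no reliance on any ``propose $0$'' step. Care is also needed to confirm that the relevant instances genuinely receive every non-faulty process's input, which rests on Lemmas~\ref{avc-labelling} and~\ref{avc-deposit} together with AARB-Termination-1 and AARB-Termination-2; and on the binary-consensus side, step (a) must invoke the intrusion-tolerance of BV-broadcast to rule out a decision of $0$ in an instance where no non-faulty process ever proposed $0$.
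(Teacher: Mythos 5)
Your proof is correct and rests on the same ingredients as the paper's: the $\geq n-t$ valid messages anonymously broadcast by non-faulty processes, AARB-Termination-1/-2 to force participation, intrusion tolerance to rule out a decision of $0$ in the ``good'' instances, and BBC-Agreement/Termination to propagate the decision. The difference is structural. The paper argues directly: it asserts that all non-faulty processes invoke $\lit{bin\_propose}(1)$ in the $n-t$ instances labelled by non-faulty senders and immediately applies intrusion tolerance. This elides a subtlety, since a non-faulty process that reaches line~\ref{line:avc:binpropzeros} early could in principle propose $0$ in an instance whose valid message it has not yet AARB-delivered, so the blanket claim ``all non-faulty processes propose $1$ in them'' is only guaranteed before any process executes the propose-$0$ step. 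Your two-phase decomposition --- a contradiction argument showing that \emph{some} non-faulty process must reach line~\ref{line:avc:binpropzeros} (under the hypothesis that none does, no $0$ is ever proposed, so the good instances must decide $1$), followed by a first-process argument using BBC-Agreement and BBC-Termination on the specific set $S$ of instances it decided $1$ in --- breaks exactly the circularity the paper glosses over. What your version buys is rigor on this point; what the paper's version buys is brevity. Both are sound modulo that caveat, and your step (b) in particular is essentially how the paper's Theorem~\ref{avc-agree} later argues agreement.
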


\begin{proof}
	Given at most $t$ faulty processes, at least $n - t$ non-faulty processes invoke AARBP at line \ref{line:avc:arbbcast}, broadcasting a value that satisfies $\lit{valid}()$ by assumption.
	By AARB-Termination-1, every message anonymously broadcast by a non-faulty process is eventually AARB-delivered by all non-faulty processes.
	Suppose a non-faulty process $p$ labels an instance of consensus $\ms{inst}$ by one of these messages.
	By Lemma~\ref{avc-deposit}, all messages sent by non-faulty processes associated with $\ms{inst}$ are eventually processed by $p$.
	Then, all non-faulty processes will invoke $\lit{bin\_propose}(1)$ in at least $n - t$ instances of binary consensus (i.e. while blocked at line~\ref{line:avc:wuntilnminust}).
	Consider the first $n - t$ instances of consensus for which a decision is made at.
	By the intrusion tolerance of binary consensus, no coalition of faulty processes can force 0 to be a valid value in these consensus instances, since all non-faulty processes must invoke $\lit{bin\_propose}(1)$ in them. %TODO: reword %TODO: reword %TODO: reword %TODO: reword
	Then, by BBC-Agreement, BBC-Validity and BBC-Termination, all non-faulty processes will eventually decide 1 in $n - t$ $\ms{BIN\_CONS}$ instances.
	Thus, all non-faulty processes eventually reach line \ref{line:avc:binpropzeros}.
\end{proof}

\noindent We now prove that AVC-Validity holds.

\begin{theorem}
\label{avc-val}
AVCP satisfies AVC-Validity, which is stated as follows.
Consider each non-faulty process that invokes $\lit{AVC-decide}[\ms{ID}](V)$ for some $V$ and a given $\ms{ID}$.
Each value $v \in V$ must satisfy $\lit{valid}()$, and $|V| \geq n - t$.
Further, at least $|V| - t$ values correspond to the proposals of distinct non-faulty processes.

\end{theorem}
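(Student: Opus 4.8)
The plan is to establish the three conjuncts of AVC-Validity separately: the cardinality bound $|V| \geq n - t$, the validity of each returned value, and the attribution of at least $|V| - t$ values to distinct non-faulty processes. Throughout I would exploit the fact that $V$ consists exactly of the values $m$ associated, via the label $H(m \mid\mid \sigma)$, with those binary consensus instances in which $1$ was decided.

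First I would dispose of the cardinality bound. By Lemma~\ref{avc-reachline}, every non-faulty process decides $1$ in $n - t$ instances of binary consensus before reaching line~\ref{line:avc:binpropzeros}, and each such instance contributes one value to $V$; hence $|V| \geq n - t$. To confirm that a process does not block forever at line~\ref{line:avc:waitforreturn} before returning at line~\ref{line:avc:returnprops}, I would observe that a decision of $1$ in an instance labelled $H(m \mid\mid \sigma)$ forces some non-faulty process to have proposed $1$ there, and such a process must first have AARB-delivered $(m, \sigma)$; by AARB-Termination-2 (Lemma~\ref{arb-term-2}) every non-faulty process then eventually AARB-delivers $(m, \sigma)$, so each decided value can indeed be returned.

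Next, for validity, the crux is the intrusion-tolerance (BBC-Validity) of the binary consensus routine: if $1$ is decided in the instance labelled $H(m \mid\mid \sigma)$, then at least one non-faulty process invoked $\lit{bin\_propose}(1)$ there (line~\ref{line:avc:invokebpone}). A non-faulty process does so only after AARB-delivering $(m, \sigma)$ and passing the test at line~\ref{line:avc:ifvalid}, i.e. after confirming $\lit{valid}(m)$. Since $\lit{valid}()$ is deterministic and shared, the value $m$ returned for that instance satisfies $\lit{valid}()$, giving the first conjunct. I expect this to be the main obstacle to state cleanly, since it is essential to invoke intrusion-tolerance rather than plain validity: a coalition of $\leq t$ faulty processes must be unable to drive a decision of $1$ in an instance that no non-faulty process ever vetted.

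Finally, for attribution, I would argue that the $|V|$ returned values map injectively to distinct processes. By Lemma~\ref{avc-labelling} distinct instances carry distinct labels $H(m \mid\mid \sigma)$, and by AARB-Unicity (Lemma~\ref{arb-uni}) a non-faulty process AARB-delivers at most one tuple per signer; hence the values of $V$ correspond, via $\lit{FindIndex}$, to pairwise distinct processes. Since at most $t$ of the $n$ processes are faulty, at most $t$ of these distinct processes are faulty, leaving at least $|V| - t$ non-faulty. For each such non-faulty $p_i$, AARB-Validity (Lemma~\ref{arb-val}) guarantees that the delivered tuple was in fact anonymously broadcast by $p_i$, so these $|V| - t$ values are genuinely the proposals of distinct non-faulty processes, completing the argument.
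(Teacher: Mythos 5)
Your proof is correct and follows essentially the same decomposition as the paper's: Lemma~\ref{avc-reachline} for $|V| \geq n-t$, the check at line~\ref{line:avc:ifvalid} for validity of each decided value, and AARB-Unicity for attributing at least $|V|-t$ values to distinct non-faulty processes. The only minor difference is that you route the validity conjunct through intrusion-tolerance (some non-faulty process must have proposed $1$), whereas the paper argues via Lemma~\ref{avc-deposit} that the deciding process itself must have AARB-delivered and vetted the corresponding message before deciding; both variants are sound.
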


\begin{proof}
	Let $p$ be a non-faulty process.
	By construction of the binary consensus algorithm, $p$ must process messages of the form $(\ms{ID}, $\TAG$, r, \ms{label}, b)$ where \TAG \ $\in$ $\{$\EST $,$ \AUX $\}$ and $b \in \{0, 1\}$ to reach a decision.
	By Lemma~\ref{avc-deposit}, $p$ must have AARB-delivered the corresponding message $(m, \sigma)$ such that $\ms{label} = H(m \mid \mid \sigma)$ before deciding.
	That is, $\ms{proposals}$ must have been populated with the corresponding message $(m, \sigma)$ at line~\ref{line:avc:addtoprops}, and $\lit{valid}()$ must be true for $(m, \sigma)$, checked previously at line~\ref{line:avc:ifvalid}.
	By Lemma \ref{avc-reachline}, $p$ eventually decides 1 in $n - t$ instances of binary consensus.
	Upon each decision, the corresponding message is added to $V = \ms{decided\_ones}$ (at line~\ref{line:avc:addone}).
	For the latter component of the definition, we note that ARB-Unicity ensures that each value in $V$ contains a signature produced by a different process.
	So, given $V$ is decided, at most $t$ of the corresponding signatures could have been produced by faulty processes.
\end{proof}

% all unlabelled are identical, so will terminate at same time
\begin{lemma}
\label{avc-equivall}
%Let $t < \frac{n}{3}$. Then, fixing $r$, if AVC\_BV\_broadcast \EST$[r,label](b_{label})$ or broadcast \AUX$[r,label](b_{label})$ is invoked in all $n$ instances of $\ms{BIN\_CONS}$ by a non-faulty process, then all non-faulty processes eventually process \EST$[r,label](b_{label})$ or \AUX$[r,label](b_{label})$ (respectively) in all $n$ instances.
Consider AVCP. Fix $r \geq 1$. Given that a non-faulty process $p$ invokes ``$\lit{broadcast}$ $(\ms{ID}, $\EST$, r, \ms{label}, b)$'' or ``$\lit{broadcast}$ $(\ms{ID}, $\AUX$, r, \ms{label}, b)$'' in all $n$ instances of binary consensus, then all non-faulty processes interpret the corresponding messages sent by $p$ as if $p$ were executing the original binary consensus algorithm.
\end{lemma}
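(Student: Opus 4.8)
The plan is to show that the two channels by which $p$ conveys a binary value in each of the $n$ instances --- the individual $1$-messages it sends for the instances it has labelled, and the single aggregated ``zeros'' message it sends for the remainder --- together induce, at every non-faulty receiver, exactly the per-instance deposit of one binary value that the original algorithm of \cite{Crain2017a} would produce via a separate broadcast in each instance. First I would fix the round $r$ and the header (say \EST; the \AUX \ case is identical, with $\ms{aux\_count}$, $\ms{aux\_ones}$ and \AUXONES \ replacing $\ms{est\_count}$, $\ms{est\_ones}$ and \ESTONES). By hypothesis $p$ has invoked the broadcast call in all $n$ instances, so $\ms{est\_count}[r]$ has reached $n$, triggering the handler to emit a single aggregated message $(\ms{ID}, $\ESTONES$, r, S)$ with $S = \ms{est\_ones}[r]$ (lines \ref{line:bchandler:estcountn} and \ref{line:bchandler:bcastestzeros}).

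Next I would establish that $S$ and its complement $\neg S$ partition the $n$ instances according to $p$'s intended value. Each of the $n$ broadcast invocations increments $\ms{est\_count}[r]$ exactly once and adds its label to $\ms{est\_ones}[r]$ precisely when its value is $b = 1$; hence $S$ is exactly the set of labels of instances in which $p$ broadcast $1$, and $\neg S$ is exactly the set in which $p$ intends $0$. For the instances in $S$, $p$ has necessarily labelled them (Lemma \ref{avc-labelsend}) and sends the individual messages $(\ms{ID}, $\EST$, r, \ms{label}, 1)$ directly; for the instances in $\neg S$, the aggregated message $(\ms{ID}, $\ESTONES$, r, S)$ stands in for the $n - |S|$ individual $0$-broadcasts.

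It then remains to verify that a non-faulty receiver decodes $p$'s messages into the correct per-instance deposits. By Lemma \ref{avc-deposit}, each individual message $(\ms{ID}, $\EST$, r, \ms{label}, 1)$ is eventually deposited as a $1$ into the unique instance identified by $\ms{label}$ (uniqueness by Lemma \ref{avc-labelling}); the same lemma guarantees that, upon receipt of $(\ms{ID}, $\ESTONES$, r, S)$, the receiver eventually AARB-delivers every message whose hash lies in $S$ and then deposits $0$ into every instance \emph{not} labelled by an element of $S$. Since $S$ and $\neg S$ together cover all $n$ instances, the net effect at the receiver is a single binary value deposited per instance, which is precisely the behaviour of the original algorithm.

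The main obstacle I anticipate is the asynchronous case in which the receiver processes the aggregated message before labelling all the instances in $S$, or before the instances in $\neg S$ are even labelled. Here I would lean on Lemma \ref{avc-deposit}, which (via AARB-Termination-2) forces the receiver to eventually label every instance referenced by $S$, together with Lemma \ref{avc-sameinfo}, which ensures that the still-unlabelled instances hold a common set of messages, so that a $0$ deposited into the unlabelled pool is correctly inherited once the corresponding instance is labelled. Showing that this deferred bookkeeping yields exactly the complement $\neg S$ relative to the full set of $n$ instances --- with no gaps and no double deposits --- is the delicate point, and is exactly what the partition argument of the second step is designed to close.
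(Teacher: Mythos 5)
Your proposal is correct and follows essentially the same route as the paper's proof: the individual $1$-broadcasts are literal and identical to the original algorithm, the aggregated $\ms{ones}$ message encodes the complementary $0$-broadcasts, and Lemma~\ref{avc-deposit} guarantees that receivers deposit $0$ in exactly the instances not labelled by elements of that set. Your additional partition argument and the appeal to Lemma~\ref{avc-sameinfo} for the unlabelled-instance bookkeeping make explicit a step the paper leaves implicit, but the underlying argument is the same.
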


\begin{proof}
	Without loss of generality, suppose $p$ invokes ``$\lit{broadcast}$ $(\ms{ID}, $\EST$, r, \ms{label}, b)$'' in all $n$ instances of binary consensus.
	For every instance $inst$ such that $b = 1$, $p$ broadcasts $(\ms{ID}, $\EST$, r, \ms{label}, b)$ (at line~\ref{line:avc:bcastest}) which is identical to the behaviour in the original algorithm.
	Note that $p$ does not broadcast $(\ms{ID}, $\EST$, r, \ms{label}, b)$ here when $b = 0$.
	By Lemma~\ref{avc-labelsend}, $p$ must have labelled $inst$ with $\ms{label}$. Thus, $p$ adds $\ms{label}$ to $\ms{ones}[$\EST$][r]$ (at line \ref{line:avc:appendonesest}).
	Once $n$ invocations of ``$\lit{broadcast}$ $(\ms{ID}, $\EST$, r, \ms{label}, b)$'' have been executed, $\ms{ones}[$\EST$][r]$ is sent to all non-faulty processes.
	By Lemma~\ref{avc-deposit}, all non-faulty processes deposit $0$ in all instances not labelled by elements of $\ms{ones}[$\EST$][r]$, corresponding to all broadcasts \textit{not} performed by $p$ previously.
	In this sense, this behaviour is equivalent to having broadcast all values $(\ms{ID}, $\EST$, r, \ms{label}, b)$ where $b = 0$.
%	Consider an arbitrary round $r$.
	
%	Firstly, consider messages of the form \EST$[r,label](b_{label})$. By Lemma \ref{avc-equivones}, all messages of the form \EST$[r,label](1)$ are handled. So, we consider messages of the form \EST$[r,label](0)$. To propose 0 in some instance $\ms{BIN\_CONS}$, $p$ must have incremented $est\_count[r]$ to $n$ (on reaching line \ref{line:bchandler:incestcount} $n$ times, having stored all proposed 1's in $\ms{est\_ones}[r]$ at line \ref{line:bchandler:popestones} prior). Then, $p$ broadcasts $($\EST$, 0, r, \neg \ms{est\_ones}[r])$, corresponding to proposing zeroes in all other instances not labelled by elements in $\ms{est\_ones}[r]$. Then, at line \ref{line:bchandler:waitzeroarb}, by $p$'s correctness and AARB-Termination-2, all values in $\ms{est\_ones}[r]$ will be eventually known by all non-faulty processes, and so will consider \EST$[r,i](0)$ as receipt for all $i \notin \ms{est\_ones}[r]$. Moreover, every round will be considered, as $p$ does not terminate upon deciding.

%	Now, consider messages of the form \AUX$[r,label](b_{label})$. By Lemma \ref{avc-equivones}, all proposals of 1 are handled. Then, let $p$ denote a non-faulty process. Consider the proposal of \AUX$[r,H(m)](0)$. By the same reasoning as above, $p$ will eventually reach line \ref{line:bchandler:bcastauxzeros}, broadcasting $($\AUX$, 0, r ,\neg \ms{aux\_ones[r]})$, and since $p$ is non-faulty, all other non-faulty processes will eventually reach line \ref{line:bchandler:zerodep}, depositing all zero values into relevant instances of $\ms{BIN\_CONS}$.
\end{proof}

\begin{theorem}
\label{avc-term}
AVCP ensures AVC-Termination. That is, every non-faulty process eventually invokes $\lit{AVC-decide}[\ms{ID}](V)$ for some vector $V$ and a given $\ms{ID}$.
\end{theorem}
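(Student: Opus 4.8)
The plan is to show that a non-faulty process clears, in turn, each of the three blocking points of the reduction and therefore reaches line~\ref{line:avc:returnprops}, where $\lit{AVC-decide}[\ms{ID}](V)$ is invoked. The three points are: deciding $1$ in $n - t$ binary consensus instances (the wait before line~\ref{line:avc:binpropzeros}), observing that \emph{every} binary consensus instance terminates (line~\ref{line:avc:nbindecs}), and AARB-delivering every message whose instance decided $1$ (line~\ref{line:avc:waitforreturn}). The first point is already discharged by Lemma~\ref{avc-reachline}, which establishes that every non-faulty process reaches line~\ref{line:avc:binpropzeros}.

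For the second point I would argue that each of the $n$ binary consensus instances eventually decides. By Lemma~\ref{avc-reachline}, every non-faulty process reaches line~\ref{line:avc:binpropzeros} and then invokes $\lit{bin\_propose}(0)$ in every instance it has not yet proposed in, so each non-faulty process eventually proposes a value (either $0$ or $1$) in all $n$ instances. By Lemma~\ref{avc-equivall}, once a non-faulty process has invoked the broadcast in all $n$ instances, the messages it emits (including the batched $\ESTONES$/$\AUXONES$ messages) are interpreted by every non-faulty process exactly as in the original terminating binary consensus algorithm; and by Lemma~\ref{avc-deposit} each such message is eventually deposited into the corresponding instance of every non-faulty process. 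Hence each instance enjoys the full participation of all $n - t$ non-faulty processes, so BBC-Termination of the terminating variant (Appendix~\ref{sec:bvbincons}, invoked under the partial-synchrony assumption of Section~\ref{sec:avc:avc}) applies and every instance decides. I would close this step by appealing to the modified halt rule of Appendix~\ref{sec:bvbincons} (halting only at the end of round $r_{max} + 2$) to confirm that no instance is abandoned before its decision is reached.

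For the third point, consider any instance that decides $1$. By the intrusion-tolerance of binary consensus (BBC-Validity), $1$ can be decided only if some non-faulty process proposed $1$ there, which by the reduction occurs only after that process AARB-delivered the corresponding $(m, \sigma)$ with label $\ms{label} = H(m \mid\mid \sigma)$. By AARB-Termination-2, every non-faulty process then eventually AARB-delivers $(m, \sigma)$, so the wait at line~\ref{line:avc:waitforreturn} completes for each such instance. Combining the three points, every non-faulty process eventually reaches line~\ref{line:avc:returnprops} and invokes $\lit{AVC-decide}[\ms{ID}](V)$, as required.

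I expect the main obstacle to be the second point: certifying that \emph{every} instance terminates despite the handler's batched, complemented zero-messages and the possibility that an instance is still unlabelled at a process at the moment it wishes to propose $0$ in it. The delicate issue is that a process cannot emit the batched $(\ms{ID}, \ESTONES, r, ones)$ message until it has invoked the broadcast in all $n$ instances, so I must rule out a cyclic wait in which blocking in one instance prevents a process from ever completing round $r$ in another. Lemmas~\ref{avc-labelsend}, \ref{avc-deposit} and \ref{avc-equivall} are precisely the tools for breaking such a dependency, but marshalling them to show that progress holds \emph{simultaneously} across all $n$ instances — and reconciling the per-instance local halt with the global $r_{max} + 2$ rule — is where the care is needed.
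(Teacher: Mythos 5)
Your decomposition and choice of tools match the paper's proof almost exactly: Lemma~\ref{avc-reachline} to reach line~\ref{line:avc:binpropzeros}, then Lemma~\ref{avc-equivall} (with Lemma~\ref{avc-deposit}) to reduce the behaviour of all $n$ instances to that of the original binary consensus, then BBC-Termination. Your treatment of the third blocking point (the wait for AARB-delivery of values decided $1$) is actually more explicit than the paper's, which leaves it implicit; your argument via intrusion tolerance and AARB-Termination-2 is sound, and the paper's Lemma~\ref{avc-labelsend} gives the same conclusion another way.

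The one place you stop short of a complete argument is exactly the place you flag: certifying that every instance terminates despite the batched \ESTONES/\AUXONES messages. Saying that ``each instance enjoys the full participation of all $n-t$ non-faulty processes, so BBC-Termination applies'' is not quite enough, because BBC-Termination in the terminating variant rests on a partial-synchrony argument (Lemma 11 of~\cite{CGLR18}) that executions eventually proceed in synchronous steps after GST, and the batching could in principle delay messages in one instance arbitrarily while another instance blocks. The paper closes this by observing that the \emph{slowest} instance in AVCP is no slower than it would be if $n$ independent copies of the original routine were running (a zero-message for round $r$ is emitted as soon as all $n$ instances have invoked their round-$r$ broadcast, which they all eventually do once line~\ref{line:avc:binpropzeros} is reached), so Lemma 11 of~\cite{CGLR18} applies to the slowest instance and a fortiori to all others; BBC-Termination (Lemma 9 of~\cite{CGLR18}) then finishes the job. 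That is the missing two or three sentences; with them your proof coincides with the paper's.
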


\begin{proof}
	Consider a non-faulty process $p$.
	By Lemma \ref{avc-reachline}, $p$ reaches line \ref{line:avc:binpropzeros} with $n - t$ values in their local array $\ms{decided\_ones}$.
	At this point, $p$ invokes $\lit{bin\_propose}(0)$ in all other instances of binary consensus.
	Since all $n$ instances are thus executed, $p$ will eventually invoke ``$\lit{broadcast}$ $(\ms{ID}, $\EST$, r, \ms{label}, b)$'' in all $n$ instances of binary consensus.
	By Lemma~\ref{avc-equivall}, non-faulty processes interpret $p$'s corresponding messages as if $p$ were executing the original binary consensus algorithm.
	Similarly, $p$ will eventually invoke ``$\lit{broadcast}$ $(\ms{ID}, $\AUX$, r, \ms{label}, b)$'' in all $n$ instances.

	Note that DBFT~\cite{CGLR18} is guaranteed termination with $n$ instances of binary consensus as each instance is guaranteed to terminate due to BBC-Termination.
	It remains to show that AVCP preserves BBC-Termination for all $n$ instances of consensus.
	Lemma 11 in \cite{CGLR18} states that, at some point after which the global stabilistation time (GST) is reached, any execution of the original binary consensus routine eventually executes in synchronous steps.
	Note that in our construction, the slowest binary consensus instance is no slower than if $n$ instances of the original binary consensus routine were executing.
	For example, in AVCP, a message of the form $(\ESTONES, r, ones)$ is sent after $n$ invocations of ``$\lit{broadcast}$ $(\ms{ID}, $\EST$, r, \ms{label}, b)$'' are performed for a given $r$.
	Then, Lemma~\ref{avc-equivall} implies that messages sent by non-faulty processes remain unchanged as in an execution of $n$ instances of the original binary consensus routine.
	Thus, Lemma 11 holds for the slowest instance of consensus.
	Since all other instances are faster, they too are synchronised, and so Lemma 11 holds for them also.
	By Lemma 9 in \cite{CGLR18}, which asserts that BBC-Termination holds for an instance of binary consensus, it follows that each non-faulty process will decide a value in the $n$ instances of consensus in AVCP.
	At this point, $p$ immediately invokes $\lit{AVC-decide}$ with respect to $\ms{ID}$ and $\ms{decided\_ones}$ (at line \ref{line:avc:returnprops}).
	Since $p$ was arbitrary (but non-faulty), it follows that AVC-Termination holds.
\end{proof}

\begin{theorem}
\label{avc-agree}
AVCP ensures AVC-Agreement. That is, all non-faulty processes that invoke $\lit{AVC-decide}[\ms{ID}](V)$ do so with respect to the same vector $V$ for a given $\ms{ID}$.
\end{theorem}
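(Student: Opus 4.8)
The plan is to show that the decided vector $V = \ms{decided\_ones}$ is completely determined by three pieces of data on which all non-faulty processes must agree: the set of binary consensus instances that decide $1$, the tuple $(m, \sigma)$ attached to each such instance, and the fixed output ordering. I would fix two arbitrary non-faulty processes $p$ and $p'$ that invoke $\lit{AVC-decide}[\ms{ID}](\cdot)$ — both of which decide in all $n$ instances of binary consensus by Theorem~\ref{avc-term} — and argue that their returned vectors coincide entry for entry.

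First I would establish a common per-instance decision. By Lemma~\ref{avc-equivall}, the handler makes each process' behaviour across the $n$ augmented instances indistinguishable from an execution of the original binary consensus routine, so BBC-Agreement applies unchanged to every instance. The instances must be matched across processes by their \emph{labels} rather than by their indices: by Lemma~\ref{avc-labelling} each label uniquely identifies one instance and there are at most $n$ of them, and by AARB-Termination-2 (Lemma~\ref{arb-term-2}) the set of AARB-delivered tuples, and hence the set of labels, is common to all non-faulty processes. Thus for every label $\ms{label}$ that appears at any non-faulty process, both $p$ and $p'$ eventually run the instance $\ms{labelled}[\ms{label}]$ and, by BBC-Agreement, decide the same binary value in it.

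Next I would pin down the contents of $V$. Only labelled instances can decide $1$: by intrusion-tolerance (BBC-Validity), a decision of $1$ forces some non-faulty process to have proposed $1$, which occurs only after that process AARB-delivers a $\lit{valid}()$ tuple $(m, \sigma)$ with $\ms{label} = H(m \mid \mid \sigma)$ and labels the instance (line~\ref{line:avc:labelinst}). By AARB-Termination-2 every non-faulty process eventually AARB-delivers this same $(m, \sigma)$, so the wait at line~\ref{line:avc:waitforreturn} terminates for both $p$ and $p'$; by the collision-resistance of $H$ recorded in Lemma~\ref{avc-labelling}, $\ms{label}$ determines $(m, \sigma)$ uniquely, so $p$ and $p'$ attach the identical tuple to that instance. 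Combined with the previous paragraph, the set of (label, tuple) pairs for which $1$ is decided is identical at $p$ and $p'$, and since these values are emitted in the a priori agreed order at line~\ref{line:avc:returnprops}, the two vectors are equal, which is AVC-Agreement.

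The main obstacle I anticipate is the argument of the third paragraph: making rigorous the claim that the $n$ dynamically labelled instances correspond across processes so that BBC-Agreement can be applied per label, while correctly handling instances that remain unlabelled or that decide $0$ without the corresponding message ever being delivered. Lemmas~\ref{avc-labelling}, \ref{avc-deposit} and \ref{avc-equivall}, together with AARB-Termination-2, are exactly what is needed to close this gap; in particular, Lemma~\ref{avc-deposit} guarantees that the zero-message mechanism of the handler feeds every non-faulty instance the same inputs, ruling out any spurious disagreement in the set of instances that decide $1$.
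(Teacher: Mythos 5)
Your proposal is correct and follows essentially the same route as the paper's proof: match instances across processes by label, apply Lemma~\ref{avc-equivall} so that BBC-Agreement carries over to every instance, use AARB-Termination-2 (together with unicity/collision-resistance of labels) to ensure both processes attach the same tuple to each instance deciding $1$, and conclude via the a priori agreed output order. The only cosmetic difference is that you derive ``only labelled instances can decide $1$'' from intrusion-tolerance, where the paper cites Lemma~\ref{avc-labelsend} for the same fact; your write-up is if anything more explicit about the label-correspondence step that the paper leaves implicit.
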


\begin{proof}
	Consider two non-faulty processes, $p_i$ and $p_j$ that have decided (AVC-Termination).
	Suppose that $p_i$ has decided $\ms{decided\_ones}$.
	Each value in $\ms{decided\_ones}$ must have been AARB-delivered to $p_i$ in order to decide 1 in the corresponding instances of binary consensus (Lemma~\ref{avc-labelsend}). % at line x?
	By AARB-Unicity and AARB-Termination-2, $p_j$ will AARB-deliver all values in $\ms{decided\_ones}$.
	By Lemma~\ref{avc-reachline}, all non-faulty processes eventually decide 1 in $n - t$ instances of binary consensus.
	Consequently, $p_i$ and $p_j$ participate in all $n$ instances of binary consensus which by Lemma~\ref{avc-equivall} is equivalent to participating in $n$ instances of binary consensus as per the original algorithm.
	Then, by BBC-Agreement, $p_j$ decides 1 in an instance of binary consensus if and only if $p_i$ decides 1 in that instance.
	Consequently, each corresponding value, which have all been AARB-delivered prior to deciding 1 in each instance of binary consensus (Lemma~\ref{avc-labelsend}) will be added to $p_j$'s local array $\ms{decided\_ones}'$.
	Thus, $\ms{decided\_ones}' = \ms{decided\_ones}$.

%	By Lemma \ref{avc-reachline}, at least $n - t$ 1's are decided, and the corresponding proposals are deposited in $p_i$ and $p_j$'s $\ms{decided\_ones}$ arrays.
%	Consider a particular instance $inst$ that decides 1 at this point.
%	By the intrusion tolerance of the binary consensus routine, a non-faulty process must have invoked $inst.\lit{bin\_propose}(1)$.
%	Since this can only be issued at line \ref{line:avc:invokebpone}, after the AARB-delivery of $(m, \sigma)$, it follows from AARB-Unicity and AARB-Termination-2 that all non-faulty processes AARB-deliver $(m, \sigma)$
%	In particular, given $p_i$ labels an instance $H(m \mid \mid \sigma)$, $p_j$ eventually will too.
%	Moreover, by Lemma \ref{avc-deposit}, all messages sent by non-faulty processes relating to \ms{inst} will eventually be processed by $p_i$ and $p_j$.
%	Then, by BBC-Agreement, for all instances of binary consensus for which $p_i$ decides 1 in, so too does $p_j$.
%	Then, all labels of instances for which 1 is decided in are added to the respective set $\ms{decided\_ones}$.
%	By AVC-Termination, $p_i$ and $p_j$ eventually return $\ms{decided\_ones}_i = \ms{decided\_ones}_j$.
%	Since $p_i$ and $p_j$ were arbitrary non-faulty processes, it it follows that AVC-Agreement holds.
\end{proof}
%We restate the definition of AVC-Anonymity, then prove it holds under arbitrary composition in a restricted setting and sequential composition in a general setting under different assumptions:

\begin{theorem}
\label{avc-anon}
AVCP ensures AVC-Anonymity under the assumptions of our model (Section~\ref{sec:model:model}) and the three conditions described in Section~\ref{sec:anon}.
AVC-Anonymity is stated as follows:
Suppose that non-faulty process $p_i$ invokes $\lit{AVCP}[\ms{ID}](m)$ for some $m$ and given $\ms{ID}$, and has previously invoked an arbitrary number of $\lit{AVCP}[\ms{ID_j}](m_j)$ calls where $\ms{ID} \neq \ms{ID_j}$ for all such $j$.
Suppose that the adversary $A$ is required to output a guess $g \in \{1, \dots, n\}$, corresponding to the identity of $p_i$ after performing an arbitrary number of computations, allowed oracle calls and invocations of networking primitives.
Then $Pr(i = g) = \frac{1}{n - t}$.
%		Suppose that $m$ is proposed by a non-faulty process $p_i$.
%		Then, it is impossible for the adversary to determine the value of $i$ with probability $Pr > \frac{1}{n - t}$.
\end{theorem}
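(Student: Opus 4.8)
The plan is to reduce AVC-Anonymity to AARB-Anonymity (Lemma~\ref{arb-anon}), exploiting the fact that the \emph{only} identity-sensitive action a non-faulty process performs in $\lit{AVCP}$ is its single invocation of $\lit{AARBP}[\ms{ID}](m)$ at line~\ref{line:avc:arbbcast}. Every other message that $p_i$ emits---the $\EST$, $\AUX$, $\ESTONES$ and $\AUXONES$ messages of the binary consensus component---travels over regular (non-anonymous) channels, and I would argue that the content and timing of these messages are a deterministic, identity-independent function of the tuples $p_i$ has AARB-delivered together with the regular-channel messages it has received. Granting this, an adversary observing $p_i$'s regular-channel traffic learns nothing that distinguishes $p_i$ as the originator of $m$ from any other non-faulty process, so the whole burden of de-anonymisation falls on the anonymous broadcast, which Lemma~\ref{arb-anon} already shows cannot succeed with probability exceeding $\frac{1}{n - t}$.

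First I would establish the key claim that binary-consensus behaviour is identity-independent. Because processes do not know a priori which process proposed which value, every instance is labelled by $H(m \mid \mid \sigma)$ rather than by a process index (Lemma~\ref{avc-labelling}). Consequently, when a non-faulty process AARB-delivers a tuple $(m, \sigma)$ it proposes $1$ in the instance labelled $H(m \mid \mid \sigma)$ irrespective of who originated $(m, \sigma)$; in particular, the originator $p_i$---which AARB-delivers its own tuple, since it anonymously broadcasts to itself---executes exactly the same steps in that instance as any other non-faulty process. Using Lemmas~\ref{avc-labelsend} and~\ref{avc-equivall}, I would observe that each regular-channel message $p_i$ sends (whether a single $b = 1$ message or an aggregate zero message derived from $\ms{ones}[\cdot][r]$) is determined solely by the set of tuples $p_i$ has AARB-delivered and the messages it has received. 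Since these inputs carry no marker identifying the originator of any proposal, $p_i$'s binary-consensus transcript is drawn from the same distribution regardless of the value of $i$.

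Next I would invoke the independence assumption of Section~\ref{sec:anon}, exactly as in the proof of Lemma~\ref{arb-anon}. That assumption guarantees that, within any protocol execution and across the previously invoked instances $\lit{AVCP}[\ms{ID}_j](m_j)$, the timing and ordering of messages delivered over anonymous channels is statistically independent of those delivered over regular channels. Since all binary-consensus messages travel over regular channels, the adversary's entire regular-channel view---including any timing correlation it might attempt between the progress of binary consensus and the anonymous broadcast---is independent of the anonymous delivery of $(m, \sigma)$. Hence the correlation attacks ruled out in Lemma~\ref{arb-anon} are again precluded, and the adversary's only remaining leverage is the anonymously broadcast tuple itself. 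Applying Lemma~\ref{arb-anon} to that tuple, together with the $t$ identities the adversary already controls, yields $Pr(i = g) = \frac{1}{n - t}$.

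I expect the main obstacle to be rigorously justifying the identity-independence claim of the second paragraph: I must rule out that the \emph{interaction} between the concurrently running binary consensus and the ongoing AARB-deliveries creates a side channel absent in bare $\lit{AARBP}$. The delicate point is that the originator $p_i$ necessarily AARB-delivers $(m, \sigma)$ and hence proposes $1$ in the corresponding instance---but by AARB-Termination-1 so does every other non-faulty process, so this does not single $p_i$ out. Making this precise amounts to checking that no branch of the handler (Algorithm~\ref{alg:bchandler}) conditions its output on \emph{whether the local process was the originator} of a proposal; this holds because the protocol's state (the $\ms{labelled}$ and $\ms{unlabelled}$ structures, $\ms{ones}$, and the per-instance estimates) is a function only of AARB-delivered and regular-channel inputs, never of a local ``I proposed this'' flag.
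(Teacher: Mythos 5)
Your proposal is correct and follows essentially the same route as the paper: the paper's own proof simply observes that each process invokes $\lit{AARBP}$ exactly once at the start of $\lit{AVCP}$ and then defers entirely to the proof of AARB-Anonymity (Lemma~\ref{arb-anon}) for the case where only the initial broadcast is anonymous. Your argument that the binary-consensus messages are an identity-independent function of AARB-delivered tuples and regular-channel inputs (with no ``I proposed this'' branch in the handler) is a detail the paper leaves implicit, so you have in fact supplied more justification than the published proof does.
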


\begin{proof}
	We note that each process invokes AARBP exactly once at the beginning of any execution of AVCP.
	In the case all communication is performed over anonymous channels, it suffices to observe that no process sends more than one message of a given type across many executions of AVCP.
	Then, from the proof of correctness of traceable broadcast (Lemma~\ref{anon-bcast}), it follows that $A$ is unable to output a guess $g$ s.t. $g \neq \frac{1}{n - t}$.
	In the case that all communication except for the initial broadcasting is performed over reliable channels, the proof of AVC-Anonymity follows from the proof of AARB-Anonymity (Lemma~\ref{arb-anon}).
\end{proof}

\section{Ensuring termination in binary consensus} \label{sec:bvbincons}
The terminating algorithm (Figure 2 in~\cite{CGLR18}) uses in each round an additional $\lit{broadcast}$, which is performed by a rotating coordinator.
This message contains the header \COORDVALUE.
Note that $i \in \{1, \dots, n\}$ denotes the index of the process $p_i$ who is locally executing instructions, and that these instructions are executed after $\ms{bin\_values}[r] \neq \emptyset$ (i.e. after line~\ref{line:obc:bvalsnonempty} of Algorithm~\ref{alg:obc}).

\begin{algorithm}
\begin{algorithmic}[1]
	\caption{Additional broadcast in Figure 2 of \cite{CGLR18}}
\State $coord \leftarrow (r - 1) \pmod{n} + 1$
\If{$i = coord$}
\State ${w} = \ms{bin\_values}[r]$
\State $\lit{broadcast}$ $($\COORDVALUE$, r, w)$
\EndIf
\end{algorithmic}
\end{algorithm}

This broadcast call can be handled exactly as in the logic beginning at lines \ref{line:avc:uponbcastest} and \ref{line:avc:uponbcastaux} of the handler (Algorithm~\ref{alg:bchandler}), provided that for a given round $r$, the coordinator is common across all $n$ instances of consensus.
No other communication steps are added in the terminating algorithm.

In the non-terminating binary consensus algorithm, a process executes indefinitely after invoking $\lit{decide}()$.
The terminating variant, by contrast, imposes certain conditions upon termination, which are checked at the end of each round $r$.
In the context of the following algorithm excerpt, a process that invokes the instruction $\lit{halt}$ discontinues executing instructions in the binary consensus instance that $\lit{halt}$ was called in, and drops all related messages.

\begin{algorithm}
\begin{algorithmic}[1]
	\caption{Termination conditions in Figure 2 of \cite{CGLR18}}
	\If{$\lit{decide}()$ invoked in round $r$}
\WUntil $\ms{bin\_values[r]} = \{0, 1\}$ \EndWUntil
\Else
\If{$\lit{decide}()$ invoked in round $r - 2$}
$\lit{halt}$
\EndIf
\EndIf
\end{algorithmic}
\end{algorithm}

It is shown in \cite{CGLR18} that, given that some non-faulty process decides in round $r$, all non-faulty processes will decide by round $r + 2$.
Note that a process may invoke $\lit{decide}()$ in different rounds in different instances of binary consensus.
Thus, a process may invoke $\lit{halt}$ in some, but not all, instances of binary consensus.
Suppose that a process has invoked $\lit{halt}$ in $k$ instances of binary consensus, where $1 < k < n$, in some round $r$.
Then, broadcasting a message with header \ESTONES \ in the handler (Algorithm~\ref{alg:bchandler}), say, will be impossible, since $counts[$\EST$][r]$ will never be incremented to $n$.

To cope with this problem, we define a new termination condition.
Let $r_{max}$ be the largest round number of the $n$ instances of binary consensus that a given process decides in.
Then, that process can invoke $\lit{halt}$ at the end of round $r_{max} + 2$.

%\newpage

\section{AARBP} \label{sec:arbproof}
In this appendix, we first present Anonymised All-to-all Reliable Broadcast Protocol.
Notably, the protocol as presented terminates in after single anonymous message step and two regular message steps.
We then show that it satisfies each property of the anonymity-preserving all-to-all reliable broadcast problem described in Section~\ref{sec:arb:arb}.
\subsection{Protocol}

\vspace{0.5em}\noindent{\bf State and messages.}
%\paragraph{State and messages.}
%\subsubsection{State} 
Each process tracks $\ms{ID}$, which identifies an instance of AARB-broadcast.
For each $\ms{ID}$, each process tracks two buffers: $\ms{m\_buffer}$, corresponding to messages that they may AARB-deliver, and $\ms{m\_delivered}$, corresponding to all messages that they have AARB-delivered; both are initially set to $\emptyset$.
%\paragraph{Message types.}
%\subsubsection{Message types} 
For a given instance of AARBP identified by $\ms{ID}$, all messages sent by non-faulty processes must contain $\ms{ID}$.
Similarly, messages must contain one of three headers:
%\begin{itemize}
%	\item 
	\INIT, corresponding to the initial broadcast of a process' proposal,
%	\item 
	\ECHO, corresponding to an acknowledgement of the \INIT \ message, or
%	\item 
	\READY, corresponding to an acknowledgement that enough processes have received the message to ensure eventual, safe AARB-delivery.
%\end{itemize}

\vspace{0.5em}\noindent{\bf Protocol.}
%\paragraph{Protocol.}
We now present AARBP (Algorithm~\ref{alg:arb}).
\begin{algorithm}[ht]
	% state in pseudo-code
%	\begin{multicols}{2}
	\caption{AARBP} %as invoked by $p'$}
	{\footnotesize
	\label{alg:arb}
	\begin{algorithmic}[1]
%	\Part{Initial State}
%	\State $\ms{m\_buffer} \leftarrow \emptyset$ \Comment{Buffer of messages}%\Comment{Valid, non-duplicate messages, where $p'$ received and processed \INIT$(m)$ for all $m \in \ms{m, sigmas\_buffer}$}
%	\State $\ms{m\_delivered} \leftarrow \emptyset$  \Comment{Delivered messages}%\Comment{Messages that have been ARB-delivered}
%	\State $\ms{ID}$ \Comment{Identifier for the instance of AARB-broadcast}
%	\State $pk_N$, $sk_i$, $\ms{issue}$ \Comment{Distributed a priori}%\Comment{Set of $n$ public keys, whoser contents and order is agreed upon a priori}
%	\State $L = (\ms{issue}, pk_N)$ \Comment{Tag derived from $\ms{issue}$ and $pk_N$}
%	\EndPart
%	\Statex
	\Part{$\lit{AARBP}[\ms{ID}](m')$}
	\State $\sigma' \leftarrow \lit{Sign}(i, \ms{ID}, m')$ \label{line:arb:sign}
	\State $\lit{anon\_broadcast}$ $(\ms{ID},$\INIT$, m', \sigma')$ \label{line:ab:initbcast}
	\EndPart
	\Statex
	\Upon{initial receipt of $(\ms{ID},$\INIT$, m, \sigma)$} \label{line:ab:init}
	\State $\ms{valid} \leftarrow (\lit{VerifySig}(\ms{ID}, m, \sigma) = 1)$ \label{line:ab:isvalid}
	%\Comment{Confirming $s(v)$ exists and is well-formed}
	\If{$\ms{valid}$}
	\For{\textbf{each} $(m^{*}, \sigma^{*}) \in \ms{m\_buffer}$ $\cup$ $\ms{m\_delivered}$} \label{line:ab:beginfor}
	\If{$\lit{Trace}(\ms{ID}, m, \sigma, m^{*}, \sigma^{*}) \neq$ ``indep''}
	\State $\ms{valid} \leftarrow \lit{false}$ \Comment{Double-signing detected}
	\State \textbf{break}
	\EndIf
	\EndFor \label{line:ab:endfor}
	\State $\ms{m\_buffer} \leftarrow \ms{m\_buffer} \cup \{(\ms{m, \sigma})\}$ \label{line:ab:deposit}
	\If{$\ms{valid}$}
	\State $\lit{broadcast}$ $(\ms{ID},$\ECHO$, m, \sigma)$ \label{line:ab:bcastecho}
	\EndIf
	\EndIf
%	\If{\ms{s(v)} is not unique}
%	\State Consider the process with $pk = pk_v$ deduced from the LRS to be faulty
%	\EndIf
	\EndUpon
%	\newpage
\Statex	
%	}{
	\Upon{receipt $(\ms{ID},$\ECHO$, m, \sigma)$ from $\lfloor \frac{n + t}{2} \rfloor + 1$ proc.}\label{line:ab:recvechos}
	\If{$(\ms{ID},$\READY$, m, \sigma)$ not yet broadcast}
	\State $\lit{broadcast}$ $(\ms{ID},$\READY$, m, \sigma)$ \label{line:ab:echosbcastready}
	\EndIf
	\EndUpon

	\Statex
	\Upon{receipt $(\ms{ID},$\READY$, m, \sigma)$ from $(t + 1)$ proc.} \label{line:ab:toneif}
	\If{$(\ms{ID},$\READY$, m, \sigma)$ not yet broadcast}
	\State $\lit{broadcast}$ $(\ms{ID},$\READY$, m, \sigma)$ \label{line:ab:tone}
	\EndIf
	\EndUpon
	\Statex
	\Upon{receipt $(\ms{ID},$\READY$, m, \sigma)$ from $(2t + 1)$ proc.}\label{line:ab:twotone}
%	\State $\ms{deliver} \leftarrow true$
%	\For{each $\ms{m, sigma}^* \in \ms{m, sigmas\_delivered}$}
%	\If{$\ms{Trace(m, sigma, m, sigma^{*})} \neq ``indep''$}
%	\State $\ms{deliver} \leftarrow \ms{false}$
%	\State break
%	\EndIf
%	\EndFor
%	\If{\ms{deliver}}
	\If{$(m, \sigma) \notin \ms{m\_delivered}$} \label{line:ab:ifnotin}
	\State $\ms{m\_delivered}$ $\leftarrow$ $\ms{m\_delivered}$ $\cup$ $\{(m, \sigma)\}$
	\State $\ms{m\_buffer}$ $\leftarrow$ $\ms{m\_buffer} \setminus \{(\ms{m, \sigma)}\}$
	\State $\lit{AARB-deliver}[\ms{ID}](m, \sigma)$ \label{line:ab:arbdeliver}
	\EndIf
%	\EndIf
%	\State $\ms{ids_i}$ $\leftarrow$ $\ms{ids_i}$ $\cup$ $H(v||s(v'))$
	\EndUpon
	\end{algorithmic}
	}
%	\end{multicols}
\end{algorithm}

%\subsubsection{Protocol}
\sloppy{
To begin, each process $p_i$ broadcasts $(\ms{ID},$\INIT$, m', \sigma')$ over anonymous channels (line \ref{line:ab:initbcast}), where $\sigma' \leftarrow \lit{Sign}(i, \ms{ID}, m')$.
%Importantly, the signer sends \INIT$(msg)$ to itself over anonymous channels. 
%(Note that anonymous channels could be approximated with Tor, for example, where 
%an exit node would perform
%a TCP handshake with the sending process and transmit 
%\INIT$(msg)$.) % relevant?
Upon first receipt of each message of the form $(\ms{ID},$\INIT$, m, \sigma)$, a process checks the following (line \ref{line:ab:init}):}
\begin{itemize}[noitemsep,topsep=5pt,parsep=2pt,partopsep=5pt,itemindent=1em,listparindent=1em,leftmargin=1em]
	\item Whether the signature $\sigma$ is well-formed as per $\lit{VerifySig}$, verifying the signer's membership in $P$ (line \ref{line:ab:isvalid}).
	\item Whether any message in $\ms{m\_buffer} \cup \ms{m\_delivered}$ is not independent from $m$ via $\lit{Trace}$, ensuring AARB-Unicity as $m$ is discarded if double-signing is detected (lines \ref{line:ab:beginfor} to \ref{line:ab:endfor}).
\end{itemize}
Then, given that $(m, \sigma)$ passes the above checks, $(\ms{ID},$\ECHO$, m, \sigma)$ is broadcast (line \ref{line:ab:bcastecho}).
Note that this broadcast, and all subsequent broadcasts with respect to $(m, \sigma)$, are performed over regular (reliable, non-anonymous) channels of communication.
To mitigate de-anonymisation, the signer of $m$ performs the same message processing and message propagation as all other non-faulty processes.
%
%Now, suppose that $p$ produced $msg = (v, s(v))$ and was mandated to \textit{not} broadcast \ECHO$(msg)$ by the protocol. Assuming all processes bar $p$ were to broadcast \ECHO, which is a conceivable message pattern, an interested process could observe the identities of such $n - 1$ senders and infer $p$'s identity. Moreover, to 
%To  mitigate timing analysis, $p$ should both send \INIT$(msg)$ via anonymous links and perform verification/tracing of $msg$ such that its behaviour becomes effectively indistinguishable from that of any other (non-faulty) process.

The rest of the protocol proceeds as per Bracha's reliable broadcast~\cite{Bracha1987}:
%\begin{itemize}
%	\item 
If processes receive $(\ms{ID},$\ECHO$, m, \sigma)$ from more than $\frac{n + t}{2}$ different processes, they broadcast $(\ms{ID},$\READY$, m, \sigma)$ (lines~\ref{line:ab:recvechos}--\ref{line:ab:echosbcastready}).
%	\item 
	If processes receive $(\ms{ID},$\READY$, m, \sigma)$ from $t + 1$ different processes, they broadcast $(\ms{ID},$\READY$, m, \sigma)$ if not yet done (lines~\ref{line:ab:toneif}--\ref{line:ab:tone}). This ensures convergence and implies that at least one non-faulty process must have sent $(\ms{ID},$\READY$, m, \sigma)$ to the receiving process.
%	\item 
	Once processes have received $(\ms{ID},$\READY$, m, \sigma)$ from $2t + 1$ different processes, they AARB-deliver $(m, \sigma)$ with respect to $\ms{ID}$ if not yet done (line \ref{line:ab:arbdeliver}). At this point, at least $t + 1$ non-faulty processes must have broadcast $(\ms{ID},$\READY$, m, \sigma)$. Thus, all non-faulty processes will eventually broadcast $(\ms{ID},$\READY$, m, \sigma)$ at line \ref{line:ab:tone} if not yet done.
%\end{itemize}

%We note, in the absence of the propagation of \READY \ messages, that one can achieve all properties bar ARB-Termination-2 with $t < \frac{n}{3}$ in \textit{two} message steps \cite{Raynal2018}. 
%however, proves to be 
%in contexts where agreement is paramount, like 
%
%We defer a proof of the protocol's correctness to Appendix \ref{sec:arbproof}.

\vspace{0.5em}\noindent{\bf Complexity.}
%\paragraph{Complexity.} 
%\subsection{Complexity} 
Whilst Bracha's protocol \cite{Bracha1987} requires three message steps to converge, AARBP requires an initial, anonymous message step and two (regular) message steps.
%Over Tor, we would spend additional time establishing a circuit before propagating a message over three hops (the default amount). 
Both AARBP and $n$ invocations of Bracha's protocol have a message complexity of $O(n^3)$.
If we consider unsigned messages to be of size $O(1)$, then each message propagated in AARBP is of size $O(kn)$, the size of a TRS as in \cite{Fujisaki2007} where $k$ is a security parameter. %, where $k$ is a security parameter %(of length $\sim$256 bits as per current convention in ECC\vincent{definition or reference for ECC}).
Thus, the bit complexity of all-to-all AARBP is $O(kn^4)$ (as opposed to $O(n^3)$ in all-to-all reliable broadcast~\cite{Bracha1987}).
To reduce the bit complexity when some messages are delivered in order, we can hash \ECHO \ and \READY \ messages~\cite{CKPS01}.
Consequently, communication complexity can be reduced to $O((c + k)n^3)$.

We consider cryptographic overhead as if Fujisaki and Suzuki's TRS scheme~\cite{Fujisaki2007} were used.
In AARBP, each process signs one message ($O(n)$ work), verifies up to $n$ messages ($O(n^2)$ work), and perform tracing upon receipt of each \INIT \ message (up to $n$).
Tracing requires $O(n^3)$ work naively and $O(n^2)$ expected work if signatures are stored in and accessed using hash tables.
\subsection{Proof of correctness}
We now prove that AARBP satisfies each property of the anonymity-preserving all-to-all reliable broadcast problem.
\begin{remark}
	Let $m, n, t$ be positive integers s.t. $n > 3t$. We have: \[m > \frac{n - t}{2} \Leftrightarrow m \geq \lfloor \frac{n - t}{2} + 1 \rfloor.\]
\end{remark}
Thus, we refer to ``$\lfloor \frac{n - t}{2} \rfloor + 1$ processes'' and ``more than $\frac{n - t}{2}$ processes'' interchangeably.

\begin{lemma}
\label{arb-signing}
	AARBP ensures AARB-Signing.
	That is, if a non-faulty process $p_i$ AARB-delivers a message with respect to $\ms{ID}$, then it must be of the form $(m, \sigma)$, where a process $p_i \in P$ invoked $\lit{Sign}(i, \ms{ID}, m)$ and obtained $\sigma$ as output.
\end{lemma}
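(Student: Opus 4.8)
The plan is to trace the delivery event backwards through the protocol's quorums until I reach a non-faulty process that explicitly ran the signature check, and then close the argument with unforgeability. Throughout I use the standing hypotheses $n > 3t$ and that at most $t$ processes are faulty. First I would observe that a non-faulty process AARB-delivers $(m,\sigma)$ with respect to $\ms{ID}$ only at line~\ref{line:ab:arbdeliver}, that is, after receiving $(\ms{ID},\READY,m,\sigma)$ from $2t+1$ distinct processes. Since at most $t$ of these are faulty, at least $t+1$ non-faulty processes must have broadcast $(\ms{ID},\READY,m,\sigma)$.

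The key step is to isolate a non-faulty process that reached its \READY broadcast via the \ECHO quorum rather than the \READY-amplification path. I would consider, among all non-faulty processes that broadcast $(\ms{ID},\READY,m,\sigma)$, the one doing so first in real time. This process cannot have triggered its broadcast at line~\ref{line:ab:tone}, because that path requires receiving $t+1$ copies of $(\ms{ID},\READY,m,\sigma)$; among any $t+1$ senders at least one is non-faulty and so broadcast \READY strictly earlier, contradicting minimality. Hence the first non-faulty \READY-sender broadcast at line~\ref{line:ab:echosbcastready}, after receiving $(\ms{ID},\ECHO,m,\sigma)$ from more than $\frac{n+t}{2}$ distinct processes. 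Of these at most $t$ are faulty, leaving more than $\frac{n+t}{2}-t=\frac{n-t}{2}>0$ non-faulty \ECHO-senders (using $n>3t$), so at least one non-faulty process broadcast $(\ms{ID},\ECHO,m,\sigma)$.

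I would then invoke the fact that a non-faulty process broadcasts an \ECHO message only after the checks at line~\ref{line:ab:init} succeed, in particular the well-formedness test $\lit{VerifySig}$ at line~\ref{line:ab:isvalid} accepted $\sigma$ on message $m$ with tag $\ms{ID}$. Finally, by the unforgeability of the TRS scheme (which the paper notes follows from TRS-Tag-linkability and TRS-Exculpability, and assumes throughout), any $\sigma$ passing $\lit{VerifySig}$ for $(\ms{ID},m)$ must have been produced by some ring member $p_j\in P$ invoking $\lit{Sign}(j,\ms{ID},m)$. This establishes AARB-Signing.

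The hard part will be the back-tracing step: showing that some \emph{non-faulty} process genuinely executed the signature verification, rather than the \READY message having merely been circulated among faulty processes. The ``first non-faulty \READY-sender'' argument resolves this, but it must be stated carefully, since the \READY-amplification path can in principle be seeded by faulty processes; the point is that the $t+1$ quorum forces at least one honest, hence strictly earlier, \READY-sender, so the minimal honest sender cannot lie on that path and must have used the \ECHO quorum. Everything downstream — the honest \ECHO-sender's explicit $\lit{VerifySig}$ call and the reduction to unforgeability — is then immediate.
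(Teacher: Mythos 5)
Your proof is correct and follows essentially the same route as the paper's: trace the delivery back through the $2t{+}1$ \READY{} quorum to a non-faulty \READY{}-sender who must have used the \ECHO{} quorum, then to a non-faulty \ECHO{}-sender who executed $\lit{VerifySig}$, and close with unforgeability. Your explicit ``first non-faulty \READY{}-sender'' minimality argument is a slightly more careful rendering of the step the paper states tersely, but the decomposition is identical.
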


\begin{proof}
	Suppose $p$ is non-faulty and AARB-delivers a message $msg$.
	We aim to show that $msg = (m, \sigma)$, where $\sigma \leftarrow \lit{Sign}(i, \ms{ID}, m)$ was called by some process $p_i$.
	For a non-faulty process to broadcast $(\ms{ID}, $\READY$, msg)$, they must have either received $(\ms{ID}, $\READY$, msg)$ from $t + 1$ different processes or $(\ms{ID}, $\ECHO$, msg)$ from $\lfloor \frac{n + t}{2} \rfloor + 1$ different processes.
	Since $t + 1 > t$, there exists a non-faulty process $p'$ that must have broadcast $(\ms{ID}, $\READY$, msg)$ on receipt of $(\ms{ID}, $\ECHO$, msg)$ from $\lfloor \frac{n + t}{2} \rfloor + 1$ different processes to ensure that $msg$ was AARB-delivered to any process.
	Similarly, since $\lfloor \frac{n + t}{2} \rfloor + 1 > t$, there exists a non-faulty process $p''$ that must have broadcast $(\ms{ID}, $\ECHO$, msg)$ to ensure $(\ms{ID}, $\READY$, msg)$ was broadcast by a non-faulty process.
	Process $p''$ must have checked that $msg = (m, \sigma)$ (implicitly) and that $\lit{VerifySig}(\ms{ID}, m, \sigma) = 1$ holds (at line~\ref{line:ab:isvalid}).
	By signature unforgeability, $\lit{VerifySig}(\ms{ID}, m, \sigma) = 1$ implies that $\sigma \leftarrow \lit{Sign}(i, \ms{ID}, m)$ was called by some process $p_i$.
	Thus, the property holds.
\end{proof}

\begin{lemma}
\label{arb-anon}
AARBP ensures AARB-Anonymity under the assumptions of the model (Section~\ref{sec:model:model}) and those made in Section~\ref{sec:anon}.
AARB-Anonymity is stated as follows.
Suppose that non-faulty process $p_i$ invokes $\lit{AARBP}[\ms{ID}](m)$ for some $m$ and given $\ms{ID}$, and has previously invoked an arbitrary number of $\lit{AARBP}[\ms{ID_j}](m_j)$ calls where $\ms{ID} \neq \ms{ID_j}$ for all such $j$.
Suppose that the adversary $A$ is required to output a guess $g \in \{1, \dots, n\}$, corresponding to the identity of $p_i$ after performing an arbitrary number of computations, allowed oracle calls and invocations of networking primitives.
Then $Pr(i = g) = \frac{1}{n - t}$.
\end{lemma}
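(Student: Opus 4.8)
The plan is to show that none of the information channels available to $A$ distinguishes $p_i$ from the other honest processes, and then to reduce any residual advantage to a break of TRS-Anonymity. First I would enumerate the three avenues through which $A$ could learn $i$: (i) the sender of the anonymous \INIT\ broadcast; (ii) the observable behaviour (timing and message pattern) of the broadcaster relative to other processes; and (iii) the traceable ring signature $\sigma$ attached to $m$. Avenue (i) is closed by the model's assumption of reliable anonymous channels (Section~\ref{sec:model:model}, together with the assumptions of Section~\ref{sec:anon}): the \INIT\ message carries no sender-identifying metadata, and in particular $p_i$ sends \INIT\ to itself so that its observable receive pattern coincides with that of every other process. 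Avenue (ii) is closed by the protocol design: as noted in the protocol description, the signer of $m$ performs exactly the same verification, tracing and \ECHO/\READY\ propagation as every other non-faulty process, so $A$'s view of $p_i$ is identically distributed to its view of any other honest process.

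The substantive step is avenue (iii), which I would handle by a reduction to TRS-Anonymity. I would construct an adversary $B$ against the TRS-Anonymity game that internally simulates the entire AARBP environment for $A$: $B$ plays all $n - t$ honest processes (answering $A$'s oracle and network queries by routing every honest signature through the TRS signing oracle, so the simulation never needs a secret key it does not hold), lets $A$ drive the $t$ Byzantine processes, and forwards the challenge signature on $m$ under $\ms{ID}$ to the TRS-Anonymity challenger. Because $A$ controls the $t$ Byzantine processes, it already knows those processes did not produce $\sigma$; hence the effective anonymity set is exactly the $n - t$ honest processes, which is what yields $\frac{1}{n - t}$ rather than $\frac{1}{n}$. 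If $A$ could output $g$ with $Pr(i = g) > \frac{1}{n - t}$, then $B$ would distinguish the challenge signer among the honest ring members with advantage exceeding uniform guessing, contradicting the single-tag clause of TRS-Anonymity.

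To neutralise the prior invocations $\lit{AARBP}[\ms{ID}_j](m_j)$ with $\ms{ID}_j \neq \ms{ID}$, I would invoke the cross-tag unlinkability clause of TRS-Anonymity through a hybrid argument: replacing, one prior invocation at a time, the genuine signature of $p_i$ with an independently generated signature alters $A$'s view only negligibly, since otherwise $A$ would be linking two signatures produced under distinct tags to a common signer. Once every prior signature has been de-correlated from $p_i$, the only tag-$\ms{ID}$ information about $p_i$ left to $A$ is the challenge signature itself, reducing the analysis to the single-invocation case above and giving $Pr(i = g) = \frac{1}{n - t}$.

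The main obstacle I anticipate is making the reduction faithful while keeping the exact-probability bookkeeping consistent across the hybrids. Concretely, $B$ must answer all of $A$'s protocol and signing queries for the honest processes---including the challenge process---without knowing which candidate the TRS challenger selected, which forces the discipline of never signing outside the oracle. I would also need to invoke AARB-Signing (Lemma~\ref{arb-signing}) and TRS-Tag-linkability to argue that an honest $p_i$ never double-signs under $\ms{ID}$, so that the single-tag anonymity clause applies verbatim; and I would track carefully that the corrupted processes are excluded from the anonymity set at every step, so that the count stays at $n - t$ rather than collapsing to $n$ or drifting across the hybrids.
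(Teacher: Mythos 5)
Your decomposition into three leakage avenues matches the spirit of the paper's argument, and your handling of avenue (iii) is in fact more explicit than the paper's: the paper delegates the cryptographic side to a prior lemma on traceable broadcast (Lemma~\ref{anon-bcast}) and to the signature scheme's anonymity/traceability properties, whereas you spell out the reduction to TRS-Anonymity with a simulator and a cross-tag hybrid. That part is sound, modulo the usual caveat that a computational reduction yields a negligible-advantage bound rather than the exact equality $Pr(i = g) = \frac{1}{n - t}$ claimed in the statement.

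The genuine gap is in avenue (ii). You close it ``by the protocol design,'' arguing that because the signer performs the same verification, tracing and \ECHO/\READY\ propagation as every other non-faulty process, $A$'s view of $p_i$ is identically distributed to its view of any other honest process. The paper explicitly rejects this as sufficient: under the model assumptions alone, the network is asynchronous, and the adversary can observe when its corrupted processes receive the anonymously delivered \INIT\ message and when each identified process subsequently sends \ECHO\ and \READY\ over the regular, non-anonymous channels. Nothing in the protocol design prevents these two event streams from being correlated in a given execution (for instance, $p_i$'s \ECHO\ systematically following the anonymous delivery of its own \INIT\ more closely than other processes' \ECHO{}s do), and such correlation would de-anonymise $p_i$ even though every process runs identical code. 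The paper closes this only by invoking the statistical-independence assumption of Section~\ref{sec:anon} --- that the order and timing of message delivery over anonymous channels is independent of that over regular channels, both within a single instance and across the prior invocations $\lit{AARBP}[\ms{ID_j}]$. Your proof never invokes this assumption for avenue (ii), so as written the step ``$A$'s view of $p_i$ is identically distributed'' does not follow; you would need to add the independence assumption as the explicit reason why timing correlation yields no information, and note that it also covers the multi-instance case that your hybrid argument only addresses on the signature side.
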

\begin{proof}
	We proceed by contradiction.
	Let $p_i$ be a non-faulty process that participates in the scenario described in the definition of AARB-Anonymity.
	Suppose that $A$ is able to output $g$ such that $Pr(i = g) \neq \frac{1}{n - t}$.
	Now, in $\lit{AARBP}$, process $p_i$ invokes $\lit{Sign}(i, \ms{ID}, m)$, producing the signature $\sigma$ (line~\ref{line:arb:sign}).
	Then, $p_i$ invokes ``$\lit{anon\_broadcast}$ $(\ms{ID},$\INIT$, m, \sigma)$'' (line~\ref{line:ab:initbcast}).
By Lemma~\ref{anon-bcast} (Section~\ref{sec:arb:arb}) and the symmetry between Algorithm~\ref{alg:tb} and AARBP, $A$ must utilise some combination of regular communication, its message history and previous computation to output $g$ s.t. $Pr(i  = g) \neq \frac{1}{n - t}$.

%	Let $p_a$ be a process that $A$ has corrupted that has invoked $\lit{anon\_receive}$ with respect to $m$.
%	Suppose that $p_a$ has invoked $\lit{anon\_receive}$ with respect to the messages of $k > 0$ non-faulty processes that invoked $\lit{anon\_broadcast}$.
%	Let $K = \{f(i_1), \dots, f(i_k)\}$ denote the indices of the $k$ non-faulty processes such that $f(i_j)$ corresponds to the process that invoked $\lit{anon\_broadcast}$ that resulted in $p_a$'s $j$th invocation of $\lit{anon\_receive}$.
%	Suppose that $A$ outputs a guess $g$ of the value of $i$ based on the sets $K$ it attains at its $t$ corrupted processes.
%	By condition 1 in Section~\ref{sec:anon}, $Pr(f(\alpha) = i) = Pr(f(\beta) = i) = \frac{1}{n - t}$ for all values $\alpha, \beta$ in each set $K$.
%	Consequently, $A$ gains no additional information about $i$ from the ordering of messages anonymously sent. 

%	We now consider the full restrictions on $A$'s behaviour subjected by the definition of AARB-Anonymity.
%	Note that all non-faulty processes invoke $\lit{AARBP}$ with respect to $\ms{ID}$, and so condition 1 in Section~\ref{sec:anon} applies to $\lit{AARBP}$.
	Firstly, we remark that $A$ cannot view the local state or computations of non-faulty processes by assumption, and so cannot, for example, view their calls to $\lit{Sign}$.
	Note that (non-faulty) processes propagate messages over regular (reliable) channels after their initial invocation of $\lit{anon\_broadcast}$ (such as at line~\ref{line:ab:bcastecho}).
	By assumption, $A$ has no inherent knowledge of when $\lit{anon\_send}$ or $\lit{anon\_receive}$ calls are made by non-faulty processes.
	Indeed, $A$ may observe when non-faulty processes respond to $\lit{anon\_receive}$ calls via messages they send over regular channels, and when $A$'s corrupted processes invoke any message passing primitive.
	Thus, under the assumptions of Section~\ref{sec:model:model} alone, it is conceivable that an adversary could correlate timing information regarding message passing over regular channels and anonymous channels in a given protocol execution to de-anonymise $p_i$.
%	To mitigate this, condition 2 (Section~\ref{sec:anon}) ensures that the timing of messages sent in any previous instance of AARBP does not aid $A$'s strategy to de-anonymise $p_i$, since no messages are sent by non-faulty processes for any other instance identified by identifier $\ms{ID_x}$, where $\ms{ID_x} \neq \ms{ID}$, when messages relating to $\ms{ID}$ are sent by any non-faulty process.
	As the network is asynchronous, the timing and relative order in which messages are sent and received by processes in $\lit{AARBP}[\ms{ID}]$ could be arbitrary.
	The independence assumption (Section~\ref{sec:anon}) assumes, however, that the order and timing of message delivery over anonymous channels is statistically independent from that of regular channels in a particular protocol execution.
	Note that this independence holds for the timing and ordering of messages sent of any number of processes during any stage of the protocol.
	In particular, this accounts for the behaviour of non-faulty processes in response to messages received from Byzantine processes of whom $A$ is free to control.
	Consequently, $A$ gains no information about the identity of $p_i$ through correlation, and so $A$ could not have produced $g$ such that $Pr(i = g) \neq \frac{1}{n - t}$ using this strategy.
	Since the independence assumption also considers timing and message ordering \textit{between} instances of AARBP, it follows that $A$ cannot de-anonymise $p_i$ by considering many executions.
	This exhausts $A$'s conceivable strategies to de-anonymise $p_i$ under our model.
%	TODO: change condition 1 to ``distribution of timing/relative message order statistically indistinguishable between anon and normal''?
%	Condition 2 in Section~\ref{sec:anon} implies that $p_i$ will not invoke $\lit{AARBP}$ until after terminating any previous instance.

\end{proof}

\begin{lemma}
\label{arb-term-1}
AARBP ensure AARB-Termination-1.
That is, if a process $p_i$ is non-faulty and invokes $\lit{AARBP}[ID](m)$, all the non-faulty processes eventually AARB-deliver $(m, \sigma)$ with respect to $\ms{ID}$, where $\sigma$ is the output of the call $\lit{Sign}(i, \ms{ID}, m)$.
\end{lemma}

\begin{proof}
	\sloppy{
	Suppose that the non-faulty process $p_i$ invokes AARBP with respect to identifier $\ms{ID}$ and message $m$.
	So, $p_i$ anonymously broadcasts $(\ms{ID},$\INIT$, m, \sigma)$, where $\sigma = \lit{Sign}(i, \ms{ID}, m)$.
	Since the network is reliable, all non-faulty processes eventually receive $(\ms{ID},$\INIT$, m, \sigma)$ (and indeed all messages broadcast by non-faulty processes).
	At line \ref{line:ab:isvalid}, $\lit{VerifySig}$ is queried with respect to $(m, \sigma)$.
	By signature correctness, $\lit{VerifySig}(\ms{ID}, m, \sigma) = 1$ is guaranteed, since $p_i$ invoked $\sigma \leftarrow \lit{Sign}(i, \ms{ID}, m)$.
	So, all non-faulty processes proceed to line \ref{line:ab:beginfor}.
	By traceability and entrapment-freeness, no non-faulty process could have received a message $(m', \sigma')$ such that $\lit{Trace}(\ms{ID}, m, \sigma, m', \sigma') \neq$ ``indep''.
	Consequently, all non-faulty processes reach line~\ref{line:ab:deposit} with value $\ms{valid}$ s.t. $\ms{valid} = \lit{true}$.
	Thus, all non-faulty processes (of which there are at least $n - t$) reach line \ref{line:ab:bcastecho}, broadcasting $(\ms{ID},$\ECHO$, m, \sigma)$.
	Since $n - t > \frac{n + t}{2}$, every non-faulty process will eventually receive more than $\frac{n + t}{2}$ $(\ms{ID},$\ECHO$, m, \sigma)$ messages, fulfilling the predicate at line~\ref{line:ab:recvechos}.
	More than $\frac{n + t}{2} > t + 1$ non-faulty processes will not have broadcast $(\ms{ID},$\READY$, m, \sigma)$, and so will do so (at line \ref{line:ab:echosbcastready}).
	Thus, all honest processes will eventually receive $t + 1$ $(\ms{ID},$\READY$, m, \sigma)$ messages, broadcasting $(\ms{ID},$\READY$, m, \sigma)$ there if not yet done.
	Then, since $n - t \geq 2t + 1$, line \ref{line:ab:twotone} will eventually be fulfilled for each non-faulty process.
Thus, every non-faulty process will AARB-deliver $(m, \sigma)$ at line \ref{line:ab:arbdeliver}, as required.}
\end{proof}

%Indeed, processes are free to propagate \READY \ messages without having received either the corresponding \INIT \ message or the \ECHO \ message from more than $\frac{n + t}{2}$ processes. Depending on the structure of the network, this flexibility may speed up convergence for non-faulty processes.

\begin{lemma}
\label{arb-val}
AARBP ensures AARB-Validity, which is stated as follows.
Suppose that a non-faulty process AARB-delivers $(m, \sigma)$ with respect to $\ms{ID}$.
Let $i = \lit{FindIndex}(\ms{ID}, m, \sigma)$ denote the output of an idealised call to $\lit{FindIndex}$.
Then if $p_i$ is non-faulty, $p_i$ must have anonymously broadcast $(m, \sigma)$.

\end{lemma}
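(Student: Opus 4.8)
The plan is to combine AARB-Signing (Lemma~\ref{arb-signing}) with signature unforgeability and then appeal to the fact that a non-faulty process follows the protocol. First I would invoke Lemma~\ref{arb-signing}: since a non-faulty process AARB-delivers $(m, \sigma)$ with respect to $\ms{ID}$, the delivered value must be of the form $(m, \sigma)$ with $\lit{VerifySig}(\ms{ID}, m, \sigma) = 1$, and $\sigma$ must be the output of a call $\lit{Sign}(j, \ms{ID}, m)$ performed by some process $p_j$. By definition of the idealised $\lit{FindIndex}$, which maps a well-formed signature to the index of its true signer, this signer index is exactly $i = \lit{FindIndex}(\ms{ID}, m, \sigma)$, so $j = i$.

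Next I would argue, via signature unforgeability, that $p_i$ is the unique process that could have produced $\sigma$. Since $\lit{VerifySig}(\ms{ID}, m, \sigma) = 1$ and the signature's true signer is $p_i$, no other process---faulty or otherwise---could have forged a valid signature attributed to $p_i$ under $\ms{ID}$ and $m$. Hence $p_i$ itself must have invoked $\lit{Sign}(i, \ms{ID}, m)$ to obtain $\sigma$.

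Finally, I would use the hypothesis that $p_i$ is non-faulty to conclude that it broadcast the message. In AARBP, the only point at which a non-faulty process calls $\lit{Sign}$ for identifier $\ms{ID}$ is line~\ref{line:arb:sign}, immediately prior to the anonymous broadcast of $(\ms{ID},$\INIT$, m, \sigma)$ at line~\ref{line:ab:initbcast}. Since $p_i$ adheres to the protocol, its invocation of $\lit{Sign}(i, \ms{ID}, m)$ is necessarily followed by an anonymous broadcast of $(m, \sigma)$ with respect to $\ms{ID}$, which is precisely the claim.

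The main obstacle is making precise the role of the idealised $\lit{FindIndex}$ and tying it rigorously to unforgeability: I need to ensure that a well-formed signature has a unique, well-defined true signer and that $\lit{FindIndex}$ returns exactly that index, so that attributing $\sigma$ to a non-faulty $p_i$ genuinely forces $p_i$ to have signed it. Once this is pinned down, the remaining steps are routine appeals to AARB-Signing and to the structure of the protocol followed by non-faulty processes.
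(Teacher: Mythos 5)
Your proof is correct and follows essentially the same route as the paper's: signature unforgeability pins $\sigma$ to $p_i$ as its unique producer, and since a non-faulty $p_i$ only invokes $\lit{Sign}$ immediately before the anonymous broadcast at line~\ref{line:ab:initbcast}, it must have anonymously broadcast $(m, \sigma)$. Your explicit routing through Lemma~\ref{arb-signing} and the semantics of $\lit{FindIndex}$ is a slightly more careful packaging of the paper's opening assertion that ``$p_i$ must have invoked $\sigma \leftarrow \lit{Sign}(i, \ms{ID}, m)$,'' but the substance is identical.
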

\begin{proof}
	Let $p_i$ be non-faulty.
	From the protocol specification, $p_i$ must have invoked $\sigma \leftarrow \lit{Sign}(i, \ms{ID}, m)$.
	By definition of $\lit{Sign}$, no other process could have produced such a value of $\sigma$.
	By traceability and entrapment-freeness, no other process can produce a tuple $(m', \sigma')$ such that $\lit{Trace}(\ms{ID}, m, \sigma, m', \sigma') \neq$ ``indep''.
	Consequently, no process can produce a message that prevents non-faulty processes from propagating $(\ms{ID}, $\ECHO$, m, \sigma)$ messages.
	Thus, since $p_i$ is non-faulty, and therefore followed the protocol, $p_i$ must have anonymously broadcast $(m, \sigma)$.
\end{proof}

\begin{remark}
\label{arb-one-echo}
No non-faulty process will broadcast more than one message of the form $(\ms{ID},$\ECHO$, m', \sigma')$ where $\sigma'$ is such that $\sigma' = \lit{Sign}(x, \ms{ID}, m')$ for any message $m'$.
\end{remark}

\begin{proof}
	This follows from the proof of Lemma~\ref{arb-term-1}: if non-faulty process $p$ receives another message $(m, \sigma)$ s.t. $\sigma \leftarrow \lit{Sign}(x, \ms{ID}, m)$ for any $m$, then $\lit{Trace}(\ms{ID}, m, \sigma, m', \sigma') \neq$ ``indep'', and so $p$ will not reach line~\ref{line:ab:bcastecho}.
\end{proof}

\begin{lemma}
\label{arb-term-2-lemma}
In AARBP, if $p_i$ AARB-delivers $(m, \sigma)$ with respect to $\ms{ID}$, where $\sigma \leftarrow \lit{Sign}(x, \ms{ID}, m)$, and $p_j$ AARB-delivers $(m', \sigma')$ with respect to $\ms{ID}$, where $\sigma' \leftarrow \lit{Sign}(x, \ms{ID}, m')$, then $(m, \sigma) = (m', \sigma)$.
\end{lemma}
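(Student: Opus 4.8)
The plan is to run the standard Bracha-style quorum-intersection argument, with Remark~\ref{arb-one-echo} supplying the crucial single-\ECHO property for a fixed signer and a fixed $\ms{ID}$. (I read the conclusion as $(m, \sigma) = (m', \sigma')$, taking the displayed second $\sigma$ to be a typo.) I would argue by contradiction: assume $(m, \sigma) \neq (m', \sigma')$ while both tuples are signed by $p_x$ with respect to the same $\ms{ID}$, i.e. $\sigma \leftarrow \lit{Sign}(x, \ms{ID}, m)$ and $\sigma' \leftarrow \lit{Sign}(x, \ms{ID}, m')$.

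First I would trace each delivery back to an \ECHO quorum. Since $p_i$ AARB-delivers $(m, \sigma)$, it received $(\ms{ID},\READY,m,\sigma)$ from $2t+1$ distinct processes (line~\ref{line:ab:arbdeliver}), at least $t+1$ of which are non-faulty. I would order the non-faulty processes that ever broadcast $(\ms{ID},\READY,m,\sigma)$ by the real time at which they do so, and consider the earliest such process $p'$. It cannot have broadcast \READY via the amplification rule (line~\ref{line:ab:tone}), since that rule fires only after receiving $t+1$ \READY messages, at least one of which comes from a non-faulty process that broadcast \READY strictly earlier, contradicting minimality. Hence $p'$ broadcast \READY via the \ECHO rule (line~\ref{line:ab:echosbcastready}), so $p'$ received $(\ms{ID},\ECHO,m,\sigma)$ from a set $E_1$ of more than $\frac{n+t}{2}$ distinct processes. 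The identical argument applied to $p_j$'s delivery of $(m',\sigma')$ yields a non-faulty $p''$ and a set $E_2$ of more than $\frac{n+t}{2}$ processes that sent $(\ms{ID},\ECHO,m',\sigma')$.

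Then I would intersect the two \ECHO quorums: $|E_1 \cap E_2| = |E_1| + |E_2| - |E_1 \cup E_2| > \frac{n+t}{2} + \frac{n+t}{2} - n = t$, so $|E_1 \cap E_2| \geq t + 1$. As at most $t$ processes are faulty, some non-faulty process $q$ lies in $E_1 \cap E_2$, meaning $q$ broadcast both $(\ms{ID},\ECHO,m,\sigma)$ and $(\ms{ID},\ECHO,m',\sigma')$. But both $\sigma$ and $\sigma'$ are signatures produced by $p_x$ under $\ms{ID}$, so Remark~\ref{arb-one-echo} forbids $q$ from broadcasting two such \ECHO messages unless they coincide. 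Therefore $(m,\sigma) = (m',\sigma')$, contradicting the assumption and establishing the lemma.

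The main obstacle is the trace-back step: justifying cleanly, in an asynchronous setting, that the earliest non-faulty \READY-sender used the \ECHO rule rather than the amplification rule. I would handle this by ordering only the local broadcast events of non-faulty processes by their real wall-clock time, which is well-defined even though message delivery is not, so that ``earliest'' is meaningful and the $t+1$-\READY precondition of the amplification rule necessarily forces an earlier non-faulty \READY broadcast, the desired contradiction. Everything else is the routine counting already used in the AARB-Termination lemmas.
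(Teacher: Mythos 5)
Your proof is correct and follows essentially the same route as the paper's: trace each delivery back to a \READY{} broadcast justified by an \ECHO{} quorum, intersect the two quorums of size greater than $\frac{n+t}{2}$ to obtain a common non-faulty process, and invoke Remark~\ref{arb-one-echo} to force the two \ECHO{} messages (and hence the tuples) to coincide. Your ``earliest non-faulty \READY{}-sender in real time'' argument is just a cleaner formalization of the paper's finite back-chaining step ($p_{(1)}, \dots, p_b, p_a$) used to rule out the amplification rule as the original source of a \READY{} message.
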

\begin{proof}
	For $p_i$ (resp. $p_j$) to have AARB-delivered $(m, \sigma)$ (resp. $(m', \sigma')$), $p_i$ (resp. $p_j$) must have broadcast $(\ms{ID},$\READY$, m, \sigma)$ (resp. $(\ms{ID},$\READY$, m', \sigma')$).
	Then, one of two predicates (at lines \ref{line:ab:recvechos} and \ref{line:ab:toneif}) must have been true for each process. That is, $p_i$ (resp. $p_j$) must have received either:
\begin{enumerate}
	\item $(\ms{ID},$\ECHO$, m, \sigma)$ (resp. $(\ms{ID},$\ECHO$, m', \sigma')$) from $\lfloor \frac{n + t}{2} \rfloor$ different processes, or
	\item $(\ms{ID},$\READY$, m, \sigma)$ (resp. $(\ms{ID},$\READY$, m', \sigma')$) from $(t + 1)$ different processes.
\end{enumerate}
We first prove the following claim: if $(\ms{ID},$\READY$, m, \sigma)$ is broadcast by $p_i$, and $(\ms{ID},$\READY$, m', \sigma')$ is broadcast by $p_j$, then $(m, \sigma) = (m', \sigma')$.
Suppose that both processes fulfill condition (1), receiving \ECHO \ messages from sets of processes $P$ and $P'$ respectively, where $p_i$ broadcasts $(\ms{ID},$\READY$, m, \sigma)$ and $p_j$ broadcasts $(\ms{ID},$\READY$, m', \sigma')$.
Assume that $(m, \sigma) \neq (m', \sigma')$. Then:
\begin{align*}
	|P \cap P'| &= |P| + |P'| - |P \cup P'|, \\
	&\geq |P| + |P'| - n, \\
	&>2\left(\frac{n + t}{2}\right) - n = t, \\
	&\Rightarrow |P \cap P'| \geq t + 1.
\end{align*}
Therefore, $P \cap P'$ must contain at least one correct process, say $p_c$. By Remark~\ref{arb-one-echo}, $p_c$ must have sent the same \ECHO \ message for some message to both processes, and so $p_i$ and $p_j$ must have received the same message $(\ms{ID},$\ECHO$, m, \sigma)$ = $(\ms{ID},$\ECHO$, m', \sigma')$, contradicting the assumption that $(m, \sigma) \neq (m', \sigma')$. Thus, they must broadcast the same \READY \ message.
% TODO - signatures may be different?
% TODO - continue from here

Suppose now that at least one process broadcasts \READY \ due to condition (2) being fulfilled.
Without loss of generality, assume exactly one process $p$ broadcasts \READY \ on this basis.
Then, $p$ must have received a \READY \ message from at least one correct process (since out of $t + 1$ processes, at least 1 must be non-faulty), say $p_a$.
Either $p_a$ received \READY \ from at least one correct process, say $p_b$, or it satisfied condition (1). By continuing the logic (and since $|P|$ is finite), there must exist a process $p_{(1)}$ that fulfilled condition (1). Then, by the correctness of processes $p_{(1)}, \dots, p_b, p_a$, \READY \ messages sent by $p_i$ and $p_j$ must be the same, completing the proof.

Therefore, $p_i$ broadcasting $(\ms{ID},$\READY$, m, \sigma)$, and $p_j$ broadcasting $(\ms{ID},$\READY$, m', \sigma')$ implies $(m, \sigma) = (m', \sigma')$ given that $p_i$ and $p_j$ are non-faulty.

We now directly prove the lemma. If $p_i$ AARB-delivers $(m, \sigma)$, it received $(\ms{ID},$\READY$, m, \sigma)$ from $(2t + 1)$ different processes, and thus received $(\ms{ID},$\READY$, m, \sigma)$ from at least one non-faulty process.
Similarly, if $p_j$ AARB-delivers $(m', \sigma')$, it must have received $(\ms{ID},$\READY$, m', \sigma')$ from at least one non-faulty process.
It follows from the previous claim that all non-faulty processes broadcast the same \READY \ message.
Thus, $p_i$ and $p_j$ AARB-deliver the same tuple.
\end{proof}

\begin{lemma}
\label{arb-term-2}
AARBP ensures AARB-Termination-2.
That is, if a non-faulty process AARB-delivers $(m, \sigma)$ with respect to $\ms{ID}$, then all the non-faulty processes eventually AARB-deliver $(m, \sigma)$ with respect to $\ms{ID}$.
\end{lemma}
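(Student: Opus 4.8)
The plan is to carry out the standard Bracha amplification argument, now leveraging the consistency of \READY \ messages among non-faulty processes already established in Lemma~\ref{arb-term-2-lemma}. First I would observe that if a non-faulty process $p$ AARB-delivers $(m, \sigma)$ with respect to $\ms{ID}$, then by the delivery predicate $p$ must have received $(\ms{ID},$\READY$, m, \sigma)$ from $2t + 1$ distinct processes. Since at most $t$ processes are faulty, at least $(2t + 1) - t = t + 1$ of these senders are non-faulty, and each such process genuinely broadcast $(\ms{ID},$\READY$, m, \sigma)$.

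Next, by the reliability of the network, these $t + 1$ \READY \ messages are eventually received by every non-faulty process. This fulfils the $t + 1$-threshold predicate (line~\ref{line:ab:toneif}), so that each non-faulty process broadcasts $(\ms{ID},$\READY$, m, \sigma)$ if it has not already done so (line~\ref{line:ab:tone}). The key point to justify here is that this broadcast actually fires for $(m, \sigma)$ rather than being pre-empted by a conflicting \READY \ message: by the claim established within Lemma~\ref{arb-term-2-lemma}, no two non-faulty processes broadcast conflicting \READY \ messages for a given signer, so a non-faulty process is never ``locked in'' to a different value and will propagate $(\ms{ID},$\READY$, m, \sigma)$.

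It then follows that all $n - t$ non-faulty processes eventually broadcast $(\ms{ID},$\READY$, m, \sigma)$. Since $n > 3t$ gives $n - t \geq 2t + 1$, and the network is reliable, every non-faulty process eventually receives $(\ms{ID},$\READY$, m, \sigma)$ from at least $2t + 1$ distinct processes, fulfilling the delivery predicate (line~\ref{line:ab:twotone}). Hence every non-faulty process AARB-delivers $(m, \sigma)$ at line~\ref{line:ab:arbdeliver}, as required.

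The hard part will be the middle step: ensuring that the $t + 1 \to 2t + 1$ amplification is not derailed by a non-faulty process that has already committed to broadcasting a different \READY \ message for the same signer. I expect this to be fully discharged by appeal to Lemma~\ref{arb-term-2-lemma}, which guarantees agreement on the \READY \ value among non-faulty processes; the remaining counting and reliability arguments are then routine.
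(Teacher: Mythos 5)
Your proposal is correct and follows essentially the same route as the paper's proof: the standard $2t+1 \to t+1$ non-faulty senders $\to$ all non-faulty processes broadcast \READY $\to$ all receive $2t+1$ \READY{} messages amplification, with Lemma~\ref{arb-term-2-lemma} invoked to rule out conflicting \READY{} messages. The paper simply applies that lemma up front to conclude all delivering processes agree on $(m,\sigma)$, whereas you invoke its internal claim to guard the middle amplification step; the substance is the same.
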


\begin{proof}
	By Lemma \ref{arb-term-2-lemma}, all non-faulty processes that AARB-deliver a message $(m', \sigma')$ s.t. $\sigma' \leftarrow \lit{Sign}(x, \ms{ID}, m')$ AARB-deliver $(m, \sigma)$.
	Then, $p_i$ must have received the message $(\ms{ID},$\READY$, m, \sigma)$ from $(2t + 1)$ processes (line \ref{line:ab:twotone}), at least $t + 1$ of which must be non-faulty.
	These $t + 1$ processes must have broadcast $(\ms{ID},$\READY$, m, \sigma)$ (at line \ref{line:ab:echosbcastready} or \ref{line:ab:tone}), and so every non-faulty process will eventually receive $(t + 1)$ $(\ms{ID},$\READY$, m, \sigma)$ messages, and thus broadcast $(\ms{ID},$\READY$, m, \sigma)$ at some point.
	Given there are $n - t \geq 2t + 1$ non-faulty processes, each non-faulty process eventually receives $(\ms{ID},$\READY$, m, \sigma)$ from at least $2t + 1$ processes. Therefore, every non-faulty process will AARB-deliver $(m, \sigma)$.
\end{proof}

\begin{lemma}
\label{arb-uni}
AARBP ensures AARB-Unicity, which is stated as follows.
Consider any point of time in which a non-faulty process $p$ has AARB-delivered more than one tuple with respect to $\ms{ID}$.
Let $delivered = \{(m_1, \sigma_1), \dots, (m_l, \sigma_l)\}$, where $|delivered| = l$, denote the set of these tuples.
For each $i \in \{1, \dots, l\}$, let $out_i = \lit{FindIndex}(\ms{ID}, m_i, \sigma_i)$ denote the output of an idealised call to $\lit{FindIndex}$.
Then for all distinct pairs of tuples $\{(m_i, \sigma_i), (m_j, \sigma_j)\}$, $out_i \neq out_j$.
\end{lemma}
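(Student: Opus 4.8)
The plan is to argue by contradiction and reduce the claim to Lemma~\ref{arb-term-2-lemma}. Suppose there exist distinct tuples $(m_i, \sigma_i), (m_j, \sigma_j) \in delivered$ with $out_i = out_j = x$. By definition of the idealised $\lit{FindIndex}$, this means $\sigma_i \leftarrow \lit{Sign}(x, \ms{ID}, m_i)$ and $\sigma_j \leftarrow \lit{Sign}(x, \ms{ID}, m_j)$, i.e. both tuples were signed by the same process $p_x$ under $\ms{ID}$. The single non-faulty process $p$ has AARB-delivered both. Lemma~\ref{arb-term-2-lemma} is phrased for two delivering processes $p_i, p_j$, but nothing prevents taking $p_i = p_j = p$; its hypothesis that both delivering processes are non-faulty is met since $p$ is non-faulty by assumption. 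Applying the lemma with both roles played by $p$ yields $(m_i, \sigma_i) = (m_j, \sigma_j)$, contradicting the distinctness of the two elements of $delivered$. Hence $out_i \neq out_j$ for every distinct pair, which is precisely AARB-Unicity.

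For a self-contained argument that does not invoke Lemma~\ref{arb-term-2-lemma}, I would unfold the delivery condition directly. Delivery of each tuple by $p$ requires $(\ms{ID},$\READY$, \cdot)$ from $2t + 1$ processes, hence from at least one non-faulty process; recursing on the READY-propagation rule (condition (2)) exactly as in the proof of Lemma~\ref{arb-term-2-lemma} bottoms out at a non-faulty process that satisfied condition (1), i.e. received $(\ms{ID},$\ECHO$, m_i, \sigma_i)$ from a set $E_i$ with $|E_i| > \frac{n + t}{2}$, and likewise a set $E_j$ with $|E_j| > \frac{n + t}{2}$ for the other tuple. The quorum-intersection bound $|E_i \cap E_j| \geq |E_i| + |E_j| - n > t$ forces a non-faulty process $p_c$ into both sets, so $p_c$ broadcast an \ECHO \ message for $(m_i, \sigma_i)$ and another for $(m_j, \sigma_j)$, both carrying signatures produced by $p_x$. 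Since the tuples are distinct, this contradicts Remark~\ref{arb-one-echo}, which forbids any non-faulty process from broadcasting two \ECHO \ messages carrying signatures of the same signer $x$.

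The only delicate point in either route is the justification of the specialisation $p_i = p_j$ (for the short proof), or equivalently the quorum-intersection step feeding into the correct invocation of Remark~\ref{arb-one-echo} (for the self-contained proof). Everything else is bookkeeping, so I anticipate no substantial obstacle: AARB-Unicity is essentially the contrapositive of the agreement property already established in Lemma~\ref{arb-term-2-lemma}, restricted to a single delivering process.
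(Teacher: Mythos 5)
Your proposal is correct and follows essentially the same route as the paper: a proof by contradiction that reduces AARB-Unicity to Lemma~\ref{arb-term-2-lemma}. The only cosmetic difference is that the paper first invokes AARB-Termination-2 to obtain two delivering processes before applying that lemma, whereas you apply it directly with $p_i = p_j = p$, which is a slightly more direct (and equally valid) instantiation.
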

\begin{proof}
	We proceed by contradiction.
	Without loss of generality, suppose that $p$ AARB-delivers two tuples $(m, \sigma)$ and $(m', \sigma')$ such that $\lit{FindIndex}(\ms{ID}, m, \sigma) = \lit{FindIndex}(\ms{ID}, m', \sigma')$.
	Then, AARB-Termination-2 implies that all non-faulty processes will eventually AARB-deliver $(m, \sigma)$ and $(m', \sigma')$.
	But, Lemma \ref{arb-term-2-lemma} asserts that $\lit{FindIndex}(\ms{ID}, m, \sigma) = \lit{FindIndex}(\ms{ID}, m', \sigma')$, a contradiction.
	Thus, for each $i \in \{1..n\}$, $p$ AARB-delivers at most one tuple $(m, \sigma)$ such that $\lit{FindIndex}(\ms{ID}, m, \sigma) = i$.
\end{proof}

\begin{theorem}
\label{arb-thm}
AARBP (Algorithm~\ref{alg:arb} in Section \ref{sec:arb:arb}) satisfies the properties of AARB-Broadcast.
\end{theorem}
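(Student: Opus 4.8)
The plan is to recognise this theorem as a compilation result: the definition of AARB-Broadcast consists of precisely the clauses AARB-Signing, AARB-Anonymity, AARB-Validity, AARB-Unicity, AARB-Termination-1, and AARB-Termination-2, and each of these has already been discharged as a standalone lemma about Algorithm~\ref{alg:arb}. The proof therefore proceeds by invoking each lemma in turn and taking their conjunction. Concretely, I would map AARB-Signing to Lemma~\ref{arb-signing}, AARB-Anonymity to Lemma~\ref{arb-anon}, AARB-Validity to Lemma~\ref{arb-val}, AARB-Unicity to Lemma~\ref{arb-uni}, AARB-Termination-1 to Lemma~\ref{arb-term-1}, and AARB-Termination-2 to Lemma~\ref{arb-term-2}. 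Since all lemmas concern the same fixed protocol and model, the properties hold simultaneously in every execution, which is exactly the statement of the theorem.

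The only points that require attention are bookkeeping rather than mathematics. First, I would verify completeness, confirming that these six lemmas exhaust the defining clauses of AARB-Broadcast so that no property of the specification is left unaddressed. Second, Lemma~\ref{arb-anon} is the sole property established under extra hypotheses—the independence assumption of Section~\ref{sec:anon} together with the base model of Section~\ref{sec:model:model}—so I would state explicitly that these assumptions are in force, making the anonymity clause legitimately available; the remaining lemmas depend only on $n > 3t$ and on signature unforgeability, traceability, and entrapment-freeness, all standing assumptions. There is no substantive obstacle here: the genuine work has already been carried out in the lemmas—the quorum-intersection argument underpinning Termination-2 and Unicity (through Lemma~\ref{arb-term-2-lemma} and Remark~\ref{arb-one-echo}), the reachability argument of Termination-1, and the indistinguishability argument of Anonymity—so the theorem follows immediately by their conjunction.
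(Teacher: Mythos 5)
Your proposal is correct and matches the paper's proof exactly: the paper concludes Theorem~\ref{arb-thm} by citing Lemmas~\ref{arb-signing}, \ref{arb-anon}, \ref{arb-val}, \ref{arb-uni}, \ref{arb-term-1} and \ref{arb-term-2} and taking their conjunction, precisely the compilation you describe. Your additional bookkeeping remark that Lemma~\ref{arb-anon} carries the extra independence assumptions is accurate and, if anything, slightly more explicit than the paper's one-line proof.
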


\begin{proof}
	From Lemmas \ref{arb-signing}, \ref{arb-anon}, \ref{arb-val}, \ref{arb-uni}, \ref{arb-term-1} and \ref{arb-term-2}, it follows that all properties of AARB-Broadcast are satisfied.
\end{proof}

\section{Composing threshold encryption and AVCP} \label{sec:evoting:abe}

Threshold encryption~\cite{Desmedt1992, Shoup2002} typically involves a set of processes who have a common encryption key and an individual share of the decryption key who must collaborate to decrypt any message.
A $k$-out-of-$n$ threshold encryption scheme requires the joint collaboration of $k$ processes to decrypt any encrypted value.
Importantly, $k - 1$ or less processes are unable to determine any additional information about a given encrypted value in collaboration.
Generally, the keying material can be reused in the sense that many values can be decrypted without compromising the security of the scheme.

AVCP assumes that $t$ processes may be Byzantine faulty.
Thus, by setting the decryption threshold to $t + 1$, a coalition of $t$ malicious processes are unable to deduce the contents of encrypted values until a non-faulty process initiates threshold decryption.
Consequently, a protocol that ensures that the contents of all non-faulty process' proposals to an instance of AVCP is not revealed until after termination can be designed as follows:
\begin{enumerate}
	\item All processes encrypt their proposal under a pre-determined public key.
	\item All processes invoke AVCP with input as their encrypted value.
	\item Upon termination, all processes initiate threshold decryption.
\end{enumerate}
In this appendix, we realise and analyse this protocol.

\subsection{Preliminaries}
We assume that the assumptions made by AVCP, including those made in our model (Section~\ref{sec:model:model}), hold true. 
In particular, the dealer generates the initial state, including the values of $n$ and $t$.
As such, the decryption threshold for the instance of threshold encryption is set to $k = t + 1$.
As done in characterising traceable ring signatures in Section~\ref{sec:model:model}, we assume processes have access to a distributed oracle which handles cryptographic operations, rather than explicitly referring to keying material as is done in practice.
In addition to the queries described in Section~\ref{sec:model:model}, we assume that the distributed oracle handles queries of the following functions:
\begin{enumerate}
	\item $c \leftarrow \lit{Enc}(\ms{ID}, m)$, which takes identifier $\ms{ID} \in \{0, 1\}^{*}$ and message $m \in \{0, 1\}^{*}$ as input, and outputs the ciphertext $c \in \{0, 1\}^{*}$. All parties (even those not in $P$) may query $\lit{Enc}$.
	\item $b \leftarrow \lit{VerifyEnc}(\ms{ID}, c)$, which takes identifier $\ms{ID}$ and ciphertext $c$ as input, and outputs a bit $b \in \{0, 1\}$. All parties may query $\lit{VerifyEnc}$.
	\item $\sigma \leftarrow \lit{ShareGen}(i, \ms{ID}, c)$, which takes integer $i \in \{1..n\}$, identifier $\ms{ID}$ and ciphertext $c$ as input, and outputs the decryption share $\sigma \in \{0, 1\}^{*}$. We restrict the interface $\lit{ShareGen}$ such that only process $p_i \in P$ may invoke $\lit{ShareGen}$ with first argument~$i$.
	\item $b \leftarrow \lit{VerifyShare}(\ms{ID}, c, \sigma)$, which takes identifier $\ms{ID}$, ciphertext $c$ and decryption share $\sigma$ as input, and outputs a bit $b \in \{0, 1\}$. All parties may query $\lit{VerifyShare}$.
	\item $m \leftarrow \lit{Dec}(\ms{ID}, c, \Sigma)$, which takes the identifier $\ms{ID}$, ciphertext $c$ and set $\Sigma = \{\sigma_1, \dots, \sigma_k\}$ of $k$ decryption shares, and outputs the plaintext $m \in \{0, 1\}^{*}$. All parties may query $\lit{Dec}$.
\end{enumerate}

These definitions are inspired by those used in defining Shoup and Gennaro's non-interactive threshold encryption scheme~\cite{Shoup2002}.
The behaviour of calls to the above functions satisfies the following properties:
\begin{itemize}
	\item \textbf{Encryption correctness and non-malleability:} $\lit{VerifyEnc}(\ms{ID}, c) = 1$ $\Longleftrightarrow$ there exists a party which previously invoked $\lit{Enc}(\ms{ID}, m)$ and obtained $c$ as a response. 
	Non-malleability is captured by the ``$\Rightarrow$'' claim, which ensures that any correct encryption was produced by a call to $\lit{Enc}$. 
	In particular, non-malleability implies that combining any value and an encryption (e.g. via concatenation) will not produce another valid encryption.
\item \textbf{Share correctness and unforgeability:} $\lit{VerifyShare}(\ms{ID}, c, \sigma) = 1$ $\Longleftrightarrow$ there exists process $p_i \in P$ that previously invoked $\lit{ShareGen}(i, \ms{ID}, c)$ and obtained $\sigma$ as a result.
\item \textbf{Decryption correctness and decryption security:} $\lit{Dec}(\ms{ID}, c, \Sigma) = m$, where $m$ was the input to a previous call to $\lit{Enc}(\ms{ID}, m)$ $\Longleftrightarrow$ the following conditions are met:
\begin{enumerate}
	\item $c \leftarrow \lit{Enc}(\ms{ID}, m)$ was called by some party.
	\item $\Sigma = \{\sigma_1, \dots, \sigma_k\}$ ($|\Sigma| = k$), where for each distinct pair $\sigma_i, \sigma_j \in \Sigma$, distinct processes $p_i, p_j \in P$ respectively called $\sigma_i \leftarrow \lit{ShareGen}(i, \ms{ID}, c)$ and $\sigma_j \leftarrow \lit{ShareGen}(j, \ms{ID}, c)$.
\end{enumerate}
The ``$\Leftarrow$'' claim captures decryption correctness. The ``$\Rightarrow$'' claim captures security, ensuring that $k$ valid shares produced by distinct processes are required for decryption.
	\item \textbf{Encryption hiding:} Suppose $c \leftarrow \lit{Enc}(\ms{ID}, m)$ is called by a party $p$. Then, $m$ can only be obtained by either $p$ revealing $m$ or a valid call $m \leftarrow \lit{Dec}(\ms{ID}, c, \Sigma)$ being made.
\end{itemize}

\subsection{Arbitrary Ballot Election (ABE)}
At a high level, processes first perform AVCP with respect to their proposal which is encrypted under the threshold encryption scheme.
Then, processes perform threshold decryption, which requires an additional message delay for termination.
We call this algorithm an Arbitrary Ballot Election (ABE), since processes can propose arbitrary values and are ensured of certain properties that are defined in the context of electronic voting.
We describe and prove these hold in the next subsection.

\paragraph{Functions.}
Each process has access to the distributed oracle and in addition to all functions invoked AVCP (Section~\ref{sec:avc:avc}).
For a given tuple $(m, \sigma)$ we assume that the predicate $\lit{valid}()$ in AVCP returns $\lit{true}$ only if $\lit{VerifyEnc}(\ms{ID}, m) = 1$.
That is, $m$ was the result of a query to $\lit{Enc}(\ms{ID}, msg)$ for some $msg \in \{0, 1\}^{*}$.

\paragraph{State.}
Each process tracks the variables
\ms{ID}, a string uniquely identifying an instance of ABE,
\ms{has-broadcast}, a Boolean that is initially $\lit{false}$,
\ms{unique-encs} and \ms{plaintexts}, sets of messages $m \in \{0, 1\}^{*}$ that are initially empty, and
\ms{partial-decs}, which maps ciphertexts to sets of decryption shares, each of which are initially empty.

\paragraph{Messages.}
In addition to messages propagated in AVCP (which includes messages from AARBP), messages of the form $($\DECS$, \ms{ID}, \ms{encs})$ are propagated, where $\ms{encs}$ is a map, each value of which is a singleton set.

\paragraph{Protocol description.}
\begin{algorithm}[htp]
\label{alg:election-new}
{\small
	% state in pseudo-code
	\caption{Arbitrary ballot election (ABE)}
	\begin{algorithmic}[1]
%		\algrestore{alg:p0}
		\Part{$\lit{ABE}(m)$} {
			\State $\ms{c} \leftarrow \lit{Enc}(\ms{ID}, m)$ \label{line:abe:encrypt}
			\State $\ms{encs} \leftarrow \lit{AVCP}[\ms{ID}](m)$ \label{line:abe:avcp}
			\For{\textbf{each} (disjoint) set of values $M = \{(m_1, \sigma_1), \ldots, (m_k, \sigma_l)\} \subseteq \ms{encs}$ s.t. $m_1 = \dots = m_l$} \label{line:abe:fordiscard}
			\State $\ms{unique-encs} \leftarrow \ms{unique-encs} \cup \{m_1\}$ \label{line:abe:discard}
			\EndFor
			\Return $\lit{decryption}(\ms{unique-encs})$ \EndReturn \label{line:abe:dec1}
%	\algstore{alg:election}
		}\EndPart
		\Statex

		\Part{$\lit{decryption}(\ms{encs})$} {
			\For{each $c$ in $\ms{encs}$}
			\State $\sigma \leftarrow \lit{ShareGen}(i, \ms{ID}, c)$
			\State $\ms{partial-decs}[c] \leftarrow \{\sigma\}$
			\EndFor
			\State $\lit{broadcast}$ $($\DECS$, \ms{ID}, \ms{partial-decs})$ 
			\State $\ms{has-broadcast} \leftarrow \lit{true}$
			\Statex
		}\EndPart
		\Upon{receipt of $($\DECS$, \ms{ID}, \ms{decs'})$} \label{line:abe:uponshares} {
%			\State $valid \leftarrow (\ms{encs}.\lit{keys}() = \ms{encs'}.\lit{keys}())$
%			\State $valid \leftarrow (\ms{encs}.\lit{keys}() = \ms{encs'}.\lit{keys}())$
			\If{$(\ms{partial-decs}.\lit{keys}() = \ms{decs'}.\lit{keys}()) \land (\lit{VerifyShare}(\ms{ID}, c, \ms{decs'}[c]) = 1$ for all $c \in \ms{decs'}.\lit{keys}())$}
			\WUntil $\ms{has-broadcast} = \lit{true}$ \EndWUntil \label{line:abe:condition}
%			\If{$valid$}
%			\For{\textbf{each} $c \in \ms{encs}.\lit{keys}()$}
%			\If{$\lit{VerifyShare}(\ms{ID}, c, \ms{encs'}[c]) \neq 1$}
%			\State $valid \leftarrow \lit{false}$
%			\State \textbf{break}
%			\EndIf
%			\EndFor
%			\If{$valid$}
			\For{\textbf{each} $(c, \sigma) \in \ms{partial-decs'}$}
			\State $\ms{partial-decs}[c] \leftarrow \ms{partial-decs}[c] \cup \{\sigma\}$
			\EndFor
			\If{$|\ms{partial-decs}[c]| = k$ for all $c \in \ms{partial-decs}.\lit{keys}()$}
			\For{\textbf{each} $c \in \ms{partial-decs}.\lit{keys}()$}
			\State $\ms{plaintexts} \leftarrow \ms{plaintexts} \cup \{\lit{Dec}(\ms{ID}, c, \ms{partial-decs}[c])\}$ \label{line:abe:dec}
			\EndFor
			\Return $\ms{plaintexts}$ \EndReturn
			\EndIf
			\EndIf
%			\EndIf
		} \EndUpon
	\end{algorithmic}}
\end{algorithm}

Each process begins with the plaintext $m$ to propose to consensus.
Processes encrypt $m$ under $\lit{Enc}$, producing the ciphertext $c$ (line~\ref{line:abe:encrypt}).
	Processes propose their encrypted value to an instance of AVCP identified by $\ms{ID}$.
	Since $\lit{valid}()$ checks that $\lit{VerifyEnc}(\ms{ID}, m) = 1$ for a given tuple $(m, \sigma)$, $\ms{encs}$ will thus contain well-formed encryptions (AVC-Validity) of processes' proposals.

	Now, AVCP guarantees that signatures, rather than the contents of messages, are unique.
So, it is conceivable that a process will mount a replay attack, which aims to disrupt an election by mimicking the input of some process.
Since values are encrypted under a common instantiation of threshold encryption, it is desirable to prevent this attack.
Consequently, we require that non-faulty processes prepend a sufficiently large sequence of random bits to their plaintext.
For simplicity, we assume that each sequence that a non-faulty proposes derives is unique.
So, non-unique encrypted messages are discarded (lines \ref{line:abe:fordiscard} and \ref{line:abe:discard}), resulting in the set $\ms{unique-encs}$.
	At this point, threshold decryption is performed with respect to each element of $\ms{unique-encs}$ (line~\ref{line:abe:dec1}).
For each (encrypted) value that a process decided, a decryption share is produced.
Then, processes broadcast each share and the corresponding ciphertext used to produce it.

Processes do not process shares that they have received from other processes until they have broadcast their shares, ensuring termination for all non-faulty processes.
In practice, processes can process shares provided they delay termination until after they have broadcast.
Upon receipt of a (potential) set of shares, processes check that the ciphertexts received match theirs.
Given this holds, they check that all shares are well-formed.
At this point, processes store these shares in \ms{encs}.
Once a process has received $k$ shares for every ciphertext in $\ms{unique-encs}$, each ciphertext is decrypted, and the resulting plaintexts are returned.

\subsection{Analysis}
We prove a number of properties hold:

\begin{lemma}
	ABE satisfies \textit{termination}.
	That is, all non-faulty processes eventually complete protocol execution.
\end{lemma}

\begin{proof}
	At the protocol's outset, all non-faulty processes produce a value $c$ s.t. $c \leftarrow \lit{Enc}(\ms{ID}, m)$ was called for some $m \in \{0, 1\}$ (line \ref{line:abe:encrypt}).
	Then, processes execute AVCP with respect to the identifier $\ms{ID}$, where every non-faulty process proposes a valid ciphertext $c$ (line~\ref{line:abe:avcp}).
	On AARB-delivery of each value, a process' call to $\lit{valid}()$ will return $\lit{true}$ by encryption correctness, as it must be the case $\lit{VerifyEnc}(\ms{ID}, c) = 1$ for such a $c$.
	By AVC-Termination, AVC-Validity and AVC-Agreement, all non-faulty processes eventually terminate AVCP with the same set of values, each of which satisfies $\lit{valid}()$.
	By AVC-Agreement, all non-faulty processes will obtain the same set $\ms{unique-encs}$ (after line \ref{line:abe:discard}).
	Then, all non-faulty processes will produce decryption shares for each value in $\ms{unique-encs}$ via calls to $\lit{ShareGen}$, which are guaranteed to be broadcast as no non-faulty process can terminate until the condition at line \ref{line:abe:condition} is fulfilled.
	On receipt of a set of decryption shares (line~\ref{line:abe:uponshares}) (with identical corresponding ciphertexts) from a non-faulty process, by share correctness each share $\sigma$ will satisfy $\lit{VerifyShare}(\ms{ID}, c, \sigma) = 1$ for the corresponding ciphertext $c$.
	Thus, a non-faulty process will eventually receive $k = t + 1$ valid decryption shares for each unique value that was decided by AVCP.
	Thus, they can call $\lit{Dec}$ with respect to each set of shares (line~\ref{line:abe:dec}) which by decryption correctness will pass, and thus the process will return the corresponding plaintexts.

\end{proof}

\noindent Comparable definitions to anonymity-preserving vector consensus (Section~\ref{sec:avc:avc}) regarding \textit{agreement}, \textit{validity} and \textit{anonymity} follow straightforwardly from the fact that AVCP satisfies AVC-Agreement, AVC-Validity and AVC-Anonymity, respectively.

\begin{lemma}
	ABE satisfies \textit{public verifiability}.
	That is, any third party can obtain the result of the election after termination.
\end{lemma}

\begin{proof}
	By termination and agreement, all non-faulty processes eventually agree on the same set of plaintext values.
	Then, a third party can request these values from all processes.
	On receipt of $t + 1$ identical sets of values, the third party can deduce that the set of values corresponds to the election result.
\end{proof}

\begin{lemma}
	ABE satisfies \textit{weak privacy}.
	That is, a value proposed by a non-faulty process is only revealed after AVCP has terminated for a non-faulty process.
\end{lemma}

\begin{proof}
	Each non-faulty process $p$ proposes a value $c$ s.t. $c \leftarrow \lit{Enc}(\ms{ID}, m)$ was invoked for some $m \in \{0, 1\}^{*}$ (line~\ref{line:abe:encrypt}).
	Encryption hiding implies that $m$ can only be revealed if either $p$ reveals $m$, which $p$ does not in the protocol, or if a valid call to $\lit{Dec}$ is made.
	By decryption security, $\lit{Dec}$ will only return $m$ if provided $k = t + 1$ valid decryption shares.
	Now, no non-faulty process broadcasts a decryption share until after it terminates AVCP.
	Thus, a coalition of $t$ faulty processes cannot decrypt $m$ via $\lit{Dec}$, which is the only conceivable way for them to obtain $m$ in the model.
\end{proof}

\begin{lemma}
	ABE satisfies \textit{eligibility}.
	That is, only processes in $P$ may propose a ballot that is decided.
\end{lemma}

\begin{proof}
	By assumption, non-faulty processes only process messages sent over regular channels from processes in $P$.
	But, any party may anonymously broadcast a value $v$.
	AARBP satisfies AVC-Signing.
	That is, all AARB-delivered messages are of the form $(m, \sigma)$, where $\sigma \leftarrow \lit{Sign}(i, \ms{ID}, m)$.
	By assumption on the interface of $\lit{Sign}$, only a process in $P$ may call $\lit{Sign}$.
	By construction of AVCP, only messages that are AARB-delivered are decided by non-faulty processes.
	Thus, no non-faulty process will decide a value $v$ in AVCP from any party outside of $P$.
\end{proof}

\begin{lemma}
	ABE satisfies \textit{non-reusability}.
	That is, at most one encrypted ballot signed by a particular process can be decided in the election.
\end{lemma}

\begin{proof}
	By construction of AVCP, only messages that are AARB-delivered are decided by non-faulty processes.
	By AARB-Unicity, no non-faulty process will AARB-deliver more than one tuple $(m, \sigma)$ such that a process $p_i$ invoked $\sigma \leftarrow \lit{Sign}(i, \ms{ID}, m)$.
\end{proof}

% TODO: efficiency/optimisations?

%\input{evotingabfe/evoting-abfe.tex}
\section{Further experiments} \label{sec:bench:bench}
In this section, we present experimental results of our implementations of cryptographic and distributed protocols.
To determine their influence in cost over our distributed protocols, we benchmark \textit{two} cryptographic protocols which our distributed protocols rely on, namely Fujisaki's traceable ring signature (TRS) scheme \cite{Fujisaki2007} and Shoup and Gennaro's threshold encryption scheme \cite{Shoup2002}.
With this information, we then benchmark and evaluate our \textit{three} distributed protocols: Anonymised All-to-all Reliable Broadcast Protocol (AARBP), Anonymised Vector Consensus Protocol (AVCP), and our election scheme, Arbitrary Ballot Election (ABE).

\subsection{Cryptography}
Benchmarks of standalone cryptographic constructions were performed on a laptop with an Intel i5-7200U (quad-core) processor clocked at 3.1GHz and 8GB of memory.
The operating system used was Ubuntu 18.04. Each cryptosystem was implemented in golang.
All cryptographic schemes were implemented using Curve25519 \cite{Bernstein2006}.
To simulate a prime-order group we use the Ristretto technique \cite{Ristretto}, derived from the Decaf approach \cite{Hamburg2015}, via go-ristretto\footnote{\url{https://github.com/bwesterb/go-ristretto}}.
All cryptographic operations rely on constant-time arithmetic operations to prevent side-channel attacks \cite{Brier2002}.
Each data point represents a minimum of 100 and a maximum of 10000 iterations.

\subsubsection{Ring signatures} \label{sec:bench:trs}
To confirm that the implementation is competitive, we compare it to an implementation of Lui et al.'s linkable ring signature scheme \cite{Liu2004}. The particular implementation we compare to was produced by EPFL's DEDIS group as part of kyber\footnote{ \url{https://godoc.org/github.com/dedis/kyber/sign/anon}}. We note that the ring signature schemes naturally lend themselves towards parallelism. To this end, we provide an extension of our TRS implementation that takes advantage of concurrency.

We compared the two major operations.
Firstly, the Sign() operation (``$\lit{Sign}$'' in Section~\ref{sec:model:model}) is that which forms a ring signature.
Secondly, Verify() (``$\lit{VerifySig}$'' in Section~\ref{sec:model:model}) takes a ring signature as input, and verifies the well-formedness of the signature, and (in implementation) outputs a tag which is used for linking/tracing.
\begin{figure}[htbp]
\centering
\begin{subfigure}{.48\textwidth}
\centering
\includegraphics[width=1.0\textwidth]{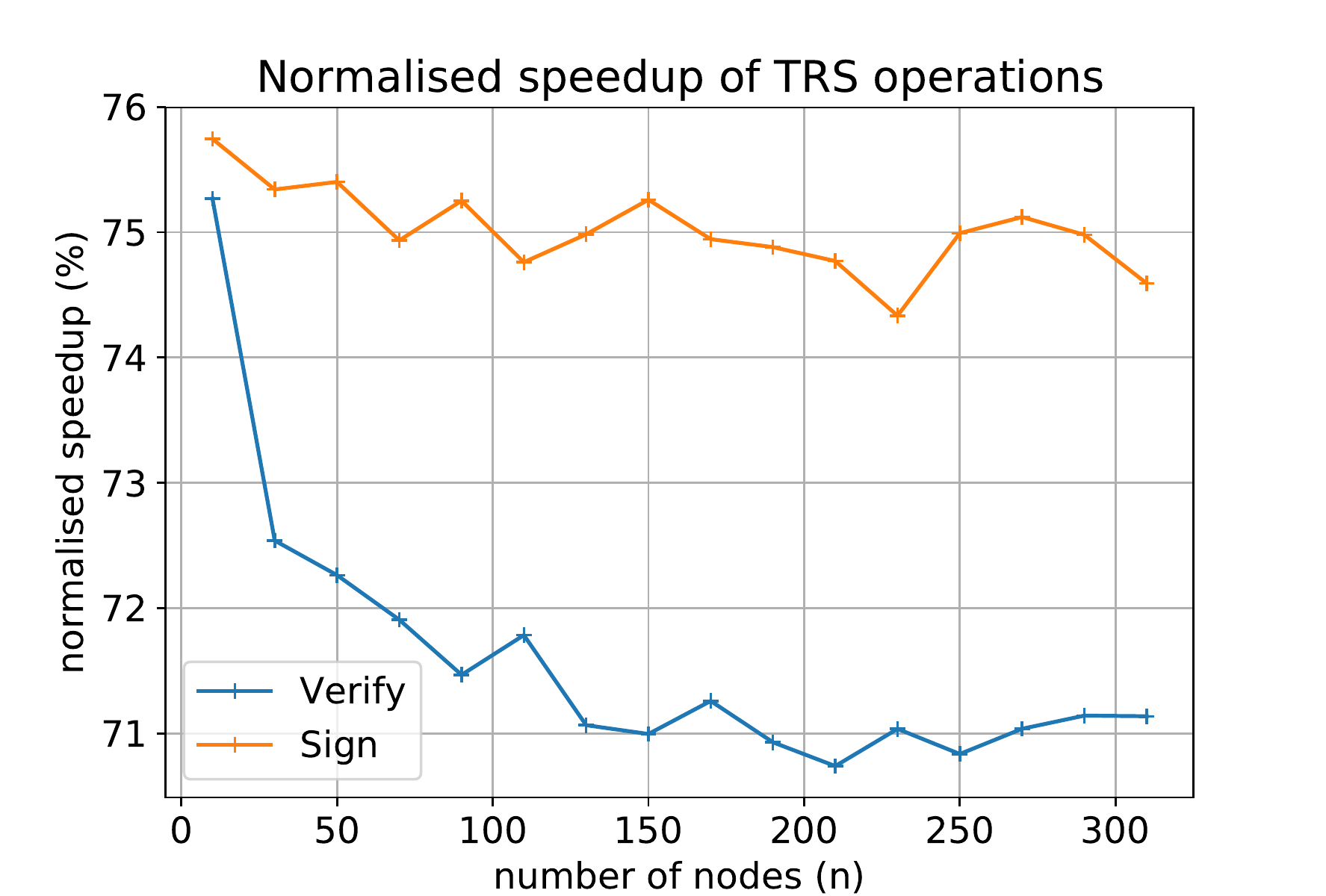}
\caption{Normalised speedup, as a percentage, of TRS operations over LRS operations}
\label{fig:1perfgain}
\end{subfigure}
\begin{subfigure}{.48\textwidth}
    \centering
	\includegraphics[width=1.0\textwidth]{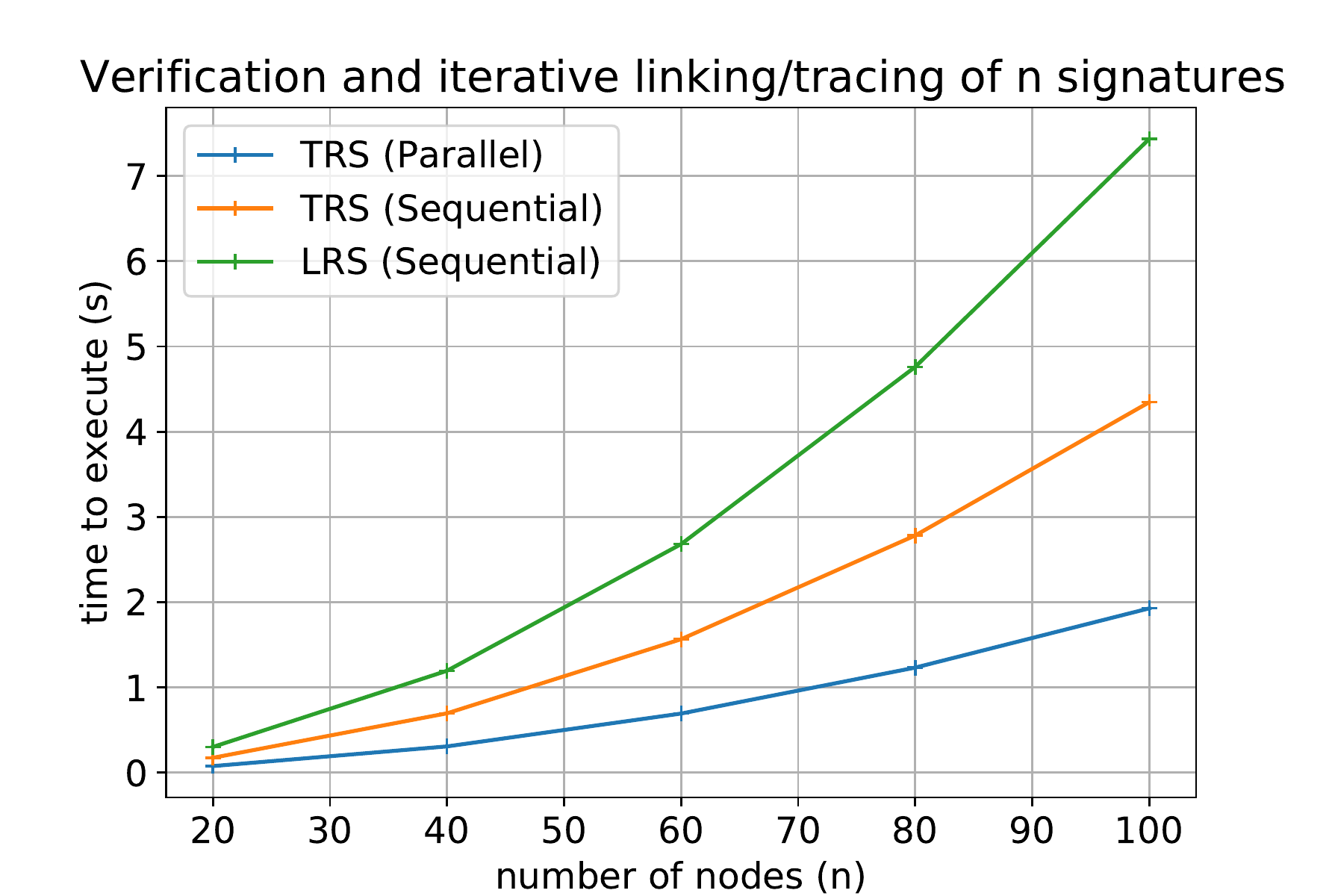}
	\caption{Comparing the performance of iterative verification and linking/tracing}
    \label{fig:iterativever}
\end{subfigure}
\caption{Comparing TRS and LRS operations}
\end{figure}

Figure \ref{fig:1perfgain} represents the normalised speedup, as a percentage, afforded to the Sign() and Verify() operations in the TRS scheme. We note that Figure \ref{fig:1perfgain} involves purely sequential implementations. We vary $n$, the size of the ring. Performance of Sign() and Verify() in both schemes scales linearly with respect to $n$.

Let $T_{LRS}$ be the time an operation (signing or verification) takes in the TRS implementation, and similarly define $T_{TRS}$. Then, normalised speedup is given by:
\[\text{normalised speedup} = \frac{T_{LRS} - T_{TRS}}{T_{TRS}}\]
Indeed, this value hovers above 70 percent for all values of $n$ tested. In the LRS implementation, the Sign() operation took between 7.5 and 230.4ms to execute (for values $n = 10$ and $n = 310$ respectively). Sign() in the TRS implementation took between 4.3 and 132.0ms to execute. Similarly, Verify() took between 7.7 and 230.3ms to execute in the LRS implementation, and took between 4.4 and 134.6ms to execute in the TRS implementation.

To understand the performance increase, we count cryptographic operations. We use the notation presented in \cite{Yuen2013}, adapted to represent elliptic curve operations. Let $E$ denote the cost of performing point multiplication, and $M$ denote the number of operations of the form $(aP + bQ)$, where $a$, $b$ are scalars, and $P, Q$ are points. Then, in the LRS scheme, Sign() requires $2(n - 1)M + 3E$ operations, and Verify() requires $2nM$ operations \cite{Yuen2013}. In the TRS scheme, Sign() requires the same $2(n -1)M$ and $3E$ operations, in addition to $nM$ other operations. Similarly, the TRS Verify() call requires $2nM + nM = 3nM$ operations. With the same elliptic curve implementation, we should see a performance increase from an efficient LRS implementation, as the TRS scheme requires ~1.5x $M$ operations by comparison to the LRS scheme. But, in the library we use for ECC, go-ristretto, base point multiplication is ~2.8x faster, point multiplication is ~2.3x faster, and point addition is ~2.6x faster. Thus, we see an overall increase in performance in our TRS implementation by comparison.

Recall that in AARBP (and thus AVCP), ring signature tracing is performed iteratively. Suppose that some process has $x$ correct ring signatures that they have verified at some point in time. Then, upon receipt of another signature, they first execute Verify(), and then perform the Trace() operation between the new signature and each of the $x$ stored signatures. To model this behaviour, we perform a benchmark where signatures are processed one-by-one with a ring of size $n$ that is varied from 20 to 100 in increments of 20. As the linkable ring signature scheme admits very similar functionality, except it performs a \textit{linking} operation, rather than a \textit{tracing} operation, we benchmark similarly.

Figure \ref{fig:iterativever} compares the average time taken to perform the aforementioned procedure, between a sequential implementation of the TRS scheme, the corresponding concurrent implementation (utilising four cores), and the (sequential) LRS implementation. Now, the time taken to perform Verify() is proportional to the size of the ring $n$. In addition, we perform more ($n$) Verify() operations as we increase $n$. Consequently, execution time grows quadratically in Figure \ref{fig:iterativever} for each implementation. In addition to the previously outlined speedup, we roughly halve execution time of TRS operations by exploiting concurrency in our implementation. With $n = 100$, our concurrent implementation is 2.25x faster than our sequential implementation, and is 3.86x faster than the sequential LRS implementation.

To perform Trace() between two signatures, $O(n)$ comparisons are performed in a naive implementation, whereas a single comparison is need to link in the TRS scheme. In implementation, we used a hash table, so each new signature required $O(n)$ lookups in the TRS scheme, and one lookup in the LRS scheme. As expected, we observe that the increased overhead of tracing is dominated by the time taken performing Verify() operations, and so speedup does not appear to be affected.

%\begin{figure}%
%    \centering
%	\includegraphics[width=0.9\textwidth]{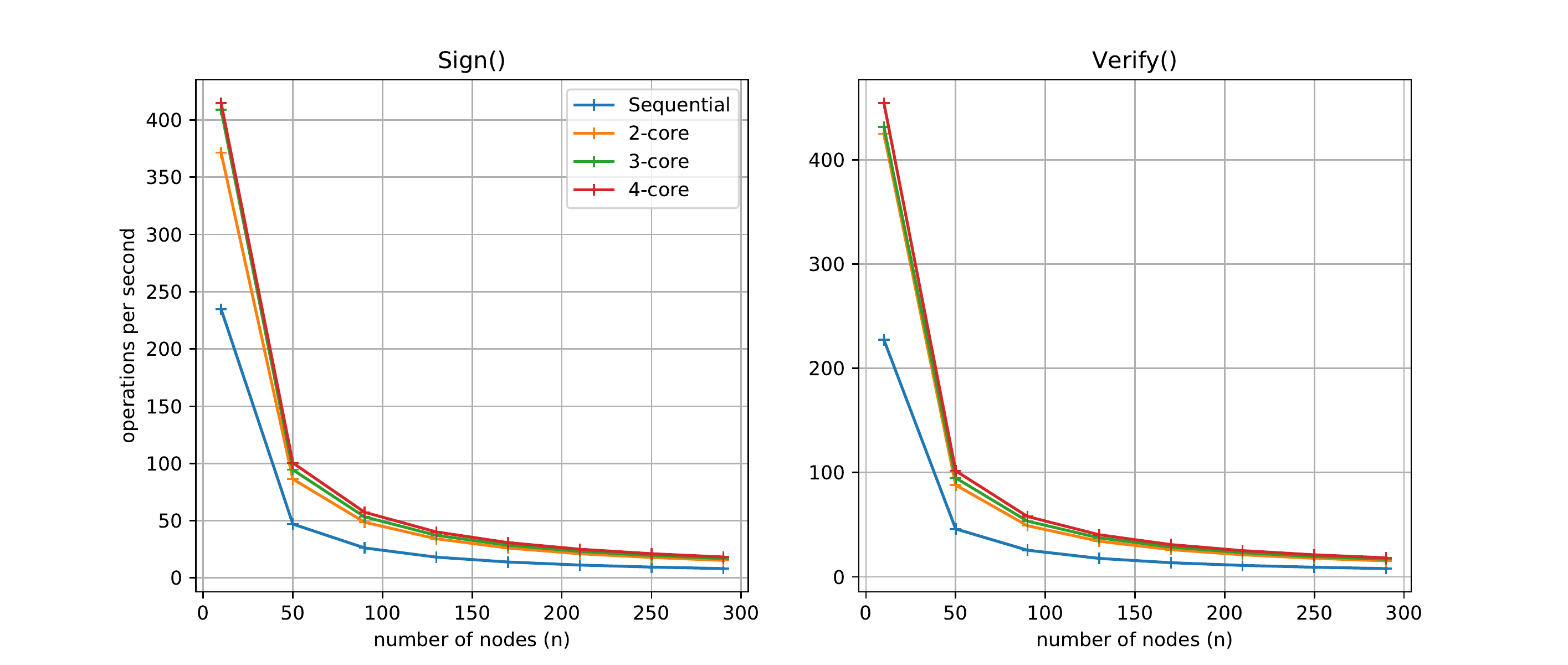}
%    \caption{Comparing sequential and concurrent TRS operation performance}%
%    \label{fig:vsconc}
%\end{figure}

%Figure \ref{fig:vsconc} compares the sequential implementation of TRS Sign() and Verify() operations to their respective concurrent implementations, varying the number of cores utilised from 2 to 4. Indeed, there is a noteworthy improvement in performance when introducing a second core to perform operations. The comparatively minor improvements in performance when using more than two cores can largely be attributed to the inherent overhead in spawning threads. As $n$ increases, we still see an increase in operations per second with the three and four core benchmark. When $n = 50$, concurrent verification is 1.92x faster and 2.12x faster than sequential verification with 2 and 4 cores respectively. When $n = 290$, concurrent verification is 1.93x and 2.30x faster than sequential verification with 2 and 4 cores respectively. Thus, we can expect additional efficiency gains if we were to perform operations with 4 cores with very large values of $n$.

%Thus, we have shown that our implementation of Fujisaki's signature scheme is competitive with the implementation of the LRS scheme in kyber, and admits reasonable performance in of itself.

\subsubsection{Threshold encryption} \label{sec:bench:tenc}
Our arbitrary ballot voting scheme, ABE, presented in Appendix~\ref{sec:evoting:abe}, can be instantiated with Shoup and Gennaro's threshold encryption scheme \cite{Shoup2002}. To this end, we present results corresponding to our implementation of their construction.

\begin{table}[htbp]
\begin{center}{\small 
\begin{tabular}{|l|l|}
\hline
\textbf{Operation}               & \textbf{Time to execute (ms)} \\ \hline
Encryption              & $0.407$                \\ \hline
Encryption verification & $0.358$               \\ \hline
Share decryption        & $0.247$                \\ \hline
Share verification      & $0.339$                \\ \hline
\end{tabular}
\caption{Threshold encryption operations}
\label{table:teops}}
\end{center}
\end{table}

Table \ref{table:teops} shows the performance of all operations in the threshold encryption scheme \cite{Shoup2002} as executed by one process, bar decryption itself. As can be seen, all operations can be executed in a reasonable amount of time (less than a millisecond). In our electronic voting protocol, each process performs a single encryption operation, but may perform $O(n)$ operations of the other forms over the protocol's execution. Even when $n$ is relatively large ($ > 100$), we can expect to see acceptable levels of performance.

Let $P$ be a group of $n$ processes. Then, the final operation, \textit{share combination}, combines a group of $k$ valid decryption shares, where $k$ is the threshold required to reconstruct the secret. In our electronic voting protocols, for interoperability with our consensus algorithms, we set $k = t + 1$, where $t$ is, at most, the largest value such that $t < \frac{n}{3}$.

\begin{figure}[htbp]
\centering
\includegraphics[width=0.5\textwidth]{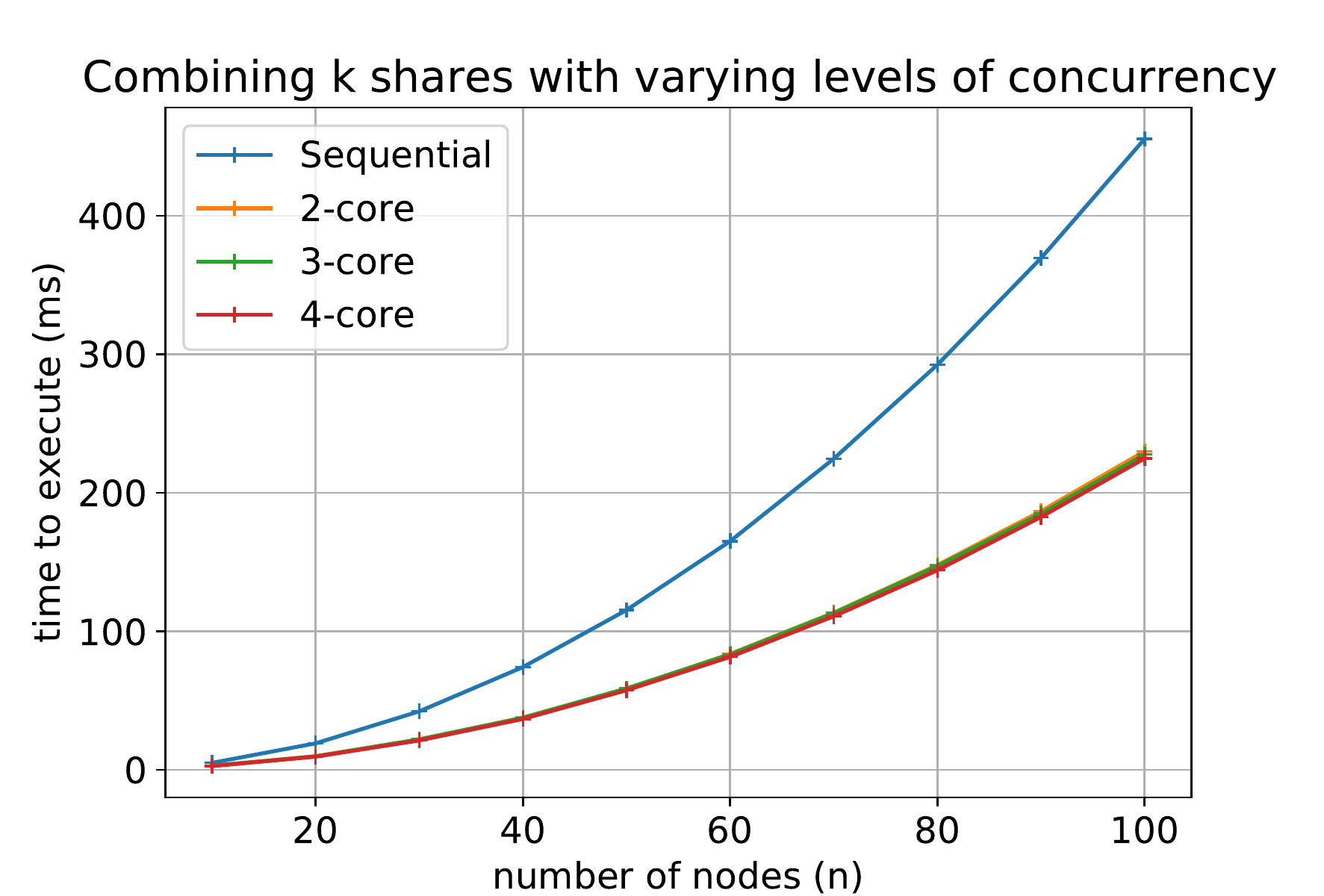}
\caption{Time taken for $k$ partial decryptions (shares) to be combined to decrypt a given message}
\label{fig:combinekshares}
\end{figure}

Figure \ref{fig:combinekshares} represents the time taken to decrypt a message, varying $k$, the number of shares needed to reconstruct the message in increments of 10. As some steps of the reconstruction process can be performed concurrently, e.g. in the derivation of Lagrange coefficients \cite{Shamir1979}, we provide an extension that takes advantage of a multi-core processor. To this end, Figure \ref{fig:combinekshares} graphs the time taken to perform share combination with both our sequential implementation, and the corresponding concurrent implementation utilising two to four cores.

Indeed, producing Lagrange coefficients requires a quadratic amount of work (with respect to $k$), and subsequently decrypting a message takes $O(k)$ effort, which dominates execution time. Both the quadratic and linear components of the share combination can be made concurrent. Consequently, we roughly double our performance with two to four cores running. It is worth noting that the effort required to spawn additional threads in the three and four core case does not translate to a very noticeable improvement in performance for our values of $k$.

%On sum, the performance admitted by our implementation is reasonable. For a referendum with $n = 300$ participants, our implementation with a mid-range laptop processor utilising two cores can decrypt a homomorphically summed tally in less than 250ms with maximal fault tolerance (notwithstanding computing the discrete logarithm \cite{Cramer1997}), which is practically acceptable.

\subsection{Elections with arbitrary ballots}
Arbitrary Ballot Election (ABE) essentially combines AVCP with a threshold encryption scheme.
To perform our benchmarks for the election scheme, we used pre-generated keying material for threshold encryption.
We use Shoup and Gennaro's threshold encryption scheme~\cite{Shoup2002}, which was benchmarked in the previous section.
Our experiment differs from experiments with AVCP in that ring signatures must also contain valid encryptions as per the threshold encryption scheme, and because all processes perform threshold decryption with respect to all decided ballots.

\begin{figure}[htbp]
    \centering
	\includegraphics[width=0.5\textwidth]{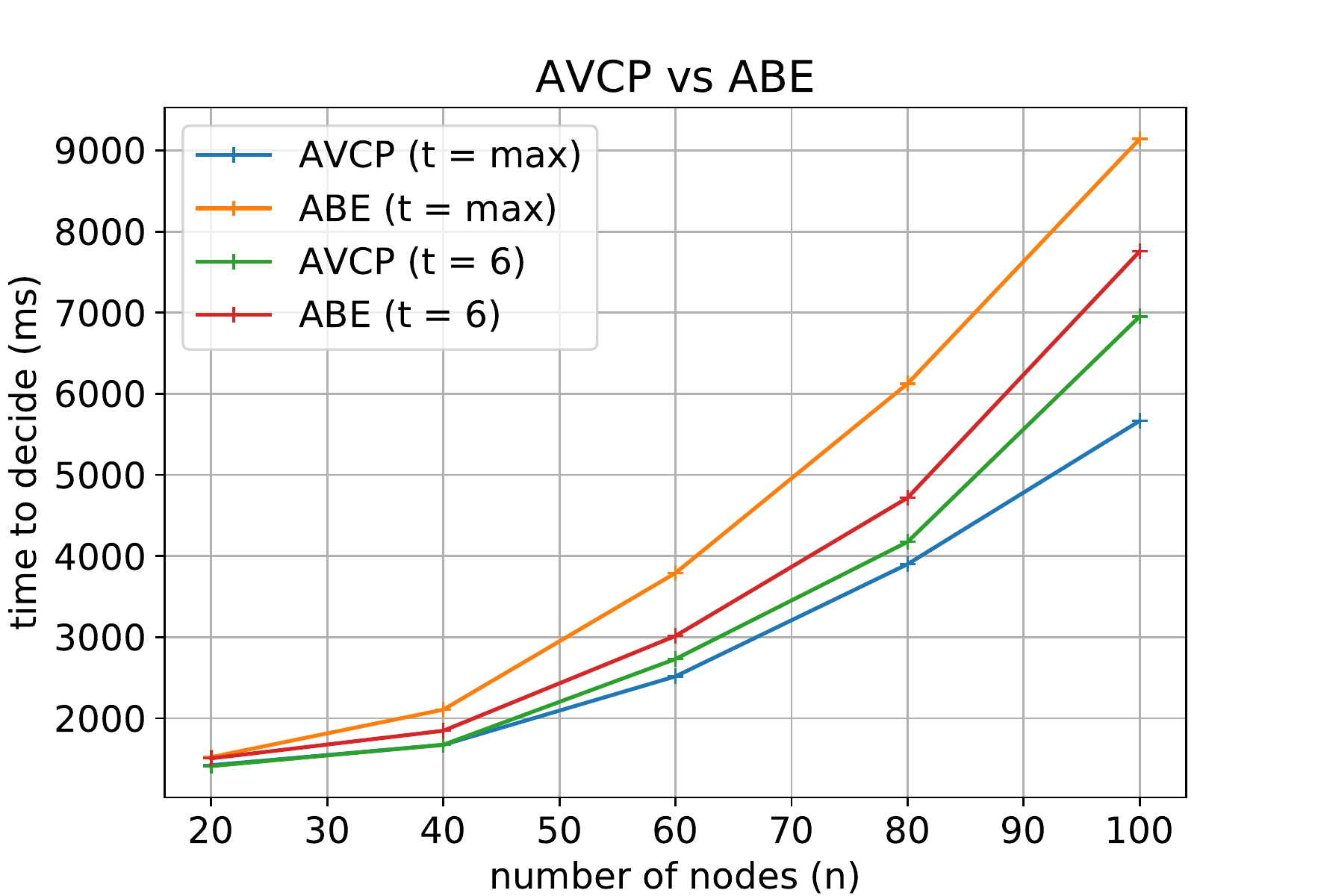}
	\caption{Comparing the performance of AVCP and ABE}
    \label{fig:avc_vs_election}
\end{figure}

Figure \ref{fig:avc_vs_election} compares the performance of AVCP with ABE as described above. As can be seen, there is some, but not a considerable amount, of overhead from introducing the election's necessary cryptographic machinery. The main factors here are an additional message step, additional message verifications, and, requiring the most additional effort, performing threshold decryption.

As can be seen (and was explored in Section \ref{sec:exp}), AVCP performance degrades when $t$ is non-optimal. Despite this, the election still takes longer to perform when $t$ is increased. Consider the case where $(n, t) = (100, 33)$. As shown in Figure \ref{fig:combinekshares}, combining $k = t + 1 = 34$ (valid) partial decryptions together takes roughly 30ms. Since each run of the experiment decides on at least $n - t = 67$ values, processes have to spend almost two seconds combining shares together. In the best case, each process has to verify that $(n - t)t = 2211$ shares are well-formed, which takes roughly 750ms. Thus, in addition to other cryptographic overhead, it is clear that increasing $t$ affects election performance.

Notwithstanding, the election protocol performs well for reasonable values of $n$. At $n = 100$, the election roughly takes between 7.5 and 9 seconds to execute from start to finish. Given that the latency provided by the anonymous channels used by processes is sufficiently low, we can expect to see comparable results in practice.

\fi

\iffalse
In our model, there are \textit{two} modes of adversarial behaviour that can de-anonymise a non-faulty process in our protocols:
\textit{(i)} The adversary observes message transmission time between processes~\cite{Back2001}.
For example, suppose that the time taken for a process $p$ to transmit a message to all other processes in $P$ takes much longer than message transmission time between all other processes.
Without additional assumptions, the adversary can often deduce $p$'s identity with respect to its proposal.
\textit{(ii)} The adversary observes messages that different processes send~\cite{Raymond2001}.
Consider an application where a process does not participate in an instance of consensus until the previous instance (if it exists) has terminated.
Suppose in an instance of consensus that $n - 1$ processes have anonymously broadcast their proposals, which have been delivered to some process, but a process $p$ is still participating in the previous instance of consensus.
The adversary can observe this, and can thus de-anonymise $p$ on receipt of its proposal.
%In a state machine, like a blockchain, a particular state might depend on previous states, e.g. it must include the hash of the previous block.
%Suppose that $n - 1$ processes have decided on the $i$th block in a blockchain.
%Suppose further, in deciding on the $(i + 1)$th block, that these processes have broadcast their anonymous proposals, which have been delivered to some process.
%Then, the remaining process $p$ cannot broadcast its proposal for the $(i + 1)$th block until it has decided upon the $i$th block.
%We first consider a restricted state machine, where instances of AVCP are independent in the following sense.
%IFor every instance of AVCP, the identifier $ID$ and all values proposed by non-faulty processes must not be derived from other instances of consensus.
%In this setting, we prove in Appendix~\ref{sec:avcproof} that AVCP achieves AVC-Anonymity under the assumptions made in Section~\ref{sec:model:model}.
To cope with these attack vectors, we posit \textit{two} conditions:
\fi
\end{document}